\pdfoutput=1  
\documentclass[11pt]{article}

\PassOptionsToPackage{numbers, sort&compress}{natbib}

\usepackage[utf8]{inputenc}
\usepackage[T1]{fontenc}
\usepackage{lmodern}
\usepackage[margin=1in]{geometry}
\usepackage{amsmath,amssymb,amsfonts,amsthm}
\usepackage{booktabs}
\usepackage{float}
\usepackage{placeins}
\usepackage{tikz}
\usetikzlibrary{decorations.pathreplacing}
\usepackage{longtable}
\usepackage{array}
\usepackage{multirow}
\usepackage{microtype}
\usepackage{graphicx}
\usepackage{xcolor}
\usepackage{enumitem}
\usepackage{url}
\newcommand{\certpath}[1]{\textcolor{blue}{\path{#1}}}
\usepackage{mathtools}
\usepackage{caption}
\usepackage{authblk}
\usepackage{natbib}
\usepackage{setspace}
\usepackage{tocloft}

\usepackage{hyperref}
\hypersetup{
  colorlinks=true,
  linkcolor=blue,
  citecolor=blue,
  urlcolor=blue,
  pdftitle={New Bounds on the Competitive Ratio of Longest Queue Drop},
  pdfauthor={Alex Davydow and Sergey Nikolenko},
  pdfkeywords={Longest Queue Drop, shared memory switches, online algorithms, competitive analysis}
}

\newtheorem{theorem}{Theorem}[section]
\newtheorem{proposition}[theorem]{Proposition}
\newtheorem{lemma}[theorem]{Lemma}
\newtheorem{corollary}[theorem]{Corollary}

\theoremstyle{definition}
\newtheorem{definition}[theorem]{Definition}

\newtheorem{remark}[theorem]{Remark}

\newcommand{\OPT}{\ensuremath{\mathrm{OPT}}}
\newcommand{\LQD}{\ensuremath{\mathrm{LQD}}}

\newcommand{\LateQD}{\ensuremath{\mathrm{LateQD}}}
\newcommand{\CR}{\ensuremath{\mathrm{CR}}}
\newcommand{\rhostar}{\ensuremath{\varrho^{*}}}

\newcommand{\OPTX}{\ensuremath{\mathrm{OPT}_{\mathrm{EXTRA}}}}
\newcommand{\LQDX}{\ensuremath{\mathrm{LQD}_{\mathrm{EXTRA}}}}

\newcommand{\front}{\mathrm{fl}}  
\newcommand{\eps}{\varepsilon}

\title{New Bounds on the Competitive Ratio of Longest Queue Drop: $1.46929591 \le \CR(\mathrm{LQD}) \le 1.683652$}

\author[1]{Alex Davydow}
\author[2,3]{Sergey Nikolenko}
\affil[1]{Independent Scholar, Panorama, Greece}
\affil[2]{St.\ Petersburg Department of the Steklov Institute of Mathematics, St.\ Petersburg, Russia, \texttt{sergey@logic.pdmi.ras.ru}}
\affil[3]{St.\ Petersburg State University}

\date{}

\begin{document}
\maketitle

\begin{abstract}
We study the competitive ratio of \emph{Longest Queue Drop} (LQD), the canonical buffer management policy for shared memory switches, for which the previously published bounds were $\CR(\LQD)\in[1.44546086,\,1.6918]$ \citep{BDGN19,AEMV24}. We improve both ends. For the lower bound, we provide an exact-integer certificate on a new instance family, the \emph{front-loaded family} (Section~\ref{sec:lower}), which gives
\[
\CR(\LQD)\;\ge\;
\frac{184{,}815{,}365{,}566{,}285}{125{,}784{,}985{,}866{,}185}
\;=\;1.46929591\ldots,
\]
the exact ratio on a specific finite instance evaluated against the exactly optimal offline policy. The instance is evaluated under one fixed tie rule, but the bound does not depend on the tie rule: an adaptive coupling transfers the certified value to every non-clairvoyant \emph{deterministic} tie rule, and to every randomized tie rule in the adaptive adversary sense (the instance may depend on the coins already flipped). 

For the upper bound, we prove that $\CR(\LQD)\le1.683652$ by replacing the per-packet endpoint relaxation in the endgame of \citet{AEMV24} with the continuum envelope relaxation of the same payment expression, which we solve exactly in closed form; unlike the lower bound, this upper bound holds for \emph{every} tie rule. In the process, we find a gap in the derivation of the published proof's aggregation step (Lemma~18 from \cite{AEMV24}).
We repair the aggregation with one amortized lemma and an exact finite-head analysis; the repair restores the published $1.6918$, which in turn restores the weaker conference guarantee $1.707$ of \citet{AEMV21}, and it also supports our further improvement. 
\end{abstract}

\section{Introduction}\label{sec:intro}

A \emph{shared memory switch} is a network device with $N$ output ports
and one pool of $B$ packet slots shared among them.  Each output port
serves its own queue of stored packets, transmitting one packet per time
step, but the queues have no private memory, so a burst of traffic
aimed at one port competes for the very same $B$ slots as every other
port.  When arrivals would push the total past $B$, packets must be
discarded (an \emph{overflow}), and a discarded packet is lost for good;
the switch may also evict packets it has already stored
(\emph{push-out}). Deciding \emph{which} packets to discard is the
switch's only freedom, and the entire problem: a \emph{buffer management
policy} is a rule for that choice, and a good policy maximizes the total
number of packets transmitted.

What makes the choice hard is that it must be made \emph{online}: the
switch sees only the arrivals so far, yet it is judged against the
\emph{offline optimum} $\OPT$, a clairvoyant algorithm that sees the
whole arrival sequence in advance and discards optimally. The
\emph{competitive ratio} of a policy is the smallest asymptotic factor
by which $\OPT$'s throughput can exceed the policy's, allowing the
standard $O(B)$ additive term independent of the input length (see
Definition~\ref{def:switch}); it is always $\ge1$, and $1$ would mean
asymptotically matching clairvoyance.

\emph{Longest Queue Drop} (LQD) is a natural heuristic rule: on an
overflow, take the space back from whichever queue currently holds the
most, that is, discard from a currently longest queue. Proposed in 1991 \citep{WCH91}, it has
been the standard policy for this model ever since \citep{HKM01,AKM08},
and it remains the only policy known to be competitive for this problem
\citep{AEMV24}. Yet its exact competitive ratio $\CR(\LQD)$ is
unknown, and pinning it down is one of the oldest open problems in
buffer management \citep{Gol10,NK16}.

\begin{figure}[!t]
\centering
\resizebox{\linewidth}{!}{%
\begin{tikzpicture}[x=1cm,y=1cm,lbl/.style={font=\scriptsize},ttl/.style={font=\footnotesize\bfseries},oursmk/.style={line width=1.4pt}]
\begin{scope}
\node[ttl,anchor=west] at (0,1.35) {Lower bounds};
\draw[->] (0,0) -- (6.05,0);
\draw (1.039,0.06) -- (1.039,-0.06) node[below,font=\tiny]{1.42};
\draw (2.770,0.06) -- (2.770,-0.06) node[below,font=\tiny]{1.44};
\draw (4.501,0.06) -- (4.501,-0.06) node[below,font=\tiny]{1.46};
\filldraw[fill=white,thick] (0.538,0) circle(2.6pt);
\node[lbl,anchor=south,align=center] at (0.538,0.20){$\sqrt2$\\[-2pt]\tiny HKM01/AKM08};
\filldraw[fill=white,thick] (3.243,0) circle(2.6pt);
\node[lbl,anchor=south,align=center] at (3.243,0.20){$1.44546$\\[-2pt]\tiny BDGN19};
\draw[oursmk,->] (4.507,0) -- (5.247,0);
\fill (5.307,0) circle(2.9pt);
\node[lbl,anchor=south,align=center,font=\scriptsize\bfseries] at (5.357,0.22){$1.46930$\\[-2pt]\tiny\bfseries this work};
\end{scope}
\begin{scope}[xshift=7.6cm]
\node[ttl,anchor=east] at (6.90,1.35) {Upper bounds};
\draw[<-] (-0.25,0) -- (5.55,0);
\draw (5.85,0) -- (6.90,0);
\draw (5.58,-0.10) -- (5.72,0.10);
\draw (5.68,-0.10) -- (5.82,0.10);
\draw (0.627,0.06) -- (0.627,-0.06) node[below,font=\tiny]{1.68};
\draw (2.195,0.06) -- (2.195,-0.06) node[below,font=\tiny]{1.69};
\draw (3.762,0.06) -- (3.762,-0.06) node[below,font=\tiny]{1.70};
\draw (5.330,0.06) -- (5.330,-0.06) node[below,font=\tiny]{1.71};
\draw[oursmk,->] (1.999,0) -- (1.259,0);
\fill (1.199,0) circle(2.9pt);
\node[lbl,anchor=south,align=center,font=\scriptsize\bfseries] at (1.149,0.22){$1.68365$\\[-2pt]\tiny\bfseries this work};
\filldraw[fill=white,thick] (2.477,0) circle(2.6pt);
\node[lbl,anchor=south,align=center] at (2.477,0.20){$1.6918$\\[-2pt]\tiny AEMV24};
\filldraw[fill=white,thick] (4.859,0) circle(2.6pt);
\node[lbl,anchor=south,align=center] at (4.859,0.20){$1.707$\\[-2pt]\tiny AEMV21};
\filldraw[fill=white,thick] (6.40,0) circle(2.6pt);
\node[lbl,anchor=south,align=center] at (6.40,0.20){$2$\\[-2pt]\tiny HKM01/AKM08};
\end{scope}
\end{tikzpicture}%
}
\caption{History of the state of the art for the competitive ratio $\CR(\LQD)$.}
\label{fig:sota}
\end{figure}

Progress in this problem has come by squeezing
it between \emph{lower bounds} (inputs on which LQD provably falls short
of the clairvoyant optimum) and \emph{upper bounds} (guarantees that it
does not). Figure~\ref{fig:sota} shows a brief history of the state of the art:
the first formal analysis \citep{AKM08,HKM01} placed
$\CR(\LQD)$ in $[\sqrt2,\,2]$, and for almost two decades further
progress was confined to switches with two or three ports.
In earlier work with Bochkov and Gaevoy, we raised the lower
bound to $1.44546086$ with an explicit
family of hard inputs \citep{BDGN19}; \citet{AEMV21} broke the long-standing barrier
of $2$ with a per-queue payment scheme, obtaining $1.707$, sharpened to
$1.6918$ in the journal version \citep{AEMV24}.  The state of the art
before this work was thus $\CR(\LQD)\in[1.44546086,\ 1.6918]$;
Section~\ref{sec:related} reviews this line of work, and the wider
buffer management landscape, in detail.

\textbf{Contributions.} We make three contributions: we tighten both
ends of the interval, and we repair the proof foundation that one of
these improvements---and the originally published bound---depends on.
\begin{enumerate}
\item \textbf{Lower bound $1.46929591$ (exact-integer certificate,
Section~\ref{sec:lower}).}  We introduce a family of input instances that we call the
\emph{front-loaded family} (Definition~\ref{def:frontfamily}).
In our instances, each queue receives its arrivals only during one time
window; we say it is \emph{born} when the window opens and \emph{dies}
when it closes (Definition~\ref{def:vocab}).  Our instances start the
queues in synchronized batches (\emph{cohorts}); within each cohort a
fixed fraction of the queues die almost immediately after birth, and the
rest die along a prescribed schedule, which we call the \emph{death profile}.  We
found the particular profile by numerically optimizing a fluid
(continuum) relaxation of the instance design problem, but its
validity as a lower bound does not depend on that search: the
certificate is a single finite instance on which both LQD and the exact
offline optimum are evaluated in exact integer arithmetic.
The evaluation fixes a specific tie-breaking rule, but the bound does not: an adaptive
coupling (Theorem~\ref{thm:tieinv}) transfers the certified value to
every \emph{non-clairvoyant} tie rule (one whose choice at each slot
may depend on the run so far, but not on arrivals that have not yet
happened).
\item \textbf{Erratum and repair in the published upper-bound proof
(Sections~\ref{sec:erratum} and~\ref{sec:upper}).}  We found that one step of the proof of
\citet{AEMV24} (their Lemma~18, the aggregation step) is invalid as
published: a symbolic countermodel exhibits a shortfall arbitrarily
close to $1/2$ per queue, which does not shrink as their parameter
$\alpha$ grows (an \emph{$\alpha$-free} deficit).  Without a repair,
the published bounds $1.6918$ (journal) and $1.707$ (conference) are
unproven as written.  We provide an amortized repair of the
aggregation in place, valid for $\alpha\ge1/2$.
An exact finite-head envelope bound and two
elementary short-window cases absorb the resulting slack and restore
their best upper bound of $1.6918$.
\item \textbf{Upper bound $1.683652$ (Section~\ref{sec:upper}).}  The proof
of \citet{AEMV24} ends by bounding their payment scheme through a
per-packet relaxation that no single running-max envelope
can attain at once (see Section~\ref{sec:aemvdefs}).  Replacing that endgame with
the continuum envelope relaxation of the same payment expression,
solved exactly in
closed form, tightens the constant from $1.6918$ to $1.683652$. Our
improvement relies upon the repaired aggregation of contribution~2.
\end{enumerate}
In short, we certify $1.46929591\le\CR(\LQD)\le1.683652$, and
we repair the published proof that both the old and the new upper
bounds stand on.

\paragraph{How the proofs go.} Since all three arguments are somewhat
long, we sketch them here informally; each of the corresponding
sections also opens with a more detailed plan of its proofs.

The \emph{lower bound} rests on one intuition: LQD wastes buffer space
on queues that are about to die.  It cannot help it: an online policy
does not know which queues these are, and LQD in particular evicts from
a longest queue on every overflow, which on our instances keeps all
live queues at a single common height (the \emph{waterline}).  A
clairvoyant policy behaves very differently: it keeps each live queue
at height one, just enough to never miss a transmission, and invests
the entire remaining buffer into \emph{hoards}, packets stockpiled in a
queue right before it dies and transmitted from it, one per slot, long
after.  The instances that maximize this advantage turn out to be
\emph{front-loaded}: they kill a large fraction of every cohort almost
immediately, letting the offline player hoard into those queues while
the buffer is still mostly free, and the hoards then pay out over the
whole cycle.  We found the precise death schedule by numerically
optimizing a continuum (fluid) relaxation of the instance design
problem, but no property of that search enters the proof: the certified
bound is the exact ratio $\OPT(I)/\LQD(I)$ of a single finite instance,
with both sides evaluated in exact integer arithmetic.  What does
require a proof is that the value does not depend on how LQD breaks
ties among equally long queues: the only freedom a tie rule ever has is
which of the tied queues keeps a marginal packet, and we show by an
adaptive coupling that an adversary can always relabel future arrivals
so that this freedom never helps.

The \emph{erratum} concerns the payment scheme of \citet{AEMV24}, which
funds each queue's share of the $\OPT-\LQD$ gap from LQD's own
transmissions and aggregates the per-phase payment bounds along a
running-maximum envelope (their Lemma~18).  We show that this
aggregation spends one payment term twice.  When a queue's reference
level drops between phases, the published pairing covers the drop
penalty with the first payment term of the next phase; when that same
step also carries a correction charge, the correction must be covered
by the very same term, and the accounting falls short.  The shortfall
per queue can approach $\tfrac12$, and it does not shrink for any
choice of the scheme's parameters; we exhibit both an abstract family
approaching $\tfrac12$ and a fully realized nine-queue run losing
exactly $\tfrac14$.  The repair is an amortization: a potential
function on phase transitions shows that the total double-spend over
any single window telescopes down to at most $\tfrac12$, and the final
optimization absorbs this $\tfrac12$ from slack that every window's
payment provably carries.

For the \emph{upper bound}, the repaired aggregation leaves us with a
sum over a non-decreasing envelope sequence, and the published proof
bounds that sum packet by packet: each packet is priced at its own
worst-case envelope level.  These per-packet prices are not
simultaneously achievable: no single envelope is worst for every packet
at once, so the extracted constant is not tight.  We instead minimize
the whole sum over the envelope directly.  Passing to a continuum
relaxation turns the sum into an integral functional of the envelope
path; its minimizer is an envelope that jumps once at the start and
stays constant, and the resulting minimum has a closed form, proved
optimal by a direct supporting-line (convexity) argument.  The rest of the section is bookkeeping with a
purpose: a finite-head refinement of the same optimization shows that
every queue's payment clears the required rate with a reserve strictly
larger than the $\tfrac12$ lost in the repair, except in two smallest
cases that we check directly from the unaggregated inequalities.

The rest of the paper is organized as follows.
Section~\ref{sec:related} reviews related work.  Section~\ref{sec:model} fixes
the problem setting: the model, notation, and the prior results we
build on (drawn from \cite{AKM08,BDGN19,HKM01}).  The main results
occupy the next three sections, in the order of the contributions
above: Section~\ref{sec:lower} proves the lower bound; Section~\ref{sec:erratum}
presents the erratum in the published upper-bound proof, and its
repair; Section~\ref{sec:upper} builds the improved upper bound on the
repaired step.  In the latter two, we splice into the payment scheme
proof of \citet{AEMV24}, and every statement we import from that paper
is restated with its original number (symbols relabeled to our
notation, proofs theirs), so that each import can be checked against
its source in place.  Section~\ref{sec:conclusion} concludes the paper, and
Appendix~\ref{sec:appendix-certs} maps every computational claim to
its machine-checked certificate. We have made the code for all 
certificates referred to in this paper openly available 
at \url{https://doi.org/10.5281/zenodo.21260154}.

\section{Related work}\label{sec:related}

We first trace the LQD line of results that this paper extends, then
place shared memory within the wider buffer management landscape, and
close with the methodology of certified bounds that we build on.

\paragraph{The LQD line.}
LQD was proposed by \citet{WCH91} for shared memory packet switches,
and longest-queue preemption has remained attractive in practice since
it needs no per-flow configuration and shares bandwidth fairly under
overload \citep{CL07book,Suter98}.  Its first formal analysis is due
to \citet{HKM01} (journal version in \cite{AKM08}): LQD is
$2$-competitive and at least $\sqrt2$-competitive, and \emph{no}
deterministic online policy is better than $4/3$-competitive, which is a
\emph{universal} lower bound. The journal version also analyzes
non-preemptive policies, which we do not study here but which
sharpen the role of push-out.  Push-out
is what makes constant competitiveness possible: in the
non-preemptive setting, where admitted packets may not be evicted,
\citet{KM04} give a Harmonic policy that is $O(\log N)$-competitive
and prove a nearly matching lower bound for every non-preemptive
policy.

For a bounded number of ports, more is known.  \citet{KMO07,KMO08}
refined the LQD upper bound to
$2-\min_{k\le N}\bigl(\lfloor B/k\rfloor+k-1\bigr)/B$; for $N>\sqrt B$
this is only $2-O(1/\sqrt B)$, so it does not give $2-\eps$ for a
constant $\eps>0$. The same papers obtained the exact ratio
$\frac{4B-4}{3B-2}$ for $N=2$ ports (as \citet{AEMV24} note, the
published argument requires an even buffer size). \citet{Mat15}
proved that LQD is $1.5$-competitive for $N=3$ ports.

An interesting cautionary tale happened when a 2012 preprint
claiming $1.5$-competitiveness for arbitrary $N$ \citep{Mat12} was
later withdrawn, the recorded reason being an error in the definition
of the optimal offline algorithm. Offline optimality and tie-breaking are
indeed subtle in this model, and that is why our lower bound is stated
as an exact-integer certificate against a provably optimal offline policy
(Section~\ref{sec:lower}), and why the erratum of Section~\ref{sec:erratum}
is also verified by exact symbolic arithmetic.

In earlier work with Bochkov and Gaevoy \citep{BDGN19}, two of us
raised the LQD lower bound from $\sqrt2\approx1.41421$
to $1.44546086$, with an explicit uniform-death instance family,
$\Phi_k$ (see Definition~\ref{def:phi} below), found by direct
simulation and, independently, by a linear programming search over
structured instances.  The same paper raised the universal
deterministic lower bound from $4/3$ to $\sqrt2$ (a weaker number
than the LQD-specific bound, but holding for \emph{every}
deterministic online policy) via a staggered-birth construction
whose adversary always kills the online player's shortest live queue.
It also exhibited an explicit, exactly optimal offline policy
called \LateQD{}, which we reuse as Theorem~\ref{thm:lateqd} and evaluate
inside the certificates.

On the upper side, the first general progress since 2001 is due to \citet{AEMV21}: a per-queue
\emph{payment scheme} (restated in Section~\ref{sec:aemvdefs}) showing
$\CR(\LQD)\le1.707$, the first $2-\eps$ bound with constant
$\eps>0$, later improved to $\CR(\LQD)\le1.6918$ in the journal version
\citep{AEMV24}.  LQD remains the only policy known to be
constant-competitive for shared memory; no randomized policy has been
analyzed \citep{AEMV24}; and closing the LQD gap is a long-standing
open problem highlighted in several surveys \citep{Gol10,NK16}.  

Relative to this line, our lower bound (Section~\ref{sec:lower})
improves on the $1.44546086$ above using a new instance family
evaluated in exact integer arithmetic; our erratum and repair
(Sections~\ref{sec:erratum} and~\ref{sec:upper}) address a gap we find
in the AEMV24 aggregation step; and our upper bound
(Section~\ref{sec:upper}) solves the continuum relaxation of that
scheme's
endgame exactly to improve on $1.6918$.

\paragraph{The wider buffer management landscape.}
Shared memory is one member of a family of buffer management models
studied under competitive analysis; \citet{Gol10} surveys the area,
and \citet{NK16} surveys later developments.  In the single-queue QoS
model, packets carry values and the buffer is one FIFO queue; tight or
near-tight bounds are known in several regimes \citep{EW09,Zhu04},
and the neighboring bounded-delay (deadline) model was settled at the
golden ratio for deterministic algorithms by \citet{VCJS22}.  In
multi-queue switches with \emph{separate} per-port buffers, the
asymptotically optimal ratio is $e/(e-1)\approx1.582$: a matching
lower bound \citep{AS05} and algorithm \citep{AL06} are known; see also
\citep{AR05}.  Buffered crossbar and combined input--output queued
(CIOQ) architectures place buffers at the crosspoints or at both
inputs and outputs \citep{AEW18,KR06,KKS12}.  Shared-memory buffers
for packets with heterogeneous processing requirements were studied in
\citep{EKNS14}, where the natural LQD counterparts lose super-constant
factors.  None of these settings subsumes the model studied here: the
difficulty of LQD is the coupling of all $N$ queues through
one memory pool, with unit-value packets leaving no room for
value-based tricks.

\paragraph{Certified bounds.}
Methodologically we continue the computational approach that our
earlier work \citep{BDGN19} pioneered for this problem by searching
for hard instances with a linear program.
We push the search one step further, into a fluid (continuum)
relaxation of the instance design problem (Section~\ref{sec:lower}), but
note that the search only \emph{proposes} an instance.  The claimed bound is the
exact rational value of one finite instance, evaluated in
exact integer arithmetic against the exactly optimal offline
policy; the analytic constants of the upper bound are enclosed by
outward interval arithmetic (arithmetic that tracks a
guaranteed-containing range instead of a single floating-point number,
so a computed bound is proven rather than estimated); and for every
computational claim in the paper we provide a runnable certificate
(Appendix~\ref{sec:appendix-certs}).

\section{Problem setting: model, notation, and imported results}\label{sec:model}

In this section, we introduce the switch model and the \LQD{}  policy
(Definitions~\ref{def:switch}--\ref{def:lqd}), the instance vocabulary
used by the lower-bound construction of Section~\ref{sec:lower}
(Definition~\ref{def:vocab}), and the exactly optimal offline policy
against which every bound is measured; Table~\ref{tbl:notation}
collects the notation for reference throughout.

\begin{table}[!t]\small\centering
\caption{Notation used throughout the paper.}\label{tbl:notation}
\begin{tabular}{@{}l>{\raggedright\arraybackslash}p{0.74\linewidth}@{}}
\toprule
Symbol & Meaning (first used) \\
\midrule
$N$, $B$ & number of queues; buffer size (Def.~\ref{def:switch}) \\
\LQD, \OPT, \LateQD{} & the online policy; the optimal offline value; the exactly-optimal offline policy achieving it (Section~\ref{sec:model}) \\
$\gamma_i,\delta_i,\ell_i$ & a queue's birth slot, death slot, lifespan (Def.~\ref{def:vocab}) \\
$\omega,\kappa$; $z_L,z_C$ & the online/offline coupling bijections and slot-transmission
indicators of the tie-rule analysis (Thm.~\ref{thm:tieinv}, Lemma~\ref{lem:exchange}) \\
$\Phi_k$ & the uniform-death instance family of \citet{BDGN19} (Def.~\ref{def:phi}) \\
$s,\lambda,\theta$ & rescaled time $s=t/k$, lifespan
$\lambda=\ell/k$, and buffer size $\theta=B/k^2$ within a cohort
(Section~\ref{sec:lower} only) \\
$F^{\front}$, $\Phi^{\front}_k$ & our front-loaded death profile and instance family (Def.~\ref{def:frontfamily}) \\
$\alpha,\beta$ & AEMV's payment scheme parameters (Section~\ref{sec:aemvdefs}) \\
$\Phi_q$ & AEMV's per-queue profit counter (relabeled $P_q$ here; Section~\ref{sec:aemvdefs}) --- unrelated to the instance families $\Phi_k,\Phi^{\front}_k$ \\
$\hat e_q,\hat e_{q,i},\hat e_{\bullet,i}$ & queue $q$'s outstanding OPT-extra credit --- overall, per phase, and ($\bullet$) summed over queues at phase $i$ (Section~\ref{sec:aemvdefs}) \\
$\OPTX,\LQDX$ & the OPT-extra and LQD-extra packet counts underlying $\hat e_q$ (Section~\ref{sec:aemvdefs}) \\
$x=\hat e/b_0$ & the rescaled credit driving the envelope functions (Section~\ref{sec:upper}) \\
$\tau_i$, phase $i$ & AEMV's overflow-indexed time boundaries (Section~\ref{sec:aemvdefs}) \\
$\sigma_{q,i}$ & AEMV's average live-queue level (Section~\ref{sec:aemvdefs}) \\
$b_j$ & the running-max envelope, re-indexed by step (Section~\ref{sec:aemvdefs}) \\
$\varrho,\hat\varrho,\rhostar$ & a generic payment rate; ours; AEMV's (Section~\ref{sec:aemvdefs}--Section~\ref{sec:upper}) \\
$L$ & length of a queue's constant-debt block in its pending window (Lemma~\ref{lem:windowstructure}) \\
$E^{\inf}_{b,e}$, $K_b$, $J_b^\#$, $J_{\rm cont}$ & the discrete envelope infimum; its finite-head continuum lower bound; that bound including the base S-payment; the continuum ($b\to\infty$) limit (Thm.~\ref{thm:improved}) \\
$\Xi_b(x)$ & the finite-head reserve absorbing the repair's $1/2$ loss (Lemma~\ref{lem:reserve}) \\
\bottomrule
\end{tabular}
\end{table}

\begin{definition}[Shared memory switch, policies, and the competitive ratio]\label{def:switch}
A \emph{shared memory switch} consists of $N$ output queues sharing a
single buffer of integer size $B$ (at most $B$ packets stored in total,
across all queues).  Time is slotted; in each slot $t=1,2,\dots$ two
phases occur:
\begin{enumerate}[label=(\roman*)]
	\item \emph{arrival phase}: a batch $r_t\in\mathbb{Z}_{\ge0}^N$ of new packets arrives (entry $q$ is the
	number arriving for queue $q$); the switch may admit or evict packets
	arbitrarily, including packets already stored (\emph{push-out}), so long
	as the buffer holds at most $B$ packets when the phase ends; an
	\emph{overflow} is an arrival phase after which admitting all
	arrivals would leave more than $B$ packets stored, i.e., a phase where some eviction is forced;
	\item \emph{transmission phase}: every nonempty queue transmits one packet.
\end{enumerate}
All packets have unit value, and the goal is to transmit as
many as possible.

A \emph{policy} (or algorithm) $A$ is a rule for the switch's
admission/eviction decisions.  It is \emph{online} if its decisions in
slot $t$ depend only on the arrivals $r_1,\dots,r_t$ seen so far, and
\emph{offline} (equivalently, clairvoyant) if they may depend on the
entire arrival sequence.  An \emph{instance} $I$ is a finite arrival
sequence $r_1,r_2,\dots$; the run continues until the buffer empties
(the queues \emph{drain}), and $A(I)$ denotes the total number of
packets $A$ transmits on $I$.  We write $\OPT(I)$ for the maximum of
$A(I)$ over all offline policies; this is the benchmark against which
the competitive ratio is measured, and it is achieved by an
explicit policy in Theorem~\ref{thm:lateqd}.  We use the standard
additive/asymptotic definition: the \emph{competitive ratio} of $A$ is
\[
\CR(A)\;=\;\inf\bigl\{c\ge1:\ \exists K\ge0\ \ \forall B\ \forall I,\quad
\OPT(I)\le c\,A(I)+K B\bigr\}.
\]
Here $K$ is independent of the instance length, the number of queues,
and $B$.  For a fixed finite instance we separately call
$\OPT(I)/A(I)$ its \emph{exact ratio}; repeating drained copies turns
such a witness into an asymptotic lower bound (Remark~\ref{rem:cradditive}).
\end{definition}

This model, including push-out, is exactly the model of
\cite{AKM08,HKM01}; the same works also
treat non-preemptive policies, which are not used here.  Our vocabulary for the
instances of Section~\ref{sec:lower}, and the offline machinery of
Theorem~\ref{thm:lateqd}, follows \citet{BDGN19}.

\begin{definition}[Longest Queue Drop, LQD]\label{def:lqd}
\emph{LQD} is the online policy that resolves each overflow
(Definition~\ref{def:switch}) by repeatedly evicting one packet from a
currently longest queue until at most $B$ remain.  When
several queues are tied for longest, the choice is settled by a fixed
deterministic \emph{tie rule} $T$; each choice of $T$ completes the
description above to a deterministic policy $\LQD_T$, and the
competitive ratio is defined as the worst case over tie rules,
\[
\CR(\LQD)\;=\;\sup_T\,\CR(\LQD_T).
\]
For a \emph{randomized} tie rule $T$ (one that may flip coins to break
ties), $\CR(\LQD_T)$ is read in the \emph{adaptive adversary} sense: the
instance may depend on the coins $T$ has already flipped, and the ratio
is required on every realization of $T$'s choices. This is the definition
in which Theorem~\ref{thm:tieinv} states its randomized conclusion;
\hyperlink{rem:tiescope-target}{Remark~\ref*{rem:tiescope}} contrasts it with the oblivious convention.
\end{definition}

\begin{figure}[!t]
\centering
\begin{tikzpicture}[x=1cm,y=1.5cm,
  cellS/.style={fill=black!25,draw=black!60,line width=0.5pt},   
  cellA/.style={fill=white,draw=blue!60!black,line width=0.9pt}, 
  cellE/.style={draw=black!50,dashed,line width=0.5pt},          
  cellT/.style={fill=black!45,draw=black!60,line width=0.5pt},   
  ttl/.style={font=\footnotesize},
  lbl/.style={font=\scriptsize}]


\begin{scope}
\node[ttl] at (1.05,1.95) {(a) arrival phase};
\foreach \q/\h in {0/4, 1/3, 2/2}
  \foreach \l in {1,...,\h}
    \draw[cellS] ({\q*0.55}, {(\l-1)*0.30}) rectangle ({\q*0.55+0.44}, {\l*0.30});
\draw[cellA] (0.00,1.20) rectangle (0.44,1.50);
\foreach \l in {3,4}
  \draw[cellA] (1.10,{(\l-1)*0.30}) rectangle (1.54,{\l*0.30});
\foreach \l in {1,2,3}
  \draw[cellA] (1.65,{(\l-1)*0.30}) rectangle (2.09,{\l*0.30});
\draw[line width=0.8pt] (-0.12,0) -- (2.21,0);
\foreach \s in {1,...,10}
  \fill[black!30] ({(\s-1)*0.14},-1.02) rectangle ({\s*0.14},-0.84);
\foreach \s in {11,...,15}
  \fill[red!25] ({(\s-1)*0.14},-1.02) rectangle ({\s*0.14},-0.84);
\draw[black!60,line width=0.5pt] (0,-1.02) rectangle (2.10,-0.84);
\draw[line width=1.1pt] (1.40,-1.10) -- (1.40,-0.76) node[above,lbl] {$B$};
\node[lbl] at (1.05,-1.80) {$\Sigma=15>B$: overflow};
\end{scope}

\begin{scope}[xshift=.3\linewidth]
\node[ttl] at (1.05,1.95) {(b) push-out (LQD)};
\draw[black!45,dashed] (-0.12,0.60) -- (2.30,0.60);
\draw[black!45,dashed] (-0.12,0.90) -- (2.30,0.90);
\node[lbl,anchor=west] at (2.32,0.90) {$w{+}1$};
\node[lbl,anchor=west] at (2.32,0.60) {$w$};
\foreach \q/\h in {0/3, 1/2, 2/3, 3/2}
  \foreach \l in {1,...,\h}
    \draw[cellS] ({\q*0.55}, {(\l-1)*0.30}) rectangle ({\q*0.55+0.44}, {\l*0.30});
\foreach \q/\l in {0/4, 0/5, 1/3, 2/4, 3/3}{
  \draw[cellE] ({\q*0.55}, {(\l-1)*0.30}) rectangle ({\q*0.55+0.44}, {\l*0.30});
  \draw[red!70!black,line width=0.8pt]
    ({\q*0.55+0.08},{(\l-1)*0.30+0.06}) -- ({\q*0.55+0.36},{\l*0.30-0.06})
    ({\q*0.55+0.08},{\l*0.30-0.06})    -- ({\q*0.55+0.36},{(\l-1)*0.30+0.06});
}
\draw[line width=0.8pt] (-0.12,0) -- (2.21,0);
\foreach \s in {1,...,10}
  \fill[black!30] ({(\s-1)*0.14},-1.02) rectangle ({\s*0.14},-0.84);
\draw[black!60,line width=0.5pt] (0,-1.02) rectangle (2.10,-0.84);
\draw[line width=1.1pt] (1.40,-1.10) -- (1.40,-0.76);
\node[lbl] at (1.05,-1.80) {evict until $\Sigma=B$};
\end{scope}

\begin{scope}[xshift=.6\linewidth]
\node[ttl] at (1.05,1.95) {(c) transmission};
\foreach \q/\h in {0/3, 1/2, 2/3, 3/2}{
  \draw[cellT] ({\q*0.55},0) rectangle ({\q*0.55+0.44},0.30);
  \foreach \l in {2,...,\h}
    \draw[cellS] ({\q*0.55}, {(\l-1)*0.30}) rectangle ({\q*0.55+0.44}, {\l*0.30});
  \draw[-stealth,line width=0.8pt] ({\q*0.55+0.22},-0.06) -- ({\q*0.55+0.22},-0.52);
}
\draw[line width=0.8pt] (-0.12,0) -- (2.21,0);
\foreach \s in {1,...,6}
  \fill[black!30] ({(\s-1)*0.14},-1.02) rectangle ({\s*0.14},-0.84);
\draw[black!60,line width=0.5pt] (0,-1.02) rectangle (2.10,-0.84);
\draw[line width=1.1pt] (1.40,-1.10) -- (1.40,-0.76);
\node[lbl] at (1.05,-1.80) {each nonempty sends $1$};
\end{scope}

\begin{scope}[yshift=-2.05cm]
\draw[cellS] (0.00,0) rectangle (0.30,0.20); \node[lbl,anchor=west] at (0.36,0.10) {stored};
\draw[cellA] (1.55,0) rectangle (1.85,0.20); \node[lbl,anchor=west] at (1.91,0.10) {arriving};
\draw[cellE] (4.75,0) rectangle (5.05,0.20);
\draw[red!70!black,line width=0.8pt] (4.80,0.04) -- (5.0,0.16) (4.80,0.16) -- (5.0,0.04);
\node[lbl,anchor=west] at (5.11,0.10) {evicted (push-out)};
\draw[cellT] (9.80,0) rectangle (10.10,0.20);
\draw[-stealth,line width=0.7pt] (10.25,0.18) -- (10.25,0.00);
\node[lbl,anchor=west] at (10.39,0.10) {transmitted};
\end{scope}
\end{tikzpicture}
\caption{One slot of the shared memory switch (Definition~\ref{def:switch}) under LQD (Definition~\ref{def:lqd}), with
$N=4$ queues and buffer $B=10$; the ruler below each panel is the shared memory.}
\label{fig:modelstep}
\end{figure}

Figure~\ref{fig:modelstep} illustrates the LQD policy and overflows in the shared memory switch.
\FloatBarrier

Both kinds of bounds in this paper are
statements about $\CR(\LQD)$: following \citet{AEMV24}, the upper
bounds hold for \emph{every} tie rule, so they bound the supremum;
a lower bound exhibits a bad finite instance for a single rule and
repeats drained copies to defeat every additive term.  The rule we fix for the certificate of
Section~\ref{sec:lower}, $T_0$, is specified in Remark~\ref{rem:tierule};
Theorem~\ref{thm:tieinv} shows the certified value does not
depend on that choice, and the lower bound holds for every
non-clairvoyant tie rule.

The lower bound construction of Section~\ref{sec:lower} uses instances of a
special shape, for which we now fix some vocabulary.  All of the following
definitions name features of an \emph{instance} (the input), not of the policy.

\begin{definition}[Interval instances and the queue life cycle]\label{def:vocab}
An instance is \emph{interval-structured} if each queue $i$ receives its
arrivals during a single contiguous block of slots $[\gamma_i,\delta_i]$ and none
outside it.  A queue's \emph{level} at any point is the number of
packets it currently holds.  For such a queue:
\begin{itemize}\itemsep2pt
\item it is \emph{born} at $\gamma_i$, \emph{live} at any slot $t\in[\gamma_i,\delta_i]$,
  \emph{dying} at its last live slot $\delta_i$ (its \emph{death slot}), and
  \emph{dead} afterwards --- a dead queue receives no more arrivals but
  keeps transmitting one packet per slot until it empties;
\item its \emph{lifespan} is $\ell_i=\delta_i-\gamma_i+1$.
\end{itemize}
A \emph{cohort} is a set of queues sharing the same birth slot.
When all arrivals have ceased, the buffer empties over the remaining
slots (the drain, Definition~\ref{def:switch}); a policy's value on the instance counts every
transmission, drain included.  A dead queue that still holds packets is
a \emph{corpse}; once the policy has evicted some of a corpse's packets
it is \emph{trimmed}, and the corpse's buffer level at its death slot is
its \emph{pre-trim level}.

Our lower-bound constructions additionally assume \emph{saturation}:
every live queue receives $B$ packets on every slot of its interval, so
a single live queue could by itself fill the buffer.
\end{definition}

\begin{figure}[!t]
\centering
\begin{tikzpicture}[x=1cm,y=1cm,scale=1.2,
  bar/.style={fill=black!18,draw=black!55,line width=0.5pt,rounded corners=1pt},
  lbl/.style={font=\tiny},
  note/.style={font=\tiny,black!75}]

\draw[-stealth,line width=0.8pt] (0.20,0) -- (9.35,0) node[below,lbl,yshift=-1pt] {slot $t$};
\foreach \s in {1,...,14}{
  \draw[black!60] ({0.62*\s},0) -- ({0.62*\s},-0.06);
  \node[lbl,black!75] at ({0.62*\s},-0.19) {\s};
}

\foreach \j in {1,...,6}{
  \draw[bar] ({0.62*1-0.24},{0.36*\j-0.11}) rectangle ({0.62*\j+0.24},{0.36*\j+0.11});
  \fill ({0.62*1},{0.36*\j}) circle (1.4pt);
  \draw[red!70!black,line width=0.9pt]
    ({0.62*\j-0.06},{0.36*\j-0.06}) -- ({0.62*\j+0.06},{0.36*\j+0.06})
    ({0.62*\j-0.06},{0.36*\j+0.06}) -- ({0.62*\j+0.06},{0.36*\j-0.06});
}
\foreach \j in {1,...,6}{
  \draw[bar] ({0.62*7-0.24},{0.36*(6+\j)-0.11}) rectangle ({0.62*(6+\j)+0.24},{0.36*(6+\j)+0.11});
  \fill ({0.62*7},{0.36*(6+\j)}) circle (1.4pt);
  \draw[red!70!black,line width=0.9pt]
    ({0.62*(6+\j)-0.06},{0.36*(6+\j)-0.06}) -- ({0.62*(6+\j)+0.06},{0.36*(6+\j)+0.06})
    ({0.62*(6+\j)-0.06},{0.36*(6+\j)+0.06}) -- ({0.62*(6+\j)+0.06},{0.36*(6+\j)-0.06});
}

\draw[black!45,dashed,-stealth] ({0.62*3+0.26},{0.36*3}) -- ({0.62*6.6},{0.36*3});
\draw[black!45,dashed,-stealth] ({0.62*9+0.26},{0.36*9}) -- ({0.62*12.6},{0.36*9});
\node[note,anchor=west] at (7.92,{0.36*9}) {corpse drains};

\draw[decorate,decoration={brace,amplitude=3pt},line width=0.6pt]
  (0.26,{0.36*6+0.13}) -- (0.26,{0.36*1-0.13});
\node[lbl,rotate=90] at (0.02,{0.36*3.5}) {cohort 1};
\draw[decorate,decoration={brace,amplitude=3pt},line width=0.6pt]
  (3.98,{0.36*12+0.13}) -- (3.98,{0.36*7-0.13});
\node[lbl,rotate=90] at (3.74,{0.36*9.5}) {cohort 2};

\draw[black!35,dashed] ({0.62*1},0.12) -- ({0.62*1},5.02);
\draw[black!35,dashed] ({0.62*7},0.12) -- ({0.62*7},5.02);
\draw[stealth-stealth,black!75,line width=0.6pt] ({0.62*1},5.05) -- ({0.62*7},5.05);
\node[lbl,anchor=south] at ({0.62*4},5.08) {cohort period $k=6$};

\node[lbl,anchor=east] at ({0.62*7-0.06},4.52) {$\gamma_i$};
\node[lbl] at ({0.62*12},4.52) {$\delta_i$};
\draw[decorate,decoration={brace,amplitude=3pt},line width=0.6pt]
  ({0.62*7-0.24},4.72) -- ({0.62*12+0.24},4.72);
\node[lbl,anchor=south] at ({0.62*9.5},4.85) {lifespan $\ell_i=\delta_i-\gamma_i+1=6$};

\node[note,anchor=west] at (2.60,0.55) {one death per slot};
\draw[-stealth,black!60,line width=0.5pt] (2.56,0.62) -- ({0.62*3+0.10},{0.36*3-0.08});

\fill (5.65,0.80) circle (1.4pt);
\node[lbl,anchor=west] at (5.75,0.80) {born at $\gamma_i$};
\draw[red!70!black,line width=0.9pt] (7.10,0.74) -- (7.22,0.86) (7.10,0.86) -- (7.22,0.74);
\node[lbl,anchor=west] at (7.30,0.80) {dies at $\delta_i$};
\draw[bar] (5.61,0.33) rectangle (6.01,0.51);
\node[lbl,anchor=west] at (6.09,0.42) {live: receives $B$ per slot};
\end{tikzpicture}
\caption{Interval instances and the queue life cycle
(Definition~\ref{def:vocab}), drawn for the uniform family $\Phi_6(2)$
of Definition~\ref{def:phi}.}
\label{fig:vocab}
\end{figure}

Figure~\ref{fig:vocab} illustrates the concepts of
Definition~\ref{def:vocab} on the uniform family $\Phi_6(2)$: each bar
is one queue's arrival interval $[\gamma_i,\delta_i]$, from birth to
death slot, and the dashed tails show dead queues draining their
corpses, one packet per slot.  Queues born together form a cohort; in
$\Phi_k$ a fresh cohort of $k$ queues is born every $k$ slots and
exactly one of its queues dies per slot.
\FloatBarrier

To evaluate $\OPT(I)$ on a concrete instance we need an explicit optimal
offline policy.  \citet{BDGN19} provide one, which we reuse as a black box below.

\begin{theorem}[optimal offline policy, imported from Thm.~1 and Prop.~1 of \cite{BDGN19}]\label{thm:lateqd}
There is a clairvoyant policy \LateQD{} that is exactly optimal offline
(so $\LateQD(I)=\OPT(I)$ for every instance $I$).  It is defined as follows:
\begin{itemize}
	\item run the instance once with an unbounded buffer under last-in-first-out
	(LIFO) service; since nothing is ever dropped there, every packet
	acquires a definite slot in which it is eventually sent, its
	\emph{provisional transmission time}; 
	\item then, back in the real $B$-bounded run, on each overflow evict the stored packet whose
	provisional transmission time is \emph{latest} (hence the name).
\end{itemize}
On saturated interval-structured instances (Definition~\ref{def:vocab}),
\citet{BDGN19}'s Prop.~1 additionally gives a value-equivalent explicit
form that Section~\ref{sec:lower} evaluates directly.  If the number of
nonempty queues after arrivals exceeds $B$, it keeps one packet in any
$B$ of them (and the ensuing transmission empties the buffer).  Otherwise
it keeps one packet per live queue that is not yet at its death slot and
runs LQD's own eviction rule on the dying and dead queues in the remaining
buffer space (\emph{LQD-equalization}), draining them as evenly as that
space allows.
\end{theorem}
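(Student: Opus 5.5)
This theorem is imported from \cite{BDGN19}, so the plan is to reconstruct the argument there rather than build anything new. There are three parts: optimality of \LateQD{} in general, the explicit form on saturated interval instances, and value equivalence of the two.

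\textbf{Optimality} goes by an exchange argument. First observe that under unit values the throughput of any policy depends only on \emph{which} packets it keeps, not on the service order within a queue. We may therefore fix LIFO service, so that each stored packet has a well-defined provisional transmission time from the unbounded-buffer run. Take an optimal policy $P$ that agrees with \LateQD{} for the longest prefix of eviction decisions. At the first overflow where they differ, $P$ keeps the packet $p$ with the latest provisional time and evicts some $p'$ with an earlier one. Modify $P$ to keep $p'$ and evict $p$, and from then on let $p'$ play the role that $p$ played in $P$'s schedule. Since $p'$ is needed no later than $p$, every transmission $P$ made using $p$ (or using the slot $p$ occupied) can be matched by the modified policy. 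The total count is therefore not reduced, which contradicts the maximality of the agreeing prefix.

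The delicate step is this matching. Provisional times are computed without drops, whereas the real run has drops that shift LIFO positions. I would handle this with an invariant: in the bounded run, a queue's packets, ordered by provisional time, form a prefix-closed set in the sense of the unbounded LIFO schedule. Under this invariant, a packet with an earlier provisional time is always at least as useful as one with a later time. Proving that the invariant is preserved under arrivals, evictions and transmissions is the main obstacle, and I would check it by induction over slots.

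\textbf{The explicit form} is obtained by specializing \LateQD{} to saturated interval instances. On such instances a live queue receives $B$ packets every slot. Its freshest packets therefore have provisional times after everything else, so it needs only one packet per slot to never miss a transmission. The excess stored in a live queue is also useless, because later arrivals supersede it under LIFO. There are two regimes:
\begin{itemize}
\item If more than $B$ queues are nonempty after arrivals, keeping one packet in each of some $B$ of them is optimal, since at most $B$ transmissions can occur anyway.
\item Otherwise one packet per live, non-dying queue is reserved, and the remaining space goes to dying and dead queues. For these, the latest-provisional-time rule evicts from the corpse whose buffered tail extends furthest into the future, that is, from the longest one. This is exactly LQD-equalization.
\end{itemize}

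\textbf{Value equivalence} then follows by checking, slot by slot, that the explicit form transmits the same number of packets as \LateQD{}. The two may break ties differently, but such differences only permute identical-value packets among equal-length corpses, so the counts agree.
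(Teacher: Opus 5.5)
The paper gives no proof of this theorem. It imports Thm.~1 and Prop.~1 of \citet{BDGN19} and uses \LateQD{} as a black box, so your reconstruction has to stand on its own. Your strategy is the right one: LIFO provisional times plus a Belady-style ``evict what is needed furthest in the future'' exchange. But the exchange step as written does not go through.

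Your invariant says that each queue's stored packets form a top segment of its LIFO stack, i.e.\ a prefix in provisional order. It does hold for \LateQD{}: a globally latest-time eviction removes the bottom of its own queue's segment, and new arrivals go on top. It is false, however, for the arbitrary optimal competitor $P$ after the divergence slot, because nothing stops $P$ from evicting a queue's newest packet. Then $P$'s real transmission times are not the provisional ones, so ``$p'$ is needed no later than $p$'' says nothing about $P$'s actual schedule. Likewise ``let $p'$ play the role of $p$'' has no meaning when $p$ and $p'$ sit in different queues. No induction over slots will prove the invariant for $P$; you have to \emph{impose} it.

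The missing idea is a normalization. A queue transmits iff it is nonempty, so a policy's value depends only on its queue-length trajectory. Hence any $P$ can be relabelled queue by queue to admit the earliest-provisional arrivals and evict the latest-provisional stored packets, without changing its value. For such policies every transmitted packet $x$ occupies the buffer exactly on $[a(x),\pi(x)]$ and leaves at $\pi(x)$. Therefore $\OPT$ is at most the size of the largest set of intervals $[a(x),\pi(x)]$ covering every slot at most $B$ times. \LateQD{} is exactly the sweep that, at each slot of excess depth, discards the intervals with the latest right endpoints, and it realizes its kept set. Your swap then becomes a clean exchange on interval sets: replacing $p$ by $p'$ with $\pi(p')\le\pi(p)$ keeps the count and keeps depth at most $B$ everywhere. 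Before $t$ the buffer physically held $p'$, and from $t$ on the interval of $p'$ ends no later than that of $p$.

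The explicit-form part has three fixable problems.
\begin{itemize}
\item Your sentence ``its freshest packets have provisional times after everything else'' is backwards: under LIFO, a live queue's freshest packet is the one sent in the current slot. What is true is that the other $B-1$ packets of each pre-death batch are pushed behind the queue's whole future, including its $B$-packet death batch. So they get times $\ge\delta_q+B$, later than every death-batch packet of a dying or dead queue held at a slot $t<\delta_q$ (those end by $t+B-1$). This is why \LateQD{} sacrifices live-queue excess before any hoard.
\item When space is spare, \LateQD{} may actually \emph{keep} some of that excess, so the explicit form is not literally \LateQD{}. Value equivalence needs two facts: such excess can never be sent (at $\delta_q$ the prefix property would force keeping all $B$ death-batch packets as well), and, being always evicted first, it never displaces anything. ``Later arrivals supersede it'' is not the reason.
\item With more than $B$ nonempty queues, the justification is not ``at most $B$ transmissions occur''. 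It is that all time-$t$ packets tie at the earliest provisional time, so keeping $B$ of them is \LateQD{}'s own choice. Which $B$ are kept is immaterial, because the buffer empties either way.
\end{itemize}
With these repairs your derivation of LQD-equalization is correct: a corpse holding $H$ packets has latest time $t+H-1$, so latest-time eviction is longest-queue eviction. The paper's own Lemma~\ref{lem:normalform} independently gives the one-packet-per-live-queue shape for any offline schedule, but it does not give the equalization.
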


\begin{definition}[the uniform family $\Phi_k$, from \cite{BDGN19}]\label{def:phi}
$\Phi_k(C)$ is an interval-structured instance defined as follows 
(using the vocabulary of Definition~\ref{def:vocab}).
Slots run in \emph{cycles} of length $k$; at
each cycle boundary a fresh cohort of $k$ queues is born, and within a
cycle exactly one of the currently live queues dies per slot, so the
death slots are spread uniformly.  Formally, a queue $j\ge1$ is live
during slots $[\,k\lfloor(j-1)/k\rfloor+1,\ j\,]$ (its birth slot is
the start of its cycle and its death slot is $j$), and every live
queue receives $B$ packets each slot.  The parameter $C$ truncates the
construction to $C$ cycles (after which the buffer drains), making
$\Phi_k(C)$ a finite instance.  This is the \emph{uniform} death profile (one
death per slot); the family of Section~\ref{sec:lower} generalizes it by
letting the deaths be scheduled non-uniformly.
\end{definition}

\begin{remark}[number of queues]\label{rem:nqueues}
Every instance here is finite: $\Phi_k(C)$ and the front-loaded family
$\Phi^{\front}_k(C)$ of Section~\ref{sec:lower} use $N=kC$ distinct queues (a
fresh set of $k$ per cohort), the competitive ratio is defined over
instances of arbitrary finite $N$, and the upper bounds of
\citet{HKM01,AEMV24} and of Theorem~\ref{thm:improved} hold for every
$N$; no limit in $N$ is taken.  Large $N$ is also not essential: at most
$O(k)$ queues are ever simultaneously nonempty, so one may recycle queue
indices to obtain an equivalent instance with $N\approx2k$.  We do not
certify the recycled variants, because recycling breaks the
one-interval-per-queue structure that makes the explicit form of
\LateQD{} (Theorem~\ref{thm:lateqd}) fast to evaluate; the general form
of \LateQD{} remains optimal but is costlier, and no result here needs
small $N$.  (The waterline picture behind ``$O(k)$
simultaneously nonempty'' is explained in Section~\ref{sec:lower}.)
\end{remark}

\begin{remark}[competitive ratio with additive terms; convention]\label{rem:cradditive}
Definition~\ref{def:switch} permits an additive $K B$ independent of
the instance length.  This is the convention used by the imported upper
bounds: on runs with $A(I)\to\infty$, the additive term vanishes after
division by $A(I)$.  Our lower-bound certificate is still stated on
\emph{one} finite instance and reports its \emph{exact} ratio
$\OPT(I)/A(I)$; concatenation is only the vehicle that lifts that
ratio past the additive $K B$.  Concatenate $R$ copies on disjoint
queue names, each starting only after the previous copy has drained to
an empty switch, so the copies do not interact.  The reference run
then replays on every copy, multiplying its LQD value by exactly $R$
(the reference rule is replayed per copy, so no hidden reset
assumption), while the offline value is multiplied by at least $R$: a
lower bound is all the numerator needs.  Hence no fixed $K B$ can
reduce the limiting ratio.  For an arbitrary history-dependent tie
rule, which need not replay across copies, we apply
Theorem~\ref{thm:tieinv} (every non-clairvoyant tie rule attains at
least the ratio certified on a given instance) directly to the
$R$-copy reference instance, and the same $R$-linear growth of both
values follows.
\end{remark}

\begin{figure}[!t]
\centering
\includegraphics[width=0.85\linewidth]{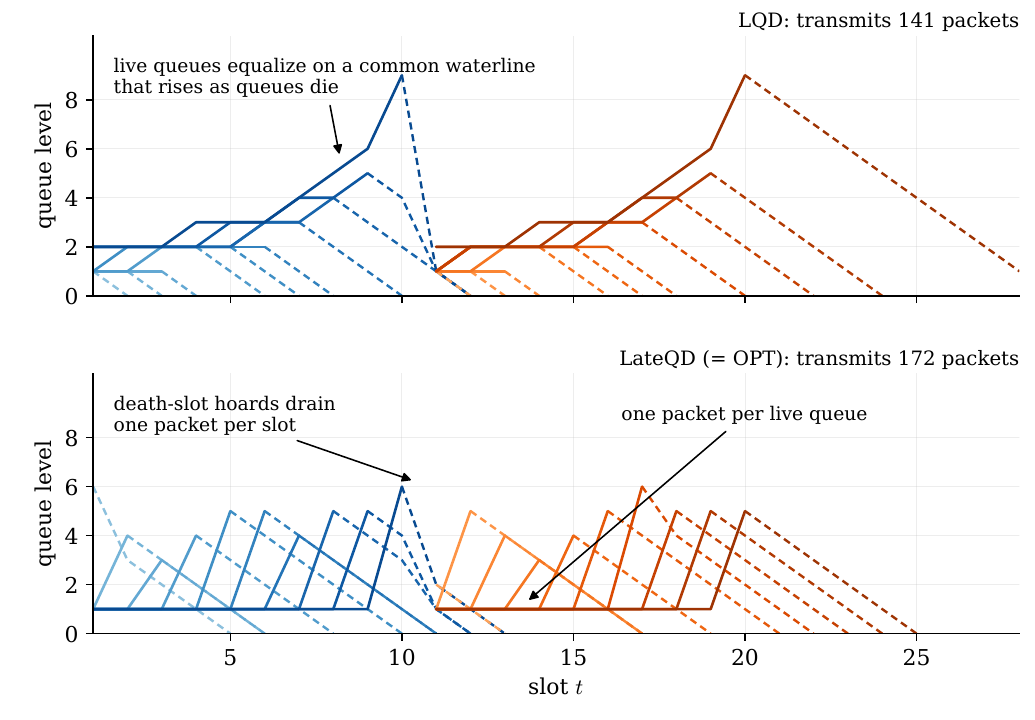}
\caption{LQD (top) and the optimal offline policy \LateQD{} (bottom;
Theorem~\ref{thm:lateqd}) on the instance $\Phi_{10}(2)$ with $B=15$:
queue levels per slot, solid while live, dashed after death, coloured
by cohort.  LQD runs under the tie rule $T_0$ of
Remark~\ref{rem:tierule}, and here $\OPT/\LQD=172/141\approx1.2199$.}
\label{fig:run}
\end{figure}

Figure~\ref{fig:run} illustrates how LQD loses to \LateQD{} on an
instance from the $\Phi$ family.  LQD cannot tell long-lived queues
from short-lived ones, so all live queues share one \emph{waterline}
(a common fill level), which rises as the cohort thins; a dying queue
leaves behind only whatever the waterline happened to give it.
\LateQD{} spends the same buffer very differently: it keeps exactly
one packet in every live queue and invests everything else into
\emph{death-slot hoards}, packets stockpiled for a queue's own death
slot, which then drain one per slot long after the queue is dead.
Already on this small instance the clairvoyant advantage is about
$22\%$.  The exact LQD total depends on the tie rule: the figure
fixes $T_0$ of Remark~\ref{rem:tierule}, the rule used throughout the
paper, and the opposite rule would transmit $152$ packets on the same
instance.  The waterline-versus-hoards contrast does not depend on
that choice.
\FloatBarrier


Table~\ref{tbl:notation} summarizes the notation used in the paper; every symbol is also introduced at its point of first use. With the model, tie-rule convention, and optimal offline policy \LateQD{} in place, we now construct an explicit instance witnessing a lower bound on $\CR(\LQD)$.

\section{The lower bound: an exact-integer certificate}\label{sec:lower}

In this section, we prove the paper's lower bound in two stages. First, we
exhibit a single finite instance and a single tie rule under which
\LQD{} falls short of \OPT{} by the claimed ratio
(Theorem~\ref{thm:certified}); this alone suffices to lower-bound
$\CR(\LQD)$. Second, we show the same ratio holds for \emph{every}
non-clairvoyant tie rule (Theorem~\ref{thm:tieinv}), by an adaptive
coupling argument whose full proof is deferred to
Appendix~\ref{sec:appendix-tie}.  Before either stage, we explain
where the instances come from: a fluid relaxation of the
instance design problem, whose only role is to propose the death
profile.

We found the instance family below by numerically maximizing a
continuum (\emph{fluid}) relaxation of the instance design problem; the
certificate's validity rests only on the exact evaluation of one finite
instance, but stating the relaxation explains where the family comes
from and the role of ``front-loading.''  Figure~\ref{fig:families}
shows one cohort of each family, one queue's live interval per
horizontal bar: in $\Phi_k$ exactly one queue dies per slot, while in
$\Phi^{\front}_k$ about $21\%$ of the cohort dies almost immediately
(the \emph{early-death mass}) and the rest follow the continuous part
of the schedule; panel~(c) compares the two lifespan CDFs, with
$F^{\front}$ plotted from the certified knot data.

\paragraph{The fluid design problem (search objective only; no
convergence claim is made or used).}
A lower bound on $\CR(\LQD)$ comes from an instance on which the offline
optimum out-transmits LQD by as large a factor as possible.  A natural
question therefore drives the search: which death schedule hurts LQD
the most?  We answer it numerically, by searching for the death
profile that maximizes
the offline-to-LQD throughput ratio $T_O/T_L$, both throughputs taken
in a continuum idealization of one cohort; its two sides $T_O$
(offline) and $T_L$ (LQD) are defined next.

\emph{Rescaling.} Rescale time within a cohort by the cohort length
$k$, so a cohort occupies rescaled time $[0,1)$; rescale queue heights
by $k$; and write $\theta=B/k^2$ for the rescaled buffer.  All $k$
queues of a cohort are born together at rescaled time $0$, so a queue's
rescaled lifespan $\lambda$ is also its rescaled death time.  A
\emph{death profile} is a probability distribution $F$ on $(0,1]$ of
these lifespans (the uniform profile recovers $\Phi_k$).  We work in the
cyclic steady state, in which the profile repeats every cycle; write
$\mathrm{age}_\lambda(s)=(s-\lambda)\bmod1$ for the time elapsed since a
lifespan-$\lambda$ queue most recently died.  To include packets that
survive a cycle boundary, define the finite sums
\[
R_z(a)=\sum_{r\ge0}(z-a-r)_+,
\qquad
D_z(a)=\sum_{r\ge0}\mathbf 1\{z-a-r>0\}.
\]
Here $r$ indexes earlier cohorts; only finitely many summands are nonzero.
At phase $s$ the current cohort's live queues are those with
$\lambda>s$, of total mass $1-F(s)$, while $R_z$ and $D_z$ account for
all still-draining earlier deaths.

\emph{The offline side.} On interval instances the optimal offline
policy keeps one packet in each live queue and spends the entire
remaining buffer on \emph{hoards}: just before a queue dies it
stockpiles $H(\lambda)$ packets, then transmits them one per step after
death.  Its per-cohort throughput is
\begin{align*}
T_O(F,\theta)&=\int\lambda\,dF(\lambda)+V(F,\theta),\qquad\text{where}\\
V(F,\theta)&=\sup_{H\ge0}\int H\,dF
\quad \text{s.t.}\quad \int R_{H(\lambda)}\!\bigl(\mathrm{age}_\lambda(s)\bigr)dF(\lambda)\le\theta
\ \ \forall s.
\end{align*}
The first term is live throughput (a queue live for time $\lambda$
transmits $\lambda$).  In the \emph{hoard LP} $V$, the integrand
$R_{H(\lambda)}(\mathrm{age}_\lambda(s))$ is the total part of that
lifespan class's periodic hoards still in the buffer at phase $s$, and
the constraint is that all draining hoards together never exceed
$\theta$; $V$ picks the hoards to maximize the extra post-death service.

\emph{The LQD side.} LQD cannot see lifespans, so it cannot target its
buffer at the queues about to die.  On every overflow it evicts from a
longest queue; in the fluid limit this keeps all live queues at a
single common height, the \emph{waterline} $w(s)$, while a queue
that has already died keeps the height it had at its own death and
drains one packet per step thereafter (having no further arrivals, it
is never refilled).  Two deliberate idealizations enter here.  First,
the model tracks each dead queue's residual as shrinking by
transmission only; in the exact dynamics a later overflow can also
evict from a dead queue whose residual still tops the waterline, so
real residuals shrink by trimming as well.  Second, the model fixes
the waterline by keeping the buffer full,
\[
\underbrace{(1-F(s))\,w(s)}_{\text{live queues at the waterline}}
\;+\;\underbrace{\int R_{w(\lambda)}\!\bigl(\mathrm{age}_\lambda(s)\bigr)dF(\lambda)}_{\text{draining residuals of dead queues from all cohorts}}
\;=\;\theta ,
\]
so a queue dying at $\lambda$ leaves residual $w(\lambda)$ behind;
this equality describes the arrival-saturated cyclic steady state and
fails in the final drain phase, after the last cohort's arrivals stop
and the buffer empties.
Every nonempty queue transmits one packet per step, so the per-cohort
throughput is the time-integral of the transmitting mass,
\[
T_L(F,\theta)=\int_0^1\Bigl[\,\underbrace{\bigl(1-F(s)\bigr)}_{\text{live}}
\;+\;\underbrace{\int D_{w(\lambda)}\!\bigl(\mathrm{age}_\lambda(s)\bigr)\,dF(\lambda)}_{\text{dead, still draining}}\,\Bigr]\,ds ,
\]
the live throughput $\int\lambda\,dF$ (as in $T_O$) plus the incidental
post-death drain: LQD's blind analogue of the offline hoard $V$.
The waterline $w(s)$ couples all queues through the buffer equation and
has no closed form, so we solve it by a fixed-point iteration and
evaluate $T_L$ numerically (in the cyclic steady state a residual not
drained within its cycle carries into the next).  The design problem is
$\sup_{F,\theta}T_O/T_L$.

\emph{Why front-loading wins.} Ascending this ratio from the uniform profile
produces the front-loaded $F^{\front}$ of
Definition~\ref{def:frontfamily}, and the two sides above show why it
wins.  Concentrating death mass near $\lambda=0$ lets the offline
player hoard heavily into those queues while the buffer is nearly empty
of competing residuals, and those hoards then drain over almost the
whole cycle, making $V$ large.  LQD gains nothing from the same profile:
its waterline is blind to lifespans and stocks every queue alike,
so its incidental drain stays small.  Every quantity in the theorem
below is finite-$k$ and exact; the fluid problem is a heuristic used
only to \emph{propose} the profile, and neither idealization above
enters any certified statement.

\begin{figure}[!t]
\centering
\begin{tikzpicture}[barcol/.style={black!75}]
\begin{scope}
\draw[->] (0,0) -- (3.7,0) node[below left,font=\scriptsize] {$s=t/k$};
\draw[->] (0,0) -- (0,2.35) node[rotate=90,anchor=south east,font=\scriptsize] {queue (by rank)};
\draw[dashed,black!40] (3.4,0) -- (3.4,2.25) node[above,font=\scriptsize,black] {$1$};
\draw[line width=1.1pt,barcol] (0,0.062) -- (0.050,0.062);
\draw[line width=1.1pt,barcol] (0,0.124) -- (0.150,0.124);
\draw[line width=1.1pt,barcol] (0,0.186) -- (0.250,0.186);
\draw[line width=1.1pt,barcol] (0,0.248) -- (0.350,0.248);
\draw[line width=1.1pt,barcol] (0,0.310) -- (0.450,0.310);
\draw[line width=1.1pt,barcol] (0,0.372) -- (0.550,0.372);
\draw[line width=1.1pt,barcol] (0,0.434) -- (0.650,0.434);
\draw[line width=1.1pt,barcol] (0,0.496) -- (0.750,0.496);
\draw[line width=1.1pt,barcol] (0,0.558) -- (0.850,0.558);
\draw[line width=1.1pt,barcol] (0,0.620) -- (0.950,0.620);
\draw[line width=1.1pt,barcol] (0,0.682) -- (1.050,0.682);
\draw[line width=1.1pt,barcol] (0,0.744) -- (1.150,0.744);
\draw[line width=1.1pt,barcol] (0,0.806) -- (1.250,0.806);
\draw[line width=1.1pt,barcol] (0,0.868) -- (1.350,0.868);
\draw[line width=1.1pt,barcol] (0,0.930) -- (1.450,0.930);
\draw[line width=1.1pt,barcol] (0,0.992) -- (1.550,0.992);
\draw[line width=1.1pt,barcol] (0,1.054) -- (1.650,1.054);
\draw[line width=1.1pt,barcol] (0,1.116) -- (1.750,1.116);
\draw[line width=1.1pt,barcol] (0,1.178) -- (1.850,1.178);
\draw[line width=1.1pt,barcol] (0,1.240) -- (1.950,1.240);
\draw[line width=1.1pt,barcol] (0,1.302) -- (2.050,1.302);
\draw[line width=1.1pt,barcol] (0,1.364) -- (2.150,1.364);
\draw[line width=1.1pt,barcol] (0,1.426) -- (2.250,1.426);
\draw[line width=1.1pt,barcol] (0,1.488) -- (2.350,1.488);
\draw[line width=1.1pt,barcol] (0,1.550) -- (2.450,1.550);
\draw[line width=1.1pt,barcol] (0,1.612) -- (2.550,1.612);
\draw[line width=1.1pt,barcol] (0,1.674) -- (2.650,1.674);
\draw[line width=1.1pt,barcol] (0,1.736) -- (2.750,1.736);
\draw[line width=1.1pt,barcol] (0,1.798) -- (2.850,1.798);
\draw[line width=1.1pt,barcol] (0,1.860) -- (2.950,1.860);
\draw[line width=1.1pt,barcol] (0,1.922) -- (3.050,1.922);
\draw[line width=1.1pt,barcol] (0,1.984) -- (3.150,1.984);
\draw[line width=1.1pt,barcol] (0,2.046) -- (3.250,2.046);
\draw[line width=1.1pt,barcol] (0,2.108) -- (3.350,2.108);
\node[font=\small] at (1.7,-0.7) {(a) $\Phi_k$: uniform deaths};
\end{scope}
\begin{scope}[xshift=.32\linewidth]
\draw[->] (0,0) -- (3.7,0) node[below left,font=\scriptsize] {$s=t/k$};
\draw[->] (0,0) -- (0,2.35);
\draw[dashed,black!40] (3.4,0) -- (3.4,2.25) node[above,font=\scriptsize,black] {$1$};
\draw[line width=1.1pt,barcol] (0,0.062) -- (0.014,0.062);
\draw[line width=1.1pt,barcol] (0,0.124) -- (0.014,0.124);
\draw[line width=1.1pt,barcol] (0,0.186) -- (0.014,0.186);
\draw[line width=1.1pt,barcol] (0,0.248) -- (0.014,0.248);
\draw[line width=1.1pt,barcol] (0,0.310) -- (0.014,0.310);
\draw[line width=1.1pt,barcol] (0,0.372) -- (0.014,0.372);
\draw[line width=1.1pt,barcol] (0,0.434) -- (0.021,0.434);
\draw[line width=1.1pt,barcol] (0,0.496) -- (0.049,0.496);
\draw[line width=1.1pt,barcol] (0,0.558) -- (0.101,0.558);
\draw[line width=1.1pt,barcol] (0,0.620) -- (0.155,0.620);
\draw[line width=1.1pt,barcol] (0,0.682) -- (0.213,0.682);
\draw[line width=1.1pt,barcol] (0,0.744) -- (0.272,0.744);
\draw[line width=1.1pt,barcol] (0,0.806) -- (0.332,0.806);
\draw[line width=1.1pt,barcol] (0,0.868) -- (0.392,0.868);
\draw[line width=1.1pt,barcol] (0,0.930) -- (0.455,0.930);
\draw[line width=1.1pt,barcol] (0,0.992) -- (0.527,0.992);
\draw[line width=1.1pt,barcol] (0,1.054) -- (0.613,1.054);
\draw[line width=1.1pt,barcol] (0,1.116) -- (0.716,1.116);
\draw[line width=1.1pt,barcol] (0,1.178) -- (0.818,1.178);
\draw[line width=1.1pt,barcol] (0,1.240) -- (0.915,1.240);
\draw[line width=1.1pt,barcol] (0,1.302) -- (1.008,1.302);
\draw[line width=1.1pt,barcol] (0,1.364) -- (1.126,1.364);
\draw[line width=1.1pt,barcol] (0,1.426) -- (1.278,1.426);
\draw[line width=1.1pt,barcol] (0,1.488) -- (1.442,1.488);
\draw[line width=1.1pt,barcol] (0,1.550) -- (1.608,1.550);
\draw[line width=1.1pt,barcol] (0,1.612) -- (1.785,1.612);
\draw[line width=1.1pt,barcol] (0,1.674) -- (1.983,1.674);
\draw[line width=1.1pt,barcol] (0,1.736) -- (2.175,1.736);
\draw[line width=1.1pt,barcol] (0,1.798) -- (2.358,1.798);
\draw[line width=1.1pt,barcol] (0,1.860) -- (2.541,1.860);
\draw[line width=1.1pt,barcol] (0,1.922) -- (2.726,1.922);
\draw[line width=1.1pt,barcol] (0,1.984) -- (2.916,1.984);
\draw[line width=1.1pt,barcol] (0,2.046) -- (3.105,2.046);
\draw[line width=1.1pt,barcol] (0,2.108) -- (3.295,2.108);
\draw[decorate,decoration={brace,mirror,amplitude=3pt},black!60]
  (-0.12,0.062) -- (-0.12,0.51)
  node[midway,left=3pt,font=\scriptsize,align=right,black!60] {early-death\\mass};
\node[font=\small] at (1.7,-0.7) {(b) $\Phi^{\front}_k$: front-loaded deaths};
\end{scope}
\begin{scope}[xshift=.65\linewidth]
\draw[->] (0,0) -- (2.5,0) node[below left,font=\scriptsize] {$\lambda$};
\draw[->] (0,0) -- (0,2.45) node[rotate=90,anchor=south east,font=\scriptsize] {lifespan CDF};
\draw[black!45,dashed] (0.000,0.000) -- (2.200,2.200);
\draw[line width=1pt] (0.000,0.000) -- (0.000,0.035) -- (0.000,0.070) -- (0.000,0.105) -- (0.000,0.140) -- (0.000,0.182) -- (0.000,0.217) -- (0.000,0.252) -- (0.003,0.364) -- (0.010,0.405) -- (0.017,0.438) -- (0.024,0.471) -- (0.043,0.507) -- (0.066,0.551) -- (0.086,0.587) -- (0.105,0.624) -- (0.126,0.660) -- (0.151,0.702) -- (0.172,0.737) -- (0.192,0.771) -- (0.214,0.807) -- (0.235,0.842) -- (0.261,0.885) -- (0.282,0.919) -- (0.305,0.955) -- (0.334,0.994) -- (0.369,1.037) -- (0.398,1.069) -- (0.427,1.098) -- (0.457,1.126) -- (0.500,1.168) -- (0.536,1.204) -- (0.572,1.241) -- (0.608,1.279) -- (0.651,1.325) -- (0.695,1.364) -- (0.739,1.399) -- (0.783,1.429) -- (0.836,1.461) -- (0.880,1.489) -- (0.925,1.516) -- (0.985,1.552) -- (1.044,1.588) -- (1.116,1.629) -- (1.175,1.660) -- (1.234,1.691) -- (1.294,1.720) -- (1.370,1.759) -- (1.434,1.794) -- (1.498,1.829) -- (1.563,1.864) -- (1.639,1.906) -- (1.703,1.941) -- (1.769,1.976) -- (1.835,2.011) -- (1.915,2.053) -- (1.981,2.088) -- (2.047,2.123) -- (2.114,2.158) -- (2.193,2.200);
\node[font=\scriptsize,black!45] at (1.55,0.95) {$\Phi_k$};
\node[font=\scriptsize] at (0.62,1.85) {$F^{\front}$};
\node[font=\small] at (1.25,-0.7) {(c) lifespan CDFs};
\end{scope}
\end{tikzpicture}
\caption{One cohort of the uniform family $\Phi_k$ (a) and of the
front-loaded family $\Phi^{\front}_k$ (b), in rescaled time $s=t/k$,
with the two lifespan CDFs (c).}
\label{fig:families}
\end{figure}
\FloatBarrier

\begin{definition}[the front-loaded instance family]\label{def:frontfamily}
The construction has two integer parameters --- the \emph{cohort size}
$k\ge1$ (the number of queues in a batch, which also sets the
resolution) and the \emph{cohort count} $C\ge1$ --- together with one
fixed input, a front-loaded death profile shared by the whole family.
For each choice of $k$ and $C$ it produces a single instance
$\Phi^{\front}_k(C)$, and the \emph{front-loaded family} is the set of
all of them.

\emph{The profile.} $F^{\front}$ is a fixed probability distribution of
rescaled lifespans $\lambda=\ell/k\in(0,1]$ (a lifespan of $\ell$ slots
rescaled by the cohort size), the same for every instance, given as a
CDF by piecewise-linear interpolation of $310$ knots shipped with the
certificate package (Appendix~\ref{sec:appendix-certs}).  It is the
high-value profile returned by our numerical search for $T_O/T_L$, and it is
\emph{front-loaded}: a concentrated early death mass of $\approx0.212$
lies in the first $1\%$ of the range, with the rest a spread continuum.

\emph{One cohort.} A cohort is $k$ queues born together in one slot
(the born/live/dying/dead vocabulary of Definition~\ref{def:vocab}), with
lifespans chosen to follow $F^{\front}$ by \emph{quantile sampling}:
queue $j$ is given the $\tfrac{j-1/2}{k}$-quantile of $F^{\front}$ (the
$k$ evenly spaced sample points), scaled up to an integer number of
slots in $[1,k]$,
\[
\ell_j=\mathrm{clip}\Bigl(\mathrm{round}\bigl(k\,(F^{\front})^{-1}
\bigl(\tfrac{j-1/2}{k}\bigr)\bigr),\,1,\,k\Bigr),\qquad j=1,\dots,k,
\]
so the cohort's empirical lifespan distribution is the specified
quantile discretization of $F^{\front}$;
queue $j$ is then live for its $\ell_j$ slots after birth and dies.

\emph{The instance.} $\Phi^{\front}_k(C)$ runs $C$ such cohorts in
succession and then lets the buffer drain.  While live, every queue
receives $B$ packets per slot, so it is always saturated, and the
shared buffer has size $B=\lfloor k^2/p\rfloor$ with
$p=3588332140194315\cdot2^{-49}\approx6.3741583392739098$.  Each queue's arrivals therefore occupy the
single interval from its cohort's birth to its death slot, so, like
$\Phi_k(C)$, the instance is \emph{interval-structured}.

\emph{Exact conventions} (those of the certificate package's
generator): round is round-half-to-even, $(F^{\front})^{-1}$ is linear
interpolation in the knot list, and the floor over the dyadic $p$
above is taken in unbounded-integer arithmetic --- pinning $p$
this way avoids the one-slot error a floating-point floor could
introduce (in fluid terms $\theta\approx1/p$; at $k=3\cdot10^5$ this
gives $B=14{,}119{,}511{,}190$).
\end{definition}

\begin{remark}[Tie rule $T_0$]\label{rem:tierule}
At an overflow LQD evicts down to a common level $w$; $R\ge0$ queues
are left one packet above it, and these $R$ marginal packets are the
\emph{residuals} (Proposition~\ref{prop:tiefreedom}).  Tie rule $T_0$
awards them in \emph{decreasing} numerical order (equivalently, it
evicts from the lowest-index tied queues first).  Label queue $j$ of
cohort $c=0,1,\ldots,C-1$ globally by $ck+j$.  The quantile lifespans
are non-decreasing in $j$, and consecutive cohorts are born one cycle
apart, so decreasing global index happens to be latest-death first
(with index order breaking equal-death ties).  That alignment is a
property of the hard instance; the rule itself reads only queue
identities and is not clairvoyant.
\end{remark}

\begin{theorem}[certified instances]\label{thm:certified}
For LQD with the deterministic tie rule $T_0$, the explicit front-loaded death-profile instance in Table~\ref{tab:frontladder}
(its $k=3\cdot10^5$ row) gives
\[
\CR(\LQD)\ \ge\
\frac{184{,}815{,}365{,}566{,}285}
{125{,}784{,}985{,}866{,}185}
\;=\;\frac{36963073113257}{25156997173237}
\;=\;1.46929591\ldots .
\]
\end{theorem}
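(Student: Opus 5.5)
The proof is a certificate evaluation followed by a lift to the asymptotic definition of $\CR$. First, build the instance $I=\Phi^{\front}_k(C)$ for the $k=3\cdot10^5$ row of Table~\ref{tab:frontladder}, with that row's cohort count $C$. Construction follows Definition~\ref{def:frontfamily} exactly: quantile sampling of the $310$-knot profile with round-half-to-even and clipping to $[1,k]$, and $B=\lfloor k^2/p\rfloor$ computed in unbounded integers from the dyadic $p$. This pins down every lifespan $\ell_j$ and the buffer $B=14{,}119{,}511{,}190$ as exact integers, so the instance is a fully specified finite object.

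Second, compute $\LQD_{T_0}(I)$ exactly. Because the instance is saturated and interval-structured, the simulation need not track individual packets. Per slot, store the multiset of queue levels, grouped as the live queues (which all sit at the waterline $w$, with $R$ residual queues at $w+1$ awarded by $T_0$ in decreasing global index, Remark~\ref{rem:tierule}) and the corpses with their levels. At each overflow, LQD's eviction then reduces to an integer water-filling step: find the largest level $w$ such that the total $\sum_q \min(\text{level}_q,w)$ plus the residual count fits in $B$, trimming corpses above $w$ as well. After that, count one transmission per nonempty queue and drain at the end. Each slot costs time polynomial in the number of distinct levels, and everything stays in integer arithmetic.

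Third, compute $\OPT(I)=\LateQD(I)$ using Theorem~\ref{thm:lateqd} in its explicit form for saturated interval instances. In each slot, if more than $B$ queues are nonempty, keep one packet in $B$ of them. Otherwise, keep one packet per live queue that is not at its death slot, and run LQD-equalization on the dying and dead queues in the remaining space. This is the same integer water-filling routine applied to a different set of queues, so it is again exact. The theorem's optimality guarantee means no separate upper bound on offline policies is required.

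Finally, confirm that the ratio $\OPT(I)/\LQD_{T_0}(I)$ equals the stated fraction, and reduce it by the gcd to get $36963073113257/25156997173237$. To turn this exact ratio into a bound on $\CR(\LQD)\ge\CR(\LQD_{T_0})$, apply Remark~\ref{rem:cradditive}: concatenate $R$ drained copies on disjoint queue names. Then $\LQD_{T_0}$ scales exactly by $R$ and $\OPT$ by at least $R$, so for any fixed $K$ the additive term $KB$ is overwhelmed as $R\to\infty$. Hence no $c$ below the ratio satisfies Definition~\ref{def:switch}.

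The main obstacle is computational trustworthiness at $k=3\cdot10^5$ with $B\approx1.4\cdot10^{10}$: simulating each packet is infeasible, so the level-aggregated water-filling must agree exactly with LQD's definition. I would settle this first. The approach is to prove a lemma that one overflow of LQD is equivalent to water-filling, with the $R$ marginal packets distributed by $T_0$, and to check the fast implementations against brute-force packet-level LQD and general-form $\LateQD$ on small $k$ (for example the $\Phi_{10}(2)$ value $172/141$ from Figure~\ref{fig:run}).
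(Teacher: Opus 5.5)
Your proposal is correct and follows the paper's own argument. The paper's proof likewise rests on three points: the rows are legal finite instances; $\LQD_{T_0}$ is evaluated exactly, by definition; and $\OPT$ is evaluated exactly via the explicit interval-structured form of $\LateQD$ (Theorem~\ref{thm:lateqd}), both in exact integer arithmetic, followed by the drained-copies lift of Remark~\ref{rem:cradditive} to defeat the additive $KB$. Your water-filling lemma and small-$k$ cross-checks are implementation-level details that the paper leaves to its certificate package; the multiset fact you need is essentially Proposition~\ref{prop:tiefreedom}, and the paper cross-checks the engine against an independent one at $k=400$.
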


Table~\ref{tab:frontladder} lists the six certified rows; the ladder
is monotone in $k$ from $k=3\cdot10^3$ on.  The instance is evaluated
under one fixed tie rule, as any concrete evaluation must be;
Theorem~\ref{thm:tieinv} below transfers the value to every
non-clairvoyant tie rule.  The upper bounds of \citet{AEMV24} and of
Theorem~\ref{thm:improved} hold for every tie rule as well.

\begin{table}[!t]
\centering\small
\begin{tabular}{rrrrrl}
\toprule
$k$ & $B$ & $C$ & $\OPT(I)$ & $\LQD(I)$ & ratio $\ge$ \\
\midrule
$400$ & 25{,}101 & 20 & 2{,}617{,}078 & 1{,}802{,}645 & 1.4517988 \\
$3{\cdot}10^3$ & 1{,}411{,}951 & 3000 & 22{,}153{,}437{,}329 & 15{,}083{,}767{,}097 & 1.4686939 \\
$10^4$ & 15{,}688{,}345 & 3000 & 246{,}340{,}227{,}396 & 167{,}676{,}782{,}543 & 1.4691373 \\
$3{\cdot}10^4$ & 141{,}195{,}111 & 3000 & 2{,}217{,}549{,}563{,}033 & 1{,}509{,}296{,}362{,}859 & 1.4692605 \\
$10^5$ & 1{,}568{,}834{,}576 & 2500 & 20{,}534{,}590{,}474{,}926 & 13{,}975{,}921{,}458{,}495 & 1.4692834 \\
$3{\cdot}10^5$ & 14{,}119{,}511{,}190 & 2500 &
184{,}815{,}365{,}566{,}285 &
125{,}784{,}985{,}866{,}185 & \textbf{1.46929591} \\
\bottomrule
\end{tabular}
\caption{The certified ladder of the front-loaded family
(Definition~\ref{def:frontfamily}) under the tie rule $T_0$; the
$k=3\cdot10^5$ row is the paper's lower bound.}
\label{tab:frontladder}
\end{table}

\begin{proof}[Proof of validity]
The rows are legal finite instances (Definition~\ref{def:frontfamily}).  The
\LQD{} value is the definitionally exact profit of the deterministic
policy with tie rule fixed.  The \OPT{} value is the profit of
\LateQD{}, which is exactly optimal by Theorem~\ref{thm:lateqd}.  Its
explicit interval-structured form applies because $\Phi^{\front}_k(C)$
is interval-structured (Definition~\ref{def:frontfamily}): specifically,
that construction assigns each queue its live-slot transmissions from
its own arrival interval $[\gamma_i,\delta_i]$, together with the shared
LQD-equalization level (Theorem~\ref{thm:lateqd}) that \LateQD{}'s
explicit form imposes on the dying and dead queues in the
buffer's leftover space.  Each $\Phi^{\front}_k(C)$ is itself a finite
interval-structured instance, so Theorem~\ref{thm:lateqd} applies to it
directly, terminal drain included.
Both sides are exact integer arithmetic.  Repeating drained copies as in
Remark~\ref{rem:cradditive} turns their exact ratio into a lower bound on
$\CR(\LQD)$ under Definition~\ref{def:switch}.
\end{proof}

\paragraph{The bound does not depend on the tie rule.}
Theorem~\ref{thm:certified} fixes $T_0$, and $\CR(\LQD)$ is the supremum
over tie rules (Definition~\ref{def:lqd}), so nothing more is needed for
the headline claim.  The next theorem shows the certified value holds for every
non-clairvoyant tie rule, not only $T_0$, with the full argument deferred to
Appendix~\ref{sec:appendix-tie}.

\begin{theorem}[tie-rule invariance of the lower bound]\label{thm:tieinv}
Let $I$ be a saturated interval instance, $T_{\mathrm{ref}}$ any tie
rule (not necessarily that of Remark~\ref{rem:tierule}), and $T$
\emph{any} non-clairvoyant tie rule (its choice at each slot may depend
on the run so far, and on randomness, but not on future arrivals).
Then there is an instance $I_T$ --- constructed adaptively around $T$'s
choices, and in general not interval-structured --- with
\[
\LQD_T(I_T)\;=\;\LQD_{T_{\mathrm{ref}}}(I)
\qquad\text{and}\qquad
\OPT(I_T)\;\ge\;\OPT(I)
\]
(for a randomized $T$, on every realization of its choices).
For each $R$, apply this statement to $R$ drained copies of $I$ on
disjoint queue names and to the reference rule that repeats the given
reference run on every copy.  The two reference values scale by $R$, so
Definition~\ref{def:switch} implies that every non-clairvoyant tie rule satisfies
$\CR(\LQD_T)\ge\OPT(I)/\LQD_{T_{\mathrm{ref}}}(I)$.  With the instance
and reference run $T_0$ of Theorem~\ref{thm:certified},
$\CR(\LQD_T)\ge1.46929591\ldots$ for every $T$.
\end{theorem}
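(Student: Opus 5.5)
The plan is to build $I_T$ slot by slot, together with a bijection $\omega$ from the queue names of $I$ to those of $I_T$. The bijection is extended online, so that the run of $\LQD_T$ on $I_T$ is, after relabeling by $\omega$, the run of $\LQD_{T_{\mathrm{ref}}}$ on $I$. The only freedom a tie rule has is the one isolated by Proposition~\ref{prop:tiefreedom}. At an overflow LQD evicts down to a common level $w$, and $R$ of the tied queues keep one marginal packet above $w$. Which $R$ of them keep it is the only choice, and every other aspect of the post-overflow state is forced.

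The construction runs as follows. Suppose that, up to slot $t$, the configuration of $\LQD_T$ on $I_T$ equals the $\omega$-image of the reference configuration. At an overflow, $T$ picks some set $S$ of $R$ tied queues to keep residuals, while $T_{\mathrm{ref}}$ picks a set $S_{\mathrm{ref}}$. Tied queues at level $w$ are indistinguishable so far in their buffer contents. I will therefore redefine $\omega$ on the tied class by a permutation sending $S_{\mathrm{ref}}$ to $S$. Because $T$ is non-clairvoyant, its choice is already fixed when the adversary commits to this permutation.

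From then on, the future arrivals of $I_T$ on queue $\omega(q)$ are set equal to the future arrivals of $I$ on $q$. This is why $I_T$ need not be interval-structured: a name in $I_T$ may carry pieces of several lives of $I$. Saturation is what makes the swap harmless. Every live queue receives $B$ packets per slot, so the arrival pattern never reveals anything beyond the current levels.

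By induction, the levels and transmissions of the two runs agree queue by queue under $\omega$, which gives $\LQD_T(I_T)=\LQD_{T_{\mathrm{ref}}}(I)$. For randomized $T$ the same argument applies to each realization, since the adversary adapts to coins already flipped.

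For the offline side I need $\OPT(I_T)\ge\OPT(I)$. Here I will not try to compare optimal policies directly. Instead I will transplant an optimal schedule for $I$, namely that of \LateQD{} (Theorem~\ref{thm:lateqd}), into $I_T$ through a second bijection $\kappa$. This map sends each packet stored and transmitted in the offline run on $I$ to a packet of $I_T$ arriving in the same slot, and it must respect the per-slot, per-queue transmission constraint.

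The difficulty is that $\omega$ is piecewise in time. A queue $q$ of $I$ may be represented by different names before and after a relabeling, and an offline hoard in $q$ can then be split across two names of $I_T$. I would handle this with an exchange argument, the Lemma~\ref{lem:exchange} named in the notation table, using slot-transmission indicators $z_L,z_C$. At a relabeling time, two tied queues $q,q'$ of $I$ are both live and both saturated. Consequently, the offline run on $I_T$ can swap the \emph{future} roles of the two names, keeping the packets stored before the swap in place. A packet count then shows that the number of nonempty queues per slot is not decreased, and the buffer occupancy is unchanged.

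This exchange step is the main obstacle. One must verify that stored offline packets, whose provisional transmission times depend on the remaining life of $q$, can still be transmitted from the name that inherited $q$'s past. The crux is that, at the relabeling instant, both names are live with saturated future arrivals. The plan is to charge each transmission of $\OPT$ on $I$ to a transmission on $I_T$, using the fact that arrivals on a live queue can always replace a stored packet.

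Finally, the $R$-copy statement follows by applying the above to the concatenated instance of Remark~\ref{rem:cradditive} with the replaying reference rule. Both reference values then grow linearly in $R$, so the additive term $KB$ vanishes in the ratio. Plugging in Theorem~\ref{thm:certified} gives the stated constant.
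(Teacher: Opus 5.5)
Your online half is the paper's argument. You re-pair $\omega$ on the margin group so that $T$'s residual holders correspond to the reference run's, and route every future arrival of $I_T$ through $\omega$. That single rule subsumes the paper's three repair types: schedule swap, pure re-pairing when neither queue receives past $t$, and truncation plus resurrection. Your $R$-copy and randomized-$T$ steps also match.

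The gap is in the offline transplant, where you assume the two re-paired queues are both live and saturated. The margin group on which $T$ and $T_{\mathrm{ref}}$ disagree can contain corpses. LQD trims a dead queue whose residual tops the waterline (e.g.\ right after a fresh cohort lowers it), so one run may leave the residual on a corpse and the other on a live queue. A swap of two live queues keeps every arrival interval contiguous, so your own remark that $I_T$ is not interval-structured arises exactly from this live/corpse case, where your rule resurrects the corpse's name.

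In that case ``keeping stored packets in place'' fails. The name inheriting the live future holds the transplanted hoard $H$ of the corpse's partner. The name whose arrivals now stop at $t$ holds at most the one packet it sends at $t$. So either the pair's later transmissions fall short, or, if a second hoard is built, its occupancy exceeds the reference pair's. A packet map that preserves arrival slots cannot express the fix.

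The missing ingredients are these:
\begin{enumerate}
\item[(i)] The admit-late normal form of Lemma~\ref{lem:normalform}. Live queues hold only the packet they currently send, each post-death hoard is admitted at the death slot, and nothing is evicted, so a corpse holds exactly its remaining transmissions. \LateQD{}'s explicit form trims corpses, so you should normalize it first.
\item[(ii)] The exchange move of Lemma~\ref{lem:exchange}. Rebuild the hoard from the truncated live queue's final batch of $B$ arrivals at slot $t$, drop the corpse's stale content, and cross the offline bijection $\kappa$. Automatic transmission forces a small case analysis: if $z_L=0$ and $H'\ge1$, $L$ must admit all $H$ packets and $C$ is emptied.
\end{enumerate}
Finally, $\kappa$ must be allowed to diverge from $\omega$. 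When two corpses are re-paired online, their contents cannot be exchanged, so $\kappa$ is left unchanged there.
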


\emph{Idea of the proof.}
A tie rule's only freedom is which of the queues tied for longest keeps
a leftover \emph{residual} packet after an overflow
(Proposition~\ref{prop:tiefreedom}); tied queues of the same type are
interchangeable, so this freedom is a mere relabeling.  Rather than run
$T$ on the fixed instance $I$ --- where a lucky $T$ could raise $\LQD$'s
value and weaken the bound --- we build a tailored instance $I_T$ slot
by slot, reacting to each choice $T$ makes; this is well-founded because
$T$ is non-clairvoyant, so its slot-$t$ choice is committed before we
fix the arrivals after $t$.  Whenever $T$'s choice diverges from the
reference run we restore the match, either by swapping the two queues'
future arrival schedules or, when one queue is already dead, by the
zero-cost \emph{exchange move} (Lemma~\ref{lem:exchange},
illustrated in Figure~\ref{fig:exchange}) that hands a
corpse's hoard to a freshly killed queue; the offline normal form of
Lemma~\ref{lem:normalform} keeps this accounting exact.  The result is
$\LQD_T(I_T)=\LQD_{T_{\mathrm{ref}}}(I)$ and $\OPT(I_T)\ge\OPT(I)$, so the
exact ratio transfers; applying the construction to $R$ drained copies
removes the additive term.  The computational certificate itself stays stated for
$T_0$: the theorem transfers the certified \emph{value}, not the
certificate, and for a randomized $T$ it gives the adaptive-adversary
guarantee (on a \emph{fixed} instance a different rule can do better:
the opposite rule, which evicts from the highest-index tied queues and
therefore leaves residuals on lower indices, reaches
$\LQD=1{,}806{,}170$ against $T_0$'s $1{,}802{,}645$ at $k=400$, both
computed by the cross-check engine).

Altogether, Theorem~\ref{thm:certified} certifies
$\CR(\LQD)\ge1.46929591\ldots$ under an explicit tie rule, and
Theorem~\ref{thm:tieinv} extends it to every non-clairvoyant tie rule,
deterministic or randomized.  We turn next to the upper bound: we found
a gap in its published proof, which we repair below.

\section{The erratum: a gap in the published aggregation, and its repair}\label{sec:erratum}

In this section, we present an erratum and repair for a single step
in the proof of \citet{AEMV24}: their payment scheme argument
bounding $\CR(\LQD)$.  The plan is as follows.  In
Section~\ref{sec:aemvdefs}, we restate the definitions we need.  In
Section~\ref{sec:thegap}, we show that the proof of their aggregation
lemma (Lemma~18) does not go through: we first explain in words which
two accounting obligations collide, then construct a minimal abstract
pattern on which the claimed inequality fails by almost $\tfrac12$,
and finally realize the same pattern on an actual switch run.  In
Section~\ref{sec:therepair}, we repair the step by an amortization
over phase transitions, restoring the proof at the cost of an
explicit $\tfrac12$-per-queue charge that we absorb later, in
Section~\ref{sec:upper}.

The deficit is $\alpha$-free and grows with the instance
(Section~\ref{sec:thegap}), so it cannot be absorbed as a bounded
additive term.  Charging it directly to the
payment rate $\varrho$ would cost a full $\tfrac12$ of the rate, moving
AEMV's published bound from $1.6918$ to $\approx2.058$: above the
trivial $2$-competitiveness of \citet{HKM01}.  We therefore handle the
$\tfrac12$ within the aggregation step rather than absorbing it afterwards.
The repaired step is also the basis for the Section~\ref{sec:upper}
improvement.

\subsection{The AEMV scheme, restated from the source}\label{sec:aemvdefs}

\paragraph{The payment scheme, in brief.} AEMV bounded $\CR(\LQD)$ with a
\emph{payment scheme}, an accounting argument summarized in
Figure~\ref{fig:payment}.  Assign each queue $q$ a
\emph{debt} $\hat e_q$, the
share of the $\OPT-\LQD$ gap charged to $q$, so that
$\OPT-\LQD\le\sum_q\hat e_q$.  Treat LQD's own profit as a budget
distributed to the queues as \emph{payments}.  Suppose one can show
that every queue's debt is covered at a fixed \emph{rate} $\varrho>0$:
every queue is paid at least $\varrho\,\hat e_q$,
where $\varrho$ is the quantity to be made as large as possible.  The
scheme's potential makes the exact budget slightly larger than LQD's
profit: telescoping the per-phase inequalities below gives total payments
at most $\LQD+\Psi_{\rm initial}\le\LQD+\alpha B$.  Thus
\[
\LQD+\alpha B\;\ge\;\varrho\sum_q\hat e_q
\;\ge\;\varrho\,(\OPT-\LQD),
\quad\text{which rearranges to}\quad
\OPT\;\le\;\Bigl(1+\tfrac1\varrho\Bigr)\LQD+\tfrac\alpha\varrho\,B:
\]
a larger $\varrho$ gives a
better bound.  Both of our upper-bound edits act on this scheme:
we repair the step that aggregates a queue's per-phase payments into its
total (the erratum, their Lemma~18), and we replace their relaxed final
per-packet estimate of the worst-case payment by the exact envelope
infimum of its continuum relaxation at fixed $\alpha$ (the
improvement), which raises $\varrho$.

\begin{figure}[!t]
\centering
\begin{tikzpicture}[
  box/.style={draw,rounded corners=2pt,align=center,font=\scriptsize,
              inner sep=5pt,text width=5.6cm},
  wire/.style={->,black!70},
  lab/.style={font=\tiny,black!60,align=center,fill=white,inner sep=1.5pt}]
\node[box] (debt) at (0,0)
  {\textbf{debts} (what must be covered)\\[2pt]
   queue $q$ owes $\hat e_q$: its share of the gap,\\
   $\sum_q\hat e_q\;\ge\;\OPT-\LQD$\\[1pt]
   {\tiny (master identity $+$ mapping, Section~\ref{sec:aemvdefs})}};
\node[box] (budget) at (9.4,0)
  {\textbf{budget} (what pays for them)\\[2pt]
   LQD's own transmissions, phase by phase,\\
   plus a potential of initial value $\le\alpha B$:\\
   total payments $\le\LQD+\alpha B$};
\node[box,text width=7.2cm] (rate) at (4.7,-2.9)
  {\textbf{the payment scheme} splits the budget so that\\[2pt]
   $\operatorname{payment}_q\;\ge\;\varrho\,\hat e_q$ \quad for every queue $q$};
\node[box,text width=9.6cm] (concl) at (4.7,-5.0)
  {$\varrho(\OPT-\LQD)\;\le\;\varrho\sum_q\hat e_q\;\le\;\sum_q\operatorname{payment}_q\;\le\;\LQD+\alpha B$\\[3pt]
   $\Longrightarrow\quad \OPT\;\le\;\bigl(1+\tfrac1\varrho\bigr)\LQD+\tfrac\alpha\varrho B$};
\draw[wire] (debt.south) to[out=-90,in=90] node[lab,pos=0.5]{obligations}
  ([xshift=-2.2cm]rate.north);
\draw[wire] (budget.south) to[out=-90,in=90] node[lab,pos=0.5]{funds}
  ([xshift=2.2cm]rate.north);
\draw[wire] (rate.south) -- node[lab,right=4pt]{the larger the rate $\varrho$, the better the bound}
  (concl.north);
\end{tikzpicture}
\caption{The payment scheme of \citet{AEMV24} as an accounting
argument: if every queue is paid at least $\varrho\,\hat e_q$ out of
LQD's profit plus an $O(B)$ potential, then
$\CR(\LQD)\le1+1/\varrho$.}
\label{fig:payment}
\end{figure}
\FloatBarrier

We restate below, from the journal version of \citet{AEMV24}'s paper,
everything we use, fix, or claim broken.  The displayed imported claims
are unchanged up to notation; surrounding definitions are expanded for
self-containedness, and each item carries its exact source number.  We quote the openly available accepted
manuscript (arXiv~v2, corresponding to ACM~Trans.\ Algorithms
20(4), 2024) and use its numbering throughout, since the ACM version of
record renumbers items by section.  \emph{Proofs} of the imported
statements are by \citet{AEMV24}.

Notational changes: their buffer size $M$ is our
$B$, and their per-queue profit counter $\Phi_q$ is written as $P_q$ here
(so that the letter $\Phi$ keeps a single meaning, namely the instance
families $\Phi_k,\Phi^{\front}_k$; see the notation table in
Section~\ref{sec:model}).  When a queue $q$ is fixed by context we drop the subscript $q$.

Table~\ref{tab:aemvdeps} summarizes what we import, what each
import is used for, and where our two upper-bound contributions
(the erratum and the improvement) intervene.

\begin{table}[!t]
\centering\small
\begin{tabular}{lll}
\toprule
Imported from \citet{AEMV24} & used for & status here \\
\midrule
Section~2 master identity, Section~5 mapping & per-queue reduction & unchanged \\
Lemmas 15--16 & per-phase L-increase bounds & unchanged \\
Lemma 18 aggregation & running-max envelope bound &
\textbf{repaired} (Cor.~\ref{cor:repaired}) \\
Claim 19, Lemma 21 & per-packet endgame relaxation &
\textbf{replaced} (Thm.~\ref{thm:improved}) \\
\bottomrule
\end{tabular}
\caption{What we import from \citet{AEMV24} and where our two edits
intervene.}
\label{tab:aemvdeps}
\end{table}

In the rest of this subsection, we restate the scheme in seven blocks:
setup; one standing assumption; overflows, key steps, and phases; the
splitting scheme; live and dying queues and the average level; the two
per-phase bounds; and the aggregation with its envelope.  Each block is
tagged with its exact source location.  A reader who takes the imported AEMV
payment scheme lemmas as given can skip ahead to the erratum (``The
gap'', below) or straight to Theorem~\ref{thm:improved} in
Section~\ref{sec:upper}, returning here only when a term tagged
{\rm[AEMV, \ldots]} needs tracing back to its definition.

\paragraph{Setup and the master identity {\rm[AEMV, Section~2]}.}
Fix an instance and run LQD and OPT side by side; $s^t_{\mathrm{LQD}}(q)$
and $s^t_{\mathrm{OPT}}(q)$ denote queue $q$'s size in the respective
buffer at step $t$ (after arrivals and evictions, before transmission),
and $s^t_{\max}=\max_q s^t_{\mathrm{LQD}}(q)$.  OPT transmits an
\emph{OPT-extra} packet from $q$ at step $t$ if it transmits from $q$
while LQD's copy of $q$ is empty; \emph{LQD-extra} packets are defined
symmetrically.  With $\OPTX$ and $\LQDX$ the total counts, the \emph{master identity}
\[
\OPT-\OPTX\;=\;\LQD-\LQDX
\quad\text{holds, hence}\quad
\frac{\OPT}{\LQD}\;=\;1+\frac{\OPTX-\LQDX}{\LQD}.
\]
Writing $e_q$ for the number of
OPT-extra packets transmitted from $q$ (so $\OPTX=\sum_q e_q$), their
Section~5 mapping procedure assigns to each queue $q$ some number
$m_q\le e_q$ of transmitted LQD-extra packets, with $\sum_q m_q\le\LQDX$.
Setting $\hat e_q=e_q-m_q\ge0$, it suffices to assign each queue a payment of
at least $\varrho\,\hat e_q$ out of the LQD profit to conclude
$\OPT\le(1+1/\varrho)\LQD+(\alpha/\varrho)B$, once the potential budget
described below is included.

\paragraph{The standing assumption (A1) {\rm[AEMV, Section~3]}.}
If $s^t_{\mathrm{LQD}}(q)\le1$ and $q$ has received a packet at or
before $t$, then nothing arrives to $q$ after $t$ (w.l.o.g.: later
packets are redirected to fresh queues, which does not change LQD's
profit and cannot decrease OPT's).  Consequently a queue that drains for
LQD never refills.

\paragraph{Overflows, key steps, and phases
{\rm[AEMV, Def.~2 and Section~3]}.}
Queue $q$ \emph{overflows} at step $t$ if a packet destined to $q$ is
evicted by LQD at $t$, or if the LQD buffer is full and
\[
s^t_{\mathrm{LQD}}(q)\;\ge\;\max\{s^t_{\max}-1,\;1\}.
\]
The one-packet slack below the maximum is deliberate: two queues that
overflow at the same step may end one packet apart.  The eviction
clause matters only when a packet of $q$ (arriving or stored) is
evicted at $t$ and $q$ ends the phase \emph{empty}: the size clause
requires $s^t_{\mathrm{LQD}}(q)\ge1$ and so would miss this
case.
Write $t_q$ for $q$'s \emph{key step}: the last step at which $q$
overflows, or $t_q=-1$ if $q$ never overflows.  No packet of $q$,
stored or arriving, is evicted after $t_q$.  The \emph{phase boundaries}
$\tau_1<\tau_2<\dots$ are the distinct last-overflow steps (pooled over
all queues, so the phase grid is global, shared by every queue), and phase
$i$ is $[\tau_i,\tau_{i+1})$: this is AEMV's overflow-indexed sense
of ``phase,'' unrelated to the arrival/transmission phases of a single
slot in Definition~\ref{def:switch}.  The last phase runs to the end of
the instance: for the final index $\ell$, AEMV set
$\tau_{\ell+1}=T+1$, one past the last active step $T$.  This choice
makes the phase grid cover every step from $\tau_1$ on; in particular,
every OPT-extra packet is transmitted before the final boundary, so
the per-queue debts defined below have all drained to $0$ by then.

The analysis below uses several quantities keyed to this phase grid;
here we depart from AEMV's notation in one respect.  They superscript
\emph{both} the time step and the phase index, so that $x^i$ means the
value at step $\tau_i$, not at step $i$.  We instead reserve the superscript for a step and
mark phase $i$ with a subscript.  For a \emph{state} quantity (one that
also has a step form) the subscript is the boundary value,
$x_i:=x^{\tau_i}$ (e.g.\ $\hat e_{q,i}$, $\sigma_{q,i}$, $\Psi_i$); for a
\emph{flow} quantity accumulated over the phase (no step form) it is the
total over $[\tau_i,\tau_{i+1})$ (e.g.\ $o_{q,i}$, $o_i$, $n_i$,
$\LQD_i$, $u_i$, $v_i$).  Every displayed formula below is AEMV's under this
relabeling.  Sizes keep the step superscript:
$s^t_{\mathrm{LQD}}(q)$, or $s^{\tau_i}_{\mathrm{LQD}}(q')$ when read at a
boundary.  For a state quantity $X$ we write $\Delta_i X:=X_{i+1}-X_i$ for
its change across phase $i$; the payments $\Delta_iP_q$ and the potential
change $\Delta_i\Psi$ below are instances of this notation.

Let $e^t_q$ count the OPT-extra packets transmitted from $q$ at step
$t$ \emph{or later} (so $e_q=e^1_q$ is the total of the Setup
paragraph).  Then $\hat e^t_q:=\max\{e^t_q-m_q,0\}$ is queue $q$'s
debt still outstanding as of step $t$; it holds at the net value
$\hat e_q$ while LQD still stores $q$ (that is, while LQD's copy of
$q$ is nonempty), then
falls by one per step through the \emph{drained tail} (an OPT-extra packet needs
LQD's copy of $q$ empty, so all of them fall in that tail).  Two
structural facts follow from assumption (A1): once LQD's copy drains it never
refills, and OPT's copy receives nothing thereafter while transmitting
automatically whenever nonempty; so the OPT-extra transmissions
occupy consecutive steps, and the debt falls by exactly one per
step through the tail.  Call a step $t$ with $\hat e^t_q>0$ but
$s^t_{\mathrm{LQD}}(q)=0$ a \emph{correction step} (the term of
Lemma~16 below); the correction steps are exactly these decay steps.

\paragraph{The splitting scheme {\rm[AEMV, Section~3, eq.~(2); Section~6, eq.~(13)]}.}
The scheme carries two parameters $\alpha,\beta\in(0,1)$ with
$\alpha+\beta<1$; they enter through the potential and the payment
display below.  Let $o_{q,i}$ be
the number of packets LQD transmits from $q$ during phase $i$, let
$o_i$ be the total of these over queues already past their last
overflow from which OPT transmits at least one extra packet at some
point in the run ($t_{q'}\le\tau_i$ and $e_{q'}>0$; the prime marks a
running queue index), and let
$n_i$ be the analogous total over all other queues (so $\LQD_i=n_i+o_i$).  The
potential is $\Psi_i=\alpha\,|\mathcal A_i|$, where $\mathcal A_i$ is the
set of queues \emph{active} at $\tau_i$ (nonempty in LQD's or OPT's
buffer, i.e.\ $s^{\tau_i}_{\mathrm{LQD}}(q)\ge1$ or
$s^{\tau_i}_{\mathrm{OPT}}(q)\ge1$) whose \emph{last} overflow still lies
ahead ($t_q>\tau_i$, so they overflow again after $\tau_i$).
Let $u_i$ be the number of queues newly entering $\mathcal A$ during phase
$i$, and $v_i$ the number of queues active at $\tau_i$ that last-overflow at
$\tau_{i+1}$, hence leave $\mathcal A$ for good (membership requires the
last overflow still ahead; and by assumption (A1) a queue cannot exit
$\mathcal A$ by emptying and later return, so entries and exits are
the only two flows); then $\Delta_i\Psi=\alpha\,(u_i-v_i)$.  The
scheme assigns each $q$ with $t_q\le\tau_i$ and $\hat e_{q,i}>0$ the payment
$\Delta_iP_q=\frac{\hat e_{q,i}}{\hat e_{\bullet,i}}\,(n_i+\alpha\,o_i
-\Delta_i\Psi)+(1-\alpha)\,o_{q,i}$ with
$\hat e_{\bullet,i}=\sum_{q':\,t_{q'}\le\tau_i}\hat e_{q',i}$ (the $\bullet$
marks summation over the queue index) [eq.~(2)].  In
their Section~6 this is re-bracketed with the second parameter $\beta$
into eq.~(13):
\[
\Delta_iP_q\;=\;
\underbrace{\frac{\hat e_{q,i}}{\hat e_{\bullet,i}}\,
\bigl(n_i+\alpha\,o_i-\Delta_i\Psi\bigr)+\beta\,o_{q,i}}_{\text{L-increase}}
\;+\;\underbrace{(1-\alpha-\beta)\,o_{q,i}}_{\text{S-increase}}.
\]
These are AEMV's labels: the \emph{L-increase}, the component that
Lemmas 15--16 below lower-bound, is the object used throughout the rest
of this section and the erratum; the \emph{S-increase} is the
leftover $(1-\alpha-\beta)\,o_{q,i}$ from LQD's own transmissions in phase
$i$.  The per-phase totals are feasible:
$\sum_q\Delta_iP_q\le\LQD_i-\Delta_i\Psi$.
Moreover $|\mathcal A_i|\le B$: an active queue whose last overflow is
still ahead must be nonempty in LQD at $\tau_i$ (otherwise (A1) rules out
the future arrival needed to overflow).  Hence $0\le\Psi_i\le\alpha B$,
and telescoping the last display over all phases gives the budget
$\sum_{i,q}\Delta_iP_q\le\LQD+\alpha B$ used above.

\paragraph{Live and dying queues; the average level
{\rm[AEMV, Def.~4, Obs.~5, eq.~(3)]}.}
This is AEMV's own \emph{relative} notion of live/dying: a relation
between two queues at a step, indexed by a reference queue $q$.  It is
unrelated to the intrinsic live/dead/dying vocabulary of
Definition~\ref{def:vocab} (which describes a single queue's own
arrival interval); the two vocabularies are never mixed, and only the
relative notion is used in this section.
Fix $q$ with $\hat e_q>0$, and let $j_q:=\min\{j:\hat e_{q,j}=0\}$ be
the index of the earliest phase boundary at which $q$'s debt has fully
drained; $j_q$ is well-defined since the debt is $0$ at the final boundary
(possibly $j_q=\ell+1$).  Because $j_q$ indexes phase
\emph{boundaries}, which are coarser than steps, $\tau_{j_q}$ is the
first phase boundary at or after the right endpoint of the pending
window defined below; it need not equal that endpoint.  Here $j_q$ counts
the \emph{debt} $\hat e_q$ down to $0$, not the raw total $e_q$: the
two part ways by $m_q$, since $\hat e^t_q=\max\{e^t_q-m_q,0\}$ reaches
$0$ that many steps earlier in the drained tail.  For a phase $i$ with
$\tau_i\ge t_q$ and $i\le j_q$, a queue $q'$ storing LQD packets at
$\tau_i$ is
\emph{live w.r.t.\ $q$} if (i) $\tau_i<t_{q'}$, or (ii) $e_{q'}=0$ (that
is, $q'$ transmits no OPT-extra packet at all, in the atemporal sense of
the Setup paragraph above) and $s^{\tau_{j_q}}_{\mathrm{LQD}}(q')\ge1$;
otherwise
\emph{dying}.  Condition~(ii)'s size is read at the fixed boundary
$\tau_{j_q}$: the same for every phase $i$, not at $\tau_i$.  A
never-OPT-extra queue $q'$ counts as live for $q$'s analysis exactly
when it is still nonempty at the boundary where $q$'s debt has drained;
by (A1), a queue nonempty there was nonempty at every earlier boundary
too, since a queue that drains never refills.  Dying queues
never become live; a live queue becomes dying only by
last-overflowing [Obs.~5].  $\mathcal L_{q,i}$ denotes the live set at
$\tau_i$, and $\sigma_{q,i}$ is the average of $s^{\tau_i}_{\mathrm{LQD}}(q')$
over $q'\in\mathcal L_{q,i}$ (set to $1$ if the set is empty); if
$\hat e_{q,i}>0$ then the live set is nonempty and $\sigma_{q,i}\ge2$
[Lem.~13 and Obs.~8].

\paragraph{The two per-phase lower bounds on the L-increase
{\rm[AEMV, Lemmas 15 and 16]}.}
Both bounds use the average level $\sigma_{q,i}$ defined above.  For any queue $q$ with $\hat e_q>0$ and any phase $i$ with
$\tau_i\ge t_q$ and $\hat e_{q,i}>0$:
\begin{itemize}
\item {\rm[Lemma 15]} (rise, $\sigma_{q,i+1}\ge\sigma_{q,i}$)
the L-increase in phase $i$ is at least
\[
\hat e_{q,i}\cdot\frac{\alpha\cdot\max\bigl\{\sigma_{q,i+1}-\sigma_{q,i},\;
1-\sigma_{q,i}/\sigma_{q,i+1}\bigr\}+(\tau_{i+1}-\tau_i)}
{\sigma_{q,i}-1}\,;
\]
inside the proof of Lemma~18, AEMV weaken this to their display~(29),
the exact rise-phase provision used below:
\begin{equation*}\tag{AEMV-29}\label{eq:aemv29}
\sum_{t=\tau_i}^{\tau_{i+1}-1}
 \frac{\hat e^t_q}{\sigma_{q,i}-1}
+\alpha\hat e^{\tau_{i+1}-1}_q
 \frac{\sigma_{q,i+1}-\sigma_{q,i}}{\sigma_{q,i}-1}.
\end{equation*}
\item {\rm[Lemma 16; display (30) in the proof of Lemma 18]}
(drop, $\sigma_{q,i+1}<\sigma_{q,i}$)
for $\beta\ge1-\sqrt{1-\alpha}-\alpha/2$, the L-increase is at least
\begin{equation*}\tag{AEMV-30}\label{eq:aemv30}
\sum_{t=\tau_i}^{\tau_{i+1}-1}\frac{\hat e^t_q}{\sigma_{q,i}-1}
\;-\;\frac{g_{q,i}}{2(\sigma_{q,i}-1)}
\;-\;\hat e_{q,i+1}\Bigl(\frac{1}{\sigma_{q,i+1}-1}
-\frac{1}{\sigma_{q,i}-1}\Bigr),
\end{equation*}
where $g_{q,i}$ is the number of steps $t\in[\tau_i,\tau_{i+1})$ with
$\hat e^t_q>0$ and $s^t_{\mathrm{LQD}}(q)=0$ (the \emph{correction
steps}), and the last term is void if $\hat e_{q,i+1}=0$.
\end{itemize}
Our theorem's choice $\alpha=613/1000$, $\beta=1-\sqrt{1-\alpha}-\alpha/2$
satisfies Lemma~16's condition with equality (it is the smallest
permitted $\beta$).  Call $[\,t_q,\ \min\{t\ge t_q:\hat e^t_q=0\}\,)$ the \emph{pending
window} for $q$.  Its start is pinned to $t_q$, not to the first step
with positive debt: $\hat e^t_q$ counts transmissions at step $t$ or
later, so it already sits at the full $\hat e_q$ long before $q$
overflows.  The phases and steps below
are always understood within this window: outside it, either no phase
of $q$'s analysis exists yet, or $\hat e^t_q=0$ carries no payment
obligation.

\paragraph{The aggregation and its envelope {\rm[AEMV, Lemma 18]}.}
Their Lemma~18 carries the standing hypothesis $\alpha\le2/3$, which
both of this paper's parameter choices satisfy
(Table~\ref{tab:constants}); every statement below that routes through
the aggregation display inherits it.  The lemma aggregates the two per-phase bounds along a non-decreasing
\emph{envelope}.  Write $s_i:=\sigma_{q,i}-1$; let $i_q$ be the phase
containing $t_q$, and set $a_{i_q}=\lceil\sigma_{q,i_q}\rceil-1$
(rounded up so that the base level $b_0:=a_{i_q}$ is a positive
integer, as the sums in~\eqref{eq:thirtytwo} require; the later levels
stay real).  For
$i>i_q$, set the running maximum $a_i=\max(s_i,a_{i-1})$.  Writing $i(t)$
for the phase containing step $t$, re-index the phase sequence by steps
via $b_j=a_{i(t_q+j)}$; this gives the non-decreasing sequence
$b_0\le b_1\le\dots$ with $b_0=\lceil\sigma_{q,i_q}\rceil-1$.  The
step at a phase boundary uses the new phase's $a$-level.  After the
final boundary, extend the sequence constantly; the debts and
correction indicators are zero there.  Thus every $b_{j+1}$ below is
defined.  The envelope is a bookkeeping device
(Figure~\ref{fig:envelope}).

For later auditability, here is the unnumbered phase-level expression in
AEMV's proof before it is re-indexed by steps.  Restricting the sum to
phases at or after $t_q$ that meet the pending window, and writing
$g_i=g_{q,i}$, set
\begin{equation}\label{eq:envphase}
\mathrm{ENV}:=
\sum_{i\ge i_q:\,\hat e_{q,i}>0}\left[
 \sum_{t=\tau_i}^{\tau_{i+1}-1}\frac{\hat e^t_q}{a_i}
 +\alpha\hat e^{\tau_{i+1}-1}_q\frac{a_{i+1}-a_i}{a_i}
 -\frac{g_i}{2a_i}\right].
\end{equation}
The sum in~\eqref{eq:envphase} keeps the rise term into the terminal
(zero-debt) phase even though that phase adds no further dwell or
correction.  In particular, if the debt first
reaches zero exactly at $\tau_{i+1}$, then
$\hat e_q^{\tau_{i+1}-1}=1$ even though $\hat e_{q,i+1}=0$; the next
envelope level is still defined by the same running-maximum rule.

Equivalently, put $g_q^t=1$ at a correction step and $0$ otherwise.
Re-indexing \eqref{eq:envphase} by steps gives AEMV's display~(31):
\begin{equation*}\tag{AEMV-31}\label{eq:aemv31}
\sum_{t\ge t_q}\left[
 \hat e_q^t\frac{\alpha(b_{t-t_q+1}-b_{t-t_q})+1}{b_{t-t_q}}
 -\frac{g_q^t}{2b_{t-t_q}}\right].
\end{equation*}

\begin{figure}[!t]
\centering
\begin{tikzpicture}[scale=1.35]
\draw[->] (-0.3,0) -- (4.2,0) node[below left,font=\scriptsize] {step $j$};
\draw[->] (-0.3,0) -- (-0.3,1.9) node[rotate=90,anchor=south east,font=\scriptsize,yshift=2pt] {value};
\draw[line width=1.3pt] (0,0.70) -- (0.7,0.70) -- (0.7,1.05) -- (2.1,1.05)
                          -- (2.1,1.40) -- (3.5,1.40) -- (3.5,1.575) -- (4.0,1.575);
\draw[dashed,black!55] (0,0.70) -- (0.7,1.05) -- (1.4,0.875) -- (2.1,1.40)
                          -- (2.8,1.05) -- (3.5,1.575);
\foreach \x/\y in {0/0.70, 0.7/1.05, 1.4/0.875, 2.1/1.40, 2.8/1.05, 3.5/1.575}
  \filldraw[black!55] (\x,\y) circle (1.1pt);
\node[font=\scriptsize,anchor=west] at (4.05,1.575) {$b_j$ (envelope)};
\node[font=\scriptsize,anchor=west,black!55] at (3.55,0.95) {$\sigma_{q,i}-1$ (raw, jagged)};
\end{tikzpicture}
\caption{The envelope $b_j$: the running maximum of the raw per-phase
levels, non-decreasing even where $\sigma_{q,i}-1$
dips.}\label{fig:envelope}
\end{figure}
\FloatBarrier
The aggregated claim [their display (32)] is that, with
$m=b_0+\hat e_q$, the total L-increase is at least
\begin{equation*}\tag{AEMV-32}\label{eq:thirtytwo}
\sum_{j=0}^{b_0-1}\frac{\alpha(b_{j+1}-b_j)+1}{b_j}\,\hat e_q
\;+\;\sum_{j=b_0}^{m-1}\Bigl(\frac{\alpha(b_{j+1}-b_j)+1}{b_j}\,(\hat e_q+b_0-j)
-\frac{1}{2b_j}\Bigr),
\end{equation*}
and the S-increases contribute at least $(1-\alpha-\beta)\,b_0$ (LQD
transmits at least $b_0$ packets from $q$ at steps $\ge t_q$, their
Section~6.2).  Their remark between displays (32) and (33) states that
this bound ``holds for any positive integers $b_0$ and $\hat e_q$ and
any non-decreasing sequence of \emph{positive numbers}
$b_0,b_1,\ldots$.''  The display is evaluated at the running-max
sequence, whose levels are real (averages), but the remark licenses
treating it as a function of a free real-valued envelope, more general
than the specific running-max sequence, and bounding that function
from below.  This is the optimization Corollary~\ref{cor:repaired}
passes to; Section~\ref{sec:upper} lower-bounds it through a
finite-head continuum relaxation whose optimum is evaluated exactly.

\begin{lemma}[pending-window structure]\label{lem:windowstructure}
If $\hat e_q>0$, then $q$ overflows, its key step $t_q$ is a phase
boundary, and its pending window begins with a constant-debt block of
$L\ge b_0$ steps on which LQD transmits from $q$.
\end{lemma}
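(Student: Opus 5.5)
The proof proceeds in three steps: that $q$ overflows, that $t_q$ is a phase boundary, and that the window opens with a long constant-debt block.

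\emph{$q$ overflows.} Suppose $\hat e_q>0$. Then $e_q>0$, so OPT transmits at least one OPT-extra packet from $q$. At that step LQD's copy of $q$ is empty while OPT's copy is not. Argue by contradiction: if $q$ never overflowed, then LQD never evicted a packet destined to $q$, arriving or stored, so LQD admitted and kept every arrival to $q$. A queue that keeps every arrival transmits at every step at which it is nonempty, and it is therefore pointwise at least as full as any policy's copy under the same arrivals and the one-per-step service. Then LQD's copy could not be empty while OPT's is nonempty, so $e_q=0$, a contradiction. Hence $t_q\ge1$ exists.

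\emph{$t_q$ is a boundary.} The phase boundaries $\tau_i$ are by definition the distinct last-overflow steps pooled over all queues. Since $t_q$ is $q$'s last overflow step, it is one of them.

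\emph{The constant-debt block.} This is the substantive step and the likely obstacle. At $t_q$, $q$ overflows, so either the buffer is full with $s^{t_q}_{\mathrm{LQD}}(q)\ge\max\{s^{t_q}_{\max}-1,1\}$, or a packet of $q$ is evicted. After $t_q$ no packet of $q$ is ever evicted, so LQD transmits from $q$ on each of the next $s^{t_q}_{\mathrm{LQD}}(q)$ steps at least. During these steps LQD's copy is nonempty, so no OPT-extra packet of $q$ is sent. The debt $\hat e^t_q$ therefore stays at $\hat e_q$, which gives a constant-debt block of length
\[
L\ge s^{t_q}_{\mathrm{LQD}}(q).
\]
It remains to show $s^{t_q}_{\mathrm{LQD}}(q)\ge b_0=\lceil\sigma_{q,i_q}\rceil-1$. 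Here $i_q$ is the phase starting at $t_q$, and $\sigma_{q,i_q}$ is the average LQD level of queues live w.r.t.\ $q$ at $\tau_{i_q}=t_q$. Every such queue has size at most $s^{t_q}_{\max}$, so $\sigma_{q,i_q}\le s^{t_q}_{\max}$. Combining this with the overflow condition $s^{t_q}_{\mathrm{LQD}}(q)\ge s^{t_q}_{\max}-1$ and integrality gives
\[
s^{t_q}_{\mathrm{LQD}}(q)\;\ge\;\lceil\sigma_{q,i_q}\rceil-1=b_0 .
\]

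The delicate cases are two. First, the eviction clause can leave $q$ empty at $t_q$. I would exclude this using (A1): an empty $q$ never refills, so LQD's and OPT's copies would both have to drain, and then the mapping gives $m_q=e_q$, i.e.\ $\hat e_q=0$. I expect to lean on AEMV's Lemma~13/Obs.~8, which give $\sigma\ge2$ when $\hat e>0$. Second, the size level must be read after evictions and before transmission, consistent with the $s^t$ convention; I would check this carefully.
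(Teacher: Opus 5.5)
Your first two steps and the main line of the third match the paper's proof. Deriving $\lceil\sigma_{q,i_q}\rceil\le s^{t_q}_{\max}$ from ``average $\le$ maximum'' plus integrality is a valid self-contained replacement for the paper's citation of AEMV's Observation~7. Your worry about when sizes are read is unfounded: all sizes, including those averaged in $\sigma_{q,i_q}$, are taken after evictions and before transmission.

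The gap is in the edge case you flag yourself. When the overflow at $t_q$ comes from the eviction clause and leaves $q$ empty, the overflow definition does not supply the inequality $s^{t_q}_{\mathrm{LQD}}(q)\ge s^{t_q}_{\max}-1$ that your chain needs. Your proposed exclusion also fails as stated:
\begin{itemize}
\item Assumption (A1) only says that nothing arrives to $q$ afterwards.
\item OPT draining a nonempty copy of $q$ while LQD's copy is empty is exactly how OPT-extra packets arise.
\item The only mapping properties available to you, $m_q\le e_q$ and $\sum_q m_q\le\LQDX$, do not force $m_q=e_q$.
\item Citing $\sigma\ge2$ from Lemma~13/Obs.~8 does not close the case by itself, because nothing in your argument links $\sigma_{q,i_q}$ to the buffer state in this case.
\end{itemize}

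The missing idea is LQD's longest-queue rule, which is how the paper disposes of this case. Suppose a packet of $q$ (arriving or stored) is evicted at $t_q$ and $q$ ends empty. Just before the last eviction from $q$, the queue held one packet and was a longest queue, so every queue held at most one packet. Later evictions only lower levels, so $s^{t_q}_{\max}\le1$. Hence $s^{t_q}_{\mathrm{LQD}}(q)=0\ge s^{t_q}_{\max}-1$, and your integrality chain goes through verbatim. Equivalently, every live queue then sits at level $1$ (or the live set is empty), so $\sigma_{q,i_q}\le1$. Combined with Lemma~13/Obs.~8 and $\hat e_{q,i_q}=\hat e_q>0$, this shows the case cannot occur at all, which is the correct form of the exclusion you were after. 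The same observation also gives $s^{t_q}_{\mathrm{LQD}}(q)\ge s^{t_q}_{\max}-1$ when the eviction clause fires but $q$ stays nonempty: the last eviction from $q$ leaves it within one of the maximum. Your main line uses that case without comment.
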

\begin{proof}
If $q$ never overflowed, LQD would never discard a packet of $q$.
Queue by queue, its backlog would then dominate OPT's, so OPT could not
make an OPT-extra transmission from $q$; this would give $e_q=0$ and
hence $\hat e_q=0$, a contradiction.  Thus $t_q$ exists, and because
the global phase grid consists of the distinct last-overflow steps,
$t_q=\tau_{i_q}$ for some $i_q$.

At $t_q$, the overflow condition gives
$s^{t_q}_{\mathrm{LQD}}(q)\ge s^{t_q}_{\max}-1$.  This is immediate
from its size clause.  In the remaining edge case, a packet of $q$
(arriving or stored) is evicted and $q$ ends empty; because LQD evicts
from a longest queue, all surviving queues then have size at most one,
so the same inequality holds.  AEMV's
Observation~7 gives
$s^{t_q}_{\max}\ge\lceil\sigma_{q,i_q}\rceil$.  Hence LQD stores at
least $\lceil\sigma_{q,i_q}\rceil-1=b_0$ packets of $q$ at that step.
After $t_q$ it discards no further packet of $q$, so these packets are
transmitted in at least $b_0$ consecutive nonempty steps.  No
OPT-extra transmission from $q$ occurs while LQD's copy is nonempty,
so the outstanding debt stays at its full value $\hat e_q$ throughout
this block.  Consequently its length $L$ satisfies $L\ge b_0$.
\end{proof}

Figure~\ref{fig:window} summarizes the resulting anatomy of a pending
window: a constant-debt block of $L\ge b_0$ steps on which LQD still
stores and transmits from $q$, then a drained tail on which the debt
falls by one per step.  Because $\hat e^t_q$ counts transmissions at
step $t$ \emph{or later}, the debt sits at its full value through the
whole block and the first tail step; corrections can occur only on
the tail, where LQD's copy of $q$ is empty; and each block step also
earns the flat S-payment $h=1-\alpha-\beta$, which is how the
analysis of Section~\ref{sec:upper} profits from windows with
$L>b_0$.  The triple $(b_0,\hat e_q,L)$ the figure depicts is exactly
the parametrization that Section~\ref{sec:upper} runs on.

\begin{figure}[!t]
\centering
\begin{tikzpicture}[x=0.72cm,y=0.52cm]
\fill[black!7] (0,0) rectangle (6,4.6);
\fill[black!14] (6,0) rectangle (10,4.6);
\draw[->] (-0.4,0) -- (10.9,0) node[below left=1pt,font=\scriptsize]{step $t$};
\draw[->] (-0.4,0) -- (-0.4,5.4) node[rotate=90,anchor=south east,font=\scriptsize,yshift=2pt]{outstanding debt $\hat e^t_q$};
\draw (-0.46,4) -- (-0.34,4) node[left=3pt,font=\scriptsize]{$e$};
\draw (-0.46,1) -- (-0.34,1) node[left=3pt,font=\scriptsize]{$1$};
\draw (0,-0.06) -- (0,0.06); \node[below,font=\scriptsize] at (0,-0.1) {$t_q$};
\draw (3,-0.06) -- (3,0.06); \node[below,font=\scriptsize] at (3,-0.1) {$t_q{+}b$};
\draw (6,-0.06) -- (6,0.06); \node[below,font=\scriptsize] at (6,-0.1) {$t_q{+}L$};
\draw[line width=1.3pt]
  (0,4) -- (7,4) -- (7,3) -- (8,3) -- (8,2) -- (9,2) -- (9,1) -- (10,1) -- (10,0) -- (10.7,0);
\foreach \x in {6.5,7.5,8.5,9.5}{
  \node[font=\scriptsize,black!45] at (\x,0.42) {$\times$};}
\node[font=\tiny,black!45,anchor=west] at (8.05,0.78) {correction steps};
\node[font=\scriptsize,align=center,text width=3.9cm] at (2.6,5.25)
  {constant-debt block: $L\ge b$ steps, LQD still stores $q$};
\node[font=\scriptsize,align=center,text width=3.1cm] at (8.6,5.25)
  {drained tail: LQD's copy of $q$ is empty};
\draw[decorate,decoration={brace,mirror,amplitude=4pt},black!60]
  (0,-0.85) -- (7,-0.85) node[midway,below=4pt,font=\scriptsize,black!60]
  {canonical prefix: the $m=b+e$ steps priced by \eqref{eq:thirtytwo}};
\draw[decorate,decoration={brace,mirror,amplitude=4pt},black!60]
  (0,-2.1) -- (10,-2.1) node[midway,below=4pt,font=\scriptsize,black!60]
  {pending window: $L+e$ steps, debt positive throughout};
\end{tikzpicture}
\caption{One queue's pending window, drawn for $(b,e,L)=(3,4,6)$: the
debt holds at $e$ through the block and the first tail step, then
falls by one per step; corrections ($\times$) occur only on the tail.}
\label{fig:window}
\end{figure}
\FloatBarrier

\subsection{The gap}\label{sec:thegap}

In this subsection we present the gap itself.  We first locate the
step of AEMV's Lemma~18 where the aggregation fails, then construct a
minimal pending-window pattern on which its deficit approaches
$\tfrac12$ (Figure~\ref{fig:countermodel}), and finally show that the
resulting gap cannot be absorbed as a harmless $O(B)$ additive
constant.

\paragraph{Terms.} Everything below uses the vocabulary of the
restatement above: one queue's \emph{pending window}, cut into
\emph{phases}, with levels $s_i=\sigma_{q,i}-1$ and their running maxima
$a_i$, per-step debts $\hat e^t$, and correction steps.  Two nicknames are new: the
per-step summand $\hat e^t/s_i$ in (30) is called a phase's \emph{dwell}
term, and a phase at whose end the level falls is a \emph{drop} phase (the
``drop'' case of Lemma~16, $\sigma_{q,i+1}<\sigma_{q,i}$).  The
aggregation of Lemma~18 replaces each phase's denominator $s_i$ by the
running maximum $a_i$: larger, hence seemingly a pure weakening,
and simpler.  But for the \emph{subtracted} correction terms the larger
denominator strengthens the claim, and the gap is in that replacement.

\paragraph{In words.} In the proof of Lemma~18, the aggregation step
upgrades each phase's bookkeeping from its own level to the running
maximum, and every upgraded claim must be funded by some surplus that
the per-phase facts provide.  Two of these funding obligations can
land on the same term.
\emph{Corrections:} under the upgrade a correction term shrinks (it
divides by the larger running maximum), and the difference is paid out
of the same phase's \emph{dwell surplus}: the slack that the dwell terms
gain from the same upgrade.  \emph{Drops:} when the level falls,
Lemma~16 charges a \emph{drop penalty} (its last, subtracted term),
and the proof pays it with the first dwell term of the following phase
(the \emph{drop-penalty pairing}).  So a phase that follows a drop
starts with its first dwell term already spent; if that very step is
also a correction step, the correction has nothing left to draw on,
and the accounting falls short by up to $\tfrac12$ at that step.
Figure~\ref{fig:countermodel} shows the double spend on the family we
construct next.

\paragraph{The symbolic countermodel.} The shortfall does not depend
on $\alpha$ or $\beta$.  The family below is a countermodel to the
algebraic implication claimed from (29)/(30) to (31)/(32); by itself it
is not asserted to be a switch run.  The realized $n=2$ instance in the
next paragraph supplies a domain-valid counterexample.  The family
isolates the mechanism and shows that the algebraic loss approaches
$\tfrac12$.  Here $\mathrm{FACTS}$ is the sum of the per-phase lower bounds
(Lemma~15 or 16, whichever applies) over the phases of the pending
window.  Fix an integer $n\ge2$.
Consider a window of $n+1$ steps with debts $\hat e^t\equiv1$: the
outstanding debt sits at its full value $\hat e_q=1$ throughout the
window.  Set phase 1 to
span $n$ steps at level $s_1=n$ (average level $\sigma_{q,1}=n+1$,
giving $b_0=\lceil\sigma_{q,1}\rceil-1=n$), and let phase 2 be a
single step at level $s_2=1$ ($\sigma_{q,2}=2$, consistent with
Observation~8).  At step $n+1$, $q$'s \emph{own} level is
$s^t_{\mathrm{LQD}}(q)=0$ while the live-set average is
$\sigma_{q,2}=2$, so that
single step is the window's only correction step ($g_{q,1}=0$,
$g_{q,2}=1$).  Set the algebraic endpoint to
$\sigma_{q,3}=1<2$, so Lemma~16's drop test applies to both phases.
Since $\hat e_{q,3}=0$, the terminal drop penalty is void; any closing
value below $2$ gives the same $\mathrm{FACTS}$.  For $n=2$ the
operative debts, the values $\sigma_{q,1}=3$ and $\sigma_{q,2}=2$, and
the correction pattern all co-occur in the realized run below; its
irrelevant closing level is $4/3$ rather than $1$.  The phase grid is
$\{1,\dots,n\},\{n+1\}$: two drop phases, where step $n+1$ is a
correction step whose predecessor phase drops, so it carries the
dwell term that the drop-penalty pairing has already spent.  The
running-max envelope never sees the drop: $b_j\equiv n$ for every
$j$, so all rise terms $b_{j+1}-b_j$ in \eqref{eq:thirtytwo} vanish.  The pattern and this computation are verified by the
certificate
\certpath{certificates/erratum/aggregation_counterexample.py}
(Table~\ref{tab:certificates}).  The aggregation step in AEMV's proof
of Lemma~18 asserts $\eqref{eq:thirtytwo}\le\mathrm{FACTS}$; here,
exactly, phase 1 contributes its unit dwell total minus the drop penalty
$1-1/n$, while phase 2 contributes its unit dwell minus the correction
$1/2$.  Therefore
\begin{align*}
\mathrm{FACTS}
  &=\underbrace{1-(1-1/n)}_{\text{phase 1}}
    +\underbrace{1-1/2}_{\text{phase 2}}
    =\frac1n+\frac12,\\
\eqref{eq:thirtytwo}&=1+\frac1{2n},\\
\eqref{eq:thirtytwo}-\mathrm{FACTS}
  &=\frac12-\frac1{2n}\longrightarrow\frac12
    \qquad(n\to\infty),
\end{align*}
with no $\alpha$, $\beta$, or rise term anywhere; the deficit is
already $\tfrac12-\tfrac1{2n}$ at each finite $n$.
Thus the unrestricted algebraic implication from (29)/(30) to
(31)/(32) has a loss approaching $\tfrac12$.
(On this family the amortized potential account of
Section~\ref{sec:therepair} is tight: its potential $\Lambda$, defined in
Lemma~\ref{lem:amortizedagg}, starts at
$\Lambda(n,n)=\tfrac12$ and ends at $\Lambda(1,n)=\tfrac1{2n}$, and its
total drop $\tfrac12-\tfrac1{2n}$ exactly matches the deficit,
incurred entirely at the final correction step.)

\begin{figure}[!t]
\centering
\begin{tikzpicture}[x=1.12cm,y=0.55cm]
\draw[->] (-0.35,0) -- (5.75,0) node[right=1pt,font=\scriptsize]{step};
\draw[->] (-0.35,0) -- (-0.35,5.5) node[rotate=90,anchor=south east,font=\scriptsize,yshift=2pt]{level $s_i=\sigma_{q,i}-1$};
\draw (-0.41,4) -- (-0.29,4) node[left=3pt,font=\scriptsize]{$n$};
\draw (-0.41,1) -- (-0.29,1) node[left=3pt,font=\scriptsize]{$1$};
\draw[line width=1.3pt] (0,4) -- (4,4);
\draw[line width=1.3pt] (4,1) -- (5,1);
\draw[black!55,line width=0.7pt] (4,4) -- (4,1);
\draw[dashed,black!55] (0,4.4) -- (5,4.4);
\node[font=\scriptsize,black!55,anchor=west,align=left] at (0.05,5.05)
  {envelope $b_j\equiv n$:\\ the running max never sees the drop};
\draw[decorate,decoration={brace,amplitude=4pt},black!60]
  (0,5.75) -- (4,5.75) node[midway,above=4pt,font=\scriptsize,black!60]
  {phase 1: $n$ steps at level $n$};
\node[font=\scriptsize,black!60,align=left,anchor=west] at (5.18,1.55)
  {phase 2: one step at level $1$,\\ also a correction step ($\times$)};
\node[font=\scriptsize,black!45] at (4.5,0.42) {$\times$};
\node[font=\scriptsize,anchor=west] at (-0.35,-0.85) {debt $\hat e^t\equiv1$ on every step; $b_0=n$};
\node[font=\scriptsize,align=left,anchor=north west,black!70] at (-0.35,-1.75)
  {the drop penalty $1-\tfrac1n$ at the phase boundary\\ is paid by the last step's dwell term\,\dots};
\draw[->,black!70] (3.15,-1.9) to[out=55,in=-125] (3.92,0.85);
\node[font=\scriptsize,align=left,anchor=north west,black!70] at (5.6,-1.75)
  {\dots{} so the correction $\tfrac12$ at that\\ step is left unfunded};
\draw[->,black!70] (5.95,-1.9) to[out=125,in=-45] (4.62,0.2);
\node[draw,rounded corners=2pt,font=\scriptsize,align=left,anchor=north west,inner sep=5pt] at (6.2,5.75)
  {per-phase facts:\quad $\mathrm{FACTS}=\tfrac1n+\tfrac12$\\[3pt]
   aggregated claim:\quad \eqref{eq:thirtytwo} $=1+\tfrac1{2n}$\\[3pt]
   deficit:\quad $\tfrac12-\tfrac1{2n}\;\longrightarrow\;\tfrac12$};
\end{tikzpicture}
\caption{The countermodel to the aggregation step, drawn for $n=4$:
two drop phases under a flat envelope, with the drop penalty and the
correction drawing on the same dwell term.}
\label{fig:countermodel}
\end{figure}
\FloatBarrier

\paragraph{A realized instance.} The pattern above is also realizable
as an actual run, as promised: an explicit nine-queue, $B=8$,
five-step instance realizes it at $n=2$, with LQD run under a fixed
tie rule against a brute-forced, exactly-optimal offline policy.  On
this run $\mathrm{FACTS}=1$ and $\eqref{eq:thirtytwo}=\tfrac54$,
computed in exact rational arithmetic (Figure~\ref{fig:realizedcex};
certificate
\certpath{certificates/erratum/realized_counterexample.py},
Table~\ref{tab:certificates}).  In the run, LQD transmits $19$
packets and OPT $20$; the live-set averages $\sigma_{q_1,1}=3$ and
$\sigma_{q_1,2}=2$ give $b_0=2$ and two drop phases; OPT's single
extra transmission from $q_1$ at step~$3$, while LQD's copy of $q_1$
is already empty, is the window's only correction step; and the
window closes with $\sigma_{q_1,3}=\tfrac43<2$, as anticipated
above.  For complete reproducibility, the
nonzero arrival batches are
$t=1:q_1^3,q_2^3,q_3^3$;
$t=3:q_2^2,q_4^2,q_5^3,q_6$; and
$t=4:q_4^2,q_7^3,q_8,q_9$.  When longest queues tie, LQD evicts from
the first queue in the fixed priority
$q_1,q_5,q_4,q_2,q_7,q_3,q_6,q_8,q_9$.
Thus the inequality asserted inside
AEMV's proof fails on a real run as well; one fixed tie rule suffices,
since the published aggregation is asserted for every tie rule.
Consequently the published journal derivation of $\CR\le1.6918$ is
invalid as written; the conference version \citep{AEMV21} states the
analogous aggregation lemma with the proof omitted, so its $\le1.707$
inherits the same missing justification.  What fails is the
algebraic implication from the per-phase facts (29)/(30) to the
aggregated display, not necessarily the statement of Lemma~18 itself,
since $\mathrm{FACTS}$ only lower-bounds the true L-increase.

%
\newcommand{\rcstack}[6]{%
  \ifnum#3>0
    \foreach \l in {1,...,#3}
      \draw[rcS] ({#1}, {#2+(\l-1)*0.20}) rectangle ({#1+0.36}, {#2+\l*0.20});
  \fi
  \ifnum#4>0
    \foreach \l in {1,...,#4}
      \draw[rcA] ({#1}, {#2+(#3+\l-1)*0.20}) rectangle ({#1+0.36}, {#2+(#3+\l)*0.20});
  \fi
  \ifnum#5>0
    \foreach \l in {1,...,#5}{
      \draw[rcE] ({#1}, {#2+(#3+#4+\l-1)*0.20+0.06}) rectangle ({#1+0.36}, {#2+(#3+#4+\l)*0.20+0.06});
      \draw[red!70!black,line width=0.7pt]
        ({#1+0.07},{#2+(#3+#4+\l-1)*0.20+0.10}) -- ({#1+0.29},{#2+(#3+#4+\l)*0.20+0.02})
        ({#1+0.07},{#2+(#3+#4+\l)*0.20+0.02})  -- ({#1+0.29},{#2+(#3+#4+\l-1)*0.20+0.10});
    }
  \fi
  \ifnum#6>0
    \ifnum#3>0
      \draw[rcT] ({#1}, {#2}) rectangle ({#1+0.36}, {#2+0.20});
    \else
      \draw[rcTA] ({#1}, {#2}) rectangle ({#1+0.36}, {#2+0.20});
    \fi
  \fi
}
\begin{figure}[!t]
\centering
\begin{tikzpicture}[x=0.98cm,y=1cm,
  rcS/.style={fill=black!25,draw=black!60,line width=0.4pt},
  rcA/.style={fill=white,draw=blue!60!black,line width=0.8pt},
  rcE/.style={draw=black!50,dashed,line width=0.4pt},
  rcT/.style={fill=black!45,draw=black!60,line width=0.4pt},
  rcTA/.style={fill=black!45,draw=blue!60!black,line width=0.8pt},
  ttl/.style={font=\scriptsize},
  lbl/.style={font=\scriptsize},
  liv/.style={font=\tiny,circle,draw,inner sep=0.7pt},
  tag/.style={font=\tiny},
  qn/.style={font=\tiny}]

\node[lbl,anchor=east] at (-0.35,0.35) {LQD};
\node[lbl,anchor=east] at (-0.35,-1.40) {OPT};

\foreach \x in {-0.30, 3.85, 7.15}
  \draw[black!35,dashed,line width=0.5pt] (\x,0.80) -- (\x,-2.70);
\node[tag,black!60] at (1.75,0.72) {phase 1};
\node[tag,black!60] at (5.50,0.72) {phase 2};
\node[tag,black!60] at (9.60,0.72) {phase 3 (post-window)};

\draw[line width=0.7pt] (-0.08,1.00) -- (6.75,1.00);
\draw[line width=0.7pt] (-0.08,1.00) -- (-0.08,0.90)  (6.75,1.00) -- (6.75,0.90);
\node[tag] at (3.10,1.14) {pending window of $q_1$ (debts $\hat e^t=1,1,1$)};

\node[ttl] at (0.60,1.45) {$t{=}1$ ($\tau_1{=}t_q$)};
\rcstack{0.00}{0}{0}{2}{1}{1}
\rcstack{0.42}{0}{0}{3}{0}{1}
\rcstack{0.84}{0}{0}{3}{0}{1}
\node[qn] at (0.18,-0.32) {$q_1$};
\node[liv] at (0.60,-0.32) {$q_2$};
\node[liv] at (1.02,-0.32) {$q_3$};
\node[tag] at (0.18,-0.62) {$\hat e_{q_1}{=}1$};
\rcstack{0.00}{-1.75}{0}{3}{0}{1}
\rcstack{0.42}{-1.75}{0}{2}{1}{1}
\rcstack{0.84}{-1.75}{0}{3}{0}{1}
\node[lbl,anchor=west] at (-0.10,-2.45)
  {$\sigma_{q_1,1}{=}\tfrac{3+3}{2}{=}3$,\ $s_1{=}2$,\ $b_0{=}2$};

\node[ttl] at (2.62,1.45) {$t{=}2$};
\rcstack{2.20}{0}{1}{0}{0}{1}
\rcstack{2.62}{0}{2}{0}{0}{1}
\rcstack{3.04}{0}{2}{0}{0}{1}
\node[qn] at (2.38,-0.32) {$q_1$};
\node[qn] at (2.80,-0.32) {$q_2$};
\node[qn] at (3.22,-0.32) {$q_3$};
\rcstack{2.20}{-1.75}{2}{0}{0}{1}
\rcstack{2.62}{-1.75}{1}{0}{0}{1}
\rcstack{3.04}{-1.75}{2}{0}{0}{1}

\node[ttl] at (5.30,1.45) {$t{=}3$ ($\tau_2$)};
\draw[black!40,dotted,line width=0.5pt] (4.26,0.00) rectangle (4.62,0.20);
\rcstack{4.68}{0}{1}{1}{1}{1}
\rcstack{5.10}{0}{1}{0}{0}{1}
\rcstack{5.52}{0}{0}{2}{0}{1}
\rcstack{5.94}{0}{0}{2}{1}{1}
\rcstack{6.36}{0}{0}{1}{0}{1}
\node[qn] at (4.44,-0.32) {$q_1$};
\node[liv] at (4.86,-0.32) {$q_2$};
\node[qn] at (5.28,-0.32) {$q_3$};
\node[liv] at (5.70,-0.32) {$q_4$};
\node[liv] at (6.12,-0.32) {$q_5$};
\node[qn] at (6.54,-0.32) {$q_6$};
\node[tag] at (4.44,-0.62) {corr.\ step};
\rcstack{4.26}{-1.75}{1}{0}{0}{1}
\node[tag] at (4.44,-1.97) {OPT-extra};
\rcstack{4.68}{-1.75}{0}{2}{0}{1}
\rcstack{5.10}{-1.75}{1}{0}{0}{1}
\rcstack{5.52}{-1.75}{0}{1}{1}{1}
\rcstack{5.94}{-1.75}{0}{2}{1}{1}
\rcstack{6.36}{-1.75}{0}{1}{0}{1}
\node[lbl,anchor=west] at (4.00,-2.45)
  {$\sigma_{q_1,2}{=}\tfrac{2+2+2}{3}{=}2$,\ $s_2{=}1$};

\node[ttl] at (8.60,1.45) {$t{=}4$ ($\tau_3{=}\tau_{j_q}$)};
\rcstack{7.55}{0}{1}{0}{0}{1}
\rcstack{7.97}{0}{1}{1}{1}{1}
\rcstack{8.39}{0}{1}{0}{0}{1}
\rcstack{8.81}{0}{0}{2}{1}{1}
\rcstack{9.23}{0}{0}{1}{0}{1}
\rcstack{9.65}{0}{0}{1}{0}{1}
\node[liv] at (7.73,-0.32) {$q_2$};
\node[liv] at (8.15,-0.32) {$q_4$};
\node[liv] at (8.57,-0.32) {$q_5$};
\node[liv] at (8.99,-0.32) {$q_7$};
\node[liv] at (9.41,-0.32) {$q_8$};
\node[liv] at (9.83,-0.32) {$q_9$};
\rcstack{7.55}{-1.75}{1}{0}{0}{1}
\rcstack{7.97}{-1.75}{0}{2}{0}{1}
\rcstack{8.39}{-1.75}{1}{0}{0}{1}
\rcstack{8.81}{-1.75}{0}{2}{1}{1}
\rcstack{9.23}{-1.75}{0}{1}{0}{1}
\rcstack{9.65}{-1.75}{0}{1}{0}{1}
\node[lbl] at (8.60,-2.45) {$\sigma_{q_1,3}{=}\tfrac43$ (all six live)};

\node[ttl] at (11.05,1.45) {$t{=}5$};
\rcstack{10.84}{0}{1}{0}{0}{1}
\rcstack{11.26}{0}{1}{0}{0}{1}
\node[qn] at (11.02,-0.32) {$q_4$};
\node[qn] at (11.44,-0.32) {$q_7$};
\rcstack{10.84}{-1.75}{1}{0}{0}{1}
\rcstack{11.26}{-1.75}{1}{0}{0}{1}

\node[lbl,align=center] at (5.55,-3.35)
  {$\mathrm{FACTS}
   =\bigl[\sum_{t=1,2}\tfrac{\hat e^t}{s_1}
     -\hat e_{q_1,2}(\tfrac1{s_2}-\tfrac1{s_1})\bigr]
   +\bigl[\tfrac{\hat e^3}{s_2}-\tfrac{g_{q_1,2}}{2s_2}\bigr]
   =\tfrac12+\tfrac12=1$\\[2pt]
   $1<\tfrac54=\eqref{eq:thirtytwo}$ at the flat envelope
   $b_j\equiv2$; here $g_{q_1,1}=0$ and $g_{q_1,2}=1$.};
\node[tag] at (5.55,-4.20)
  {AEMV's proof asserts $\eqref{eq:thirtytwo}\le\mathrm{FACTS}$; the
   run refutes this internal inequality, not LQD's competitive ratio.};
\end{tikzpicture}
\caption{The realized counterexample ($n=2$), full run: LQD (top) and
the optimal offline state (bottom) at each step, drawn as in
Figure~\ref{fig:modelstep} after arrivals and evictions.  Gray cells:
stored packets; blue frames: admitted this step; crossed cells:
evicted or declined; dotted outline: empty queue; the dark bottom
cell departs in transmission.  Circled names are live w.r.t.\ $q_1$,
and dashed verticals mark the phase borders.}
\label{fig:realizedcex}
\end{figure}
\FloatBarrier

\paragraph{The gap grows linearly with the instance.} The failure cannot hide in
the fixed $O(B)$ additive slack that the competitive ratio tolerates
(Remark~\ref{rem:cradditive}).  Concatenate in time $r$ drained copies
of the realized $n=2$ instance of Figure~\ref{fig:realizedcex} under
fresh queue names (each copy runs to completion before the next
begins), keeping the buffer fixed at $B=8$ as $r$ grows.  Each
copy starts from an empty switch and replays identically, so each
copy's pending-window queue ($q_1$ in the figure) contributes the
same deficit of $\tfrac14$ between $\mathrm{FACTS}$ and
\eqref{eq:thirtytwo}.  (OPT also decomposes per copy: colour the
packets by copy; restricting any offline schedule to one colour only
removes buffer load, giving a feasible schedule of the single-copy
instance, so at most $20$ packets of each colour are transmitted:
$\OPT\le20r$, attained by replaying the single-copy optimum copy by
copy.)  The deficits add; the certificate
\certpath{certificates/erratum/realized_counterexample.py} re-runs its
entire certification at $r=2$ to confirm this.  The published chain's
claimed inequality therefore overshoots what its premises justify by
$\tfrac r4$, which grows linearly with $r$, so no $O(B)$ budget
absorbs it.  Unlike Remark~\ref{rem:cradditive}'s concatenation,
which preserves the \emph{true} ratio, this one grows only the
proof's bookkeeping gap.  Thus the gap cannot be dismissed as an
$O(B)$ term; our repair absorbs each window's deficit (at most
$\tfrac12$, Lemma~\ref{lem:amortizedagg}) locally, within that window's own
payment.  We repair the step before using it.

\subsection{The repair}\label{sec:therepair}

Charging the $\tfrac12$ against the rate is not an option: as noted at
the start of Section~\ref{sec:erratum}, it would cost a full $\tfrac12$ of $\varrho$
and overshoot $2$.  The reason is dimensional: $\varrho$ is a rate
(payment per unit debt), while the deficit is an absolute constant.
The corrected step pays each queue at least $\varrho\,\hat e_q-\tfrac12$,
so a queue with the smallest possible credit $\hat e_q=1$ alone forces
any uniformly salvageable rate down to $\varrho-\tfrac12$ (via the
standing $\CR\le1+1/\varrho$).  With AEMV's own rate
$\rhostar\approx1.4455$, this would salvage only the guarantee
$\CR\le1+1/(\rhostar-\tfrac12)\approx2.058$.  The $\tfrac12$ must instead be charged
\emph{additively}, out of the payment slack each queue already carries
above $\varrho\,\hat e_q$.

In Section~\ref{sec:thegap}, we have shown where the aggregation
double-spends; now we make the double spend explicit and pay for it.
Recall the mechanism: passing to the running-max envelope funds each
phase's drop penalty out of the \emph{next} phase's first dwell term
(the drop-penalty pairing), leaving that dwell already spent; when the
next phase is itself a drop whose own first step carries a correction,
the same term must cover both the drop penalty and the correction, and
it is the correction that goes uncovered.  One amortized lemma pays
every such double-spent step by telescoping a potential together with
the unused surplus that rise phases leave behind, and the resulting
$\tfrac12$ loss is absorbed later, by the exact finite-head reserve of
Lemma~\ref{lem:reserve}.

\begin{lemma}[amortized corrected aggregation]\label{lem:amortizedagg}
Let $\alpha\ge\tfrac12$.  Fix one pending window, relabel its phases
$0,1,\ldots$, put $s_i=\sigma_{q,i}-1$, and let
$a_0=b_0\ge s_0$ and $a_i=\max\{s_i,a_{i-1}\}$.  Write
$\mathrm{FACTS}$ for the sum of the applicable per-phase bounds
\eqref{eq:aemv29}--\eqref{eq:aemv30}, and let $\mathrm{ENV}$ be
\eqref{eq:envphase} (equivalently, \eqref{eq:aemv31}).  With
\[
\Lambda(s,a):=\frac12+\frac1{2a}-\frac1{2s},
\]
one has
\begin{equation}\label{eq:amortizedagg}
\mathrm{FACTS}\ge \mathrm{ENV}-\Lambda(s_0,b_0)
\ge \mathrm{ENV}-\frac12.
\end{equation}
The threshold $\alpha\ge\tfrac12$ is sharp for this potential
argument: the rise inequality used below fails for every
$\alpha<\tfrac12$.  The abstract family of Section~\ref{sec:thegap}
asymptotically attains the loss $\tfrac12$.
\end{lemma}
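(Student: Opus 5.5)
The plan is a potential-function telescoping over the phases of the fixed pending window. For each phase $i$ I will write $\mathrm{FACTS}_i$ for its applicable bound (\eqref{eq:aemv29} in a rise phase, \eqref{eq:aemv30} in a drop phase) and $\mathrm{ENV}_i$ for its bracket in \eqref{eq:envphase}. The goal is the per-phase amortized inequality
\[
\mathrm{FACTS}_i-\mathrm{ENV}_i\;\ge\;\Lambda(s_{i+1},a_{i+1})-\Lambda(s_i,a_i).
\]
Summing over the window then gives $\mathrm{FACTS}-\mathrm{ENV}\ge\Lambda_{\rm end}-\Lambda(s_0,b_0)$.

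The two endpoint facts are immediate.
\begin{itemize}
\item Since $\sigma\ge2$ whenever the debt is positive (Lemma~13 and Observation~8 of \citet{AEMV24}), we have $s\ge1$. Hence $\Lambda(s,a)\ge\frac1{2a}>0$, and the terminal potential may simply be dropped.
\item Since $b_0\ge s_0$, we have $\Lambda(s_0,b_0)\le\frac12$, which is the second inequality of \eqref{eq:amortizedagg}.
\end{itemize}

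Before the case analysis I will record the structural facts of Lemma~\ref{lem:windowstructure}. The debt $\hat e^t_q$ is non-increasing. Correction steps occur only on the drained tail, where each carries debt at least $1$, so $g_i\le\sum_t\hat e^t_q$ over phase $i$. The last-step debt $E_i=\hat e_q^{\tau_{i+1}-1}$ is at least $\hat e_{q,i+1}$.

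\emph{Drop phases} ($s_{i+1}<s_i$, so $a_{i+1}=a_i$ and the rise term of $\mathrm{ENV}_i$ vanishes). The required inequality becomes
\[
\Bigl(D_i-\tfrac{g_i}{2}\Bigr)\Bigl(\tfrac1{s_i}-\tfrac1{a_i}\Bigr)-\hat e_{q,i+1}\Bigl(\tfrac1{s_{i+1}}-\tfrac1{s_i}\Bigr)\;\ge\;\tfrac1{2s_i}-\tfrac1{2s_{i+1}},
\]
with $D_i$ the dwell total. The drop penalty is the problematic term. I will not charge it within phase $i$. Instead I will shift it, as AEMV's pairing does, into phase $i+1$'s first dwell term $\hat e_{q,i+1}/s_{i+1}$: I will regroup $\mathrm{FACTS}$ so that each phase's account starts with its predecessor's penalty already deducted. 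The only leftover is the case where that first step is also a correction. There the uncovered amount is $\frac1{2s_{i+1}}$ against the running maximum, and this is exactly what the potential change $\frac1{2s_i}-\frac1{2s_{i+1}}$ is calibrated to absorb. On the countermodel family of Section~\ref{sec:thegap} this case is tight.

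\emph{Rise phases} ($s_{i+1}\ge s_i$). Here $a_{i+1}\ge a_i$. The envelope claims $\alpha E_i(a_{i+1}-a_i)/a_i$, while \eqref{eq:aemv29} provides $\alpha E_i(s_{i+1}-s_i)/s_i$ plus dwell surplus. Meanwhile $\Lambda$ changes by $\frac1{2a_{i+1}}-\frac1{2a_i}-\frac1{2s_{i+1}}+\frac1{2s_i}$. Bounding this with $E_i\ge1$ reduces to a two-variable inequality, which I expect is where the factor $\alpha\ge\frac12$ enters: for $s_i\le a_i$ and $a_{i+1}=\max(s_{i+1},a_i)$, the rise provision $\alpha(s_{i+1}-s_i)/s_i$ must dominate the increase of $\frac1{2s_i}-\frac1{2s_{i+1}}$ net of the envelope's claim. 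I will split it into the subcases $s_{i+1}\le a_i$ (envelope flat) and $s_{i+1}>a_i$ (envelope rises). The second subcase uses $\frac{x-y}{y}\ge\frac{x-y}{x}$ with coefficient $\alpha$ against $\frac12$. Sharpness will follow by letting $s_{i+1}\downarrow s_i$ with $E_i=1$.

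The main obstacle is the drop/correction bookkeeping: making the regrouped $\mathrm{FACTS}$ (penalty moved to the next phase) consistent with a potential that depends only on $(s,a)$ and not on the debts. If the debt-weighted drop penalty cannot be made to cancel exactly against the next phase's first dwell surplus, I would first try strengthening the per-phase target using $\hat e_{q,i+1}\ge1$ and monotonicity of debts, so that any excess debt only helps.
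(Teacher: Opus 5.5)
Your outline follows the paper's proof: the same potential $\Lambda$, and AEMV's pairing of each drop penalty with the first dwell term of the target phase. The double spend is isolated, as in the paper, at a correction on that first step when the target phase itself drops. The rise case is split into $s_{i+1}\le a_i$ and $s_{i+1}>a_i$, with $\alpha\ge\frac12$ entering through $2\alpha s_{i+1}\ge1$, and everything is telescoped.

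The step you flag as the main obstacle, however, does not close as you state it. Put $z_i:=1/s_i-1/a_i$, so that $\Lambda(s_i,a_i)=(1-z_i)/2$.
\begin{itemize}
\item At a chargeable target the uncovered correction is $\frac1{2s_{i+1}}-\frac1{2a_{i+1}}=z_{i+1}/2$.
\item A drop ($a_{i+1}=a_i$) releases only $\Lambda(s_i,a_i)-\Lambda(s_{i+1},a_{i+1})=(z_{i+1}-z_i)/2$.
\item These agree only when $z_i=0$, as on the countermodel, where $s_i=a_i$.
\end{itemize}
The missing $z_i/2$ has to come from the paired dwell itself. Pay the penalty $\hat e_{q,i+1}(1/s_{i+1}-1/s_i)$ out of $\hat e_{q,i+1}/s_{i+1}$. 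The remainder $\hat e_{q,i+1}/s_i$ exceeds the envelope's dwell claim $\hat e_{q,i+1}/a_{i+1}$ by $\hat e_{q,i+1}z_i\ge z_i$. Hence the shortfall is
\[
c_{i+1}=[z_{i+1}/2-\hat e_{q,i+1}z_i]^+\le[z_{i+1}/2-z_i]^+\le(z_{i+1}-z_i)/2 .
\]
Your fallback via $\hat e_{q,i+1}\ge1$ is exactly this step, and it is required, not optional.

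Relatedly, once the penalty is moved, the regrouped accounts cannot satisfy $\mathrm{FACTS}_i-\mathrm{ENV}_i\ge\Lambda_{i+1}-\Lambda_i$ phase by phase. The shortfall sits on the first step of phase $i+1$, but it must be paid by the potential change of the incoming transition $i\to i+1$. The inequality therefore has to be stated per transition, as in the paper's $r_i-c_{i+1}\ge(z_i-z_{i+1})/2$, where $r_i$ is the unused rise surplus. Your pairing of $\frac1{2s_i}-\frac1{2s_{i+1}}$ with the target already does this implicitly.

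Two smaller repairs are also needed.
\begin{itemize}
\item The telescope must stop at the last pending state $f$. At the zero-debt boundary the bound $\sigma\ge2$ no longer applies, and $s$ can fall below $1$. The realized run closes at $\sigma_{q_1,3}=\frac43$, giving $\Lambda(\frac13,2)=-\frac34$, so your argument for dropping $\Lambda_{\rm end}$ fails there. Instead, note that a terminal drop penalty is void and a terminal rise leaves a nonnegative surplus; then drop $\Lambda(s_f,a_f)\ge0$.
\item For sharpness, $s_{i+1}\downarrow s_i$ alone does not produce a failure; you need $2\alpha s_{i+1}<1$. Take last-step debt $1$, $s_i=1<s_{i+1}<1/(2\alpha)$ and $a_i\ge s_{i+1}$. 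Then $r_i-(z_i-z_{i+1})/2=(s_{i+1}-1)(2\alpha s_{i+1}-1)/(2s_{i+1})<0$.
\end{itemize}
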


\begin{proof}
The proof has three parts.  First we match each claim in
$\mathrm{ENV}$ against the per-phase provision that covers it, and
isolate the only steps at which the coverage can fail; the possible
shortfall there is quantified by the charges $c_{i+1}$ of
\eqref{eq:badcharge} below.  Then we prove a transition inequality,
\eqref{eq:transitionamort}, trading these charges against the unused
rise surpluses \eqref{eq:risesurplus} and the change of the potential
$\Lambda$ across the transition.  Finally we telescope
\eqref{eq:transitionamort} over the window: the total shortfall is
bounded by the total fall of $\Lambda$, which never exceeds its
initial value $\Lambda(s_0,b_0)\le\tfrac12$.

The expression $\mathrm{ENV}$ contains a dwell claim $\hat e^t/a_i$
at each pending step, a rise claim
$\alpha\hat e^{\tau_{i+1}-1}(a_{i+1}-a_i)/a_i$ at each transition,
and a correction claim $-1/(2a_i)$ at each correction step.
The per-phase facts provide the corresponding dwell terms with
denominator $s_i$, the rise bonuses with the raw levels, the drop
penalties from \eqref{eq:aemv30}, and corrections $-1/(2s_i)$ in drop
phases.  In a rise phase, \eqref{eq:aemv29} is the raw-level
form retained here: it supplies these dwell terms and the
raw-level bonus.  The preceding stronger Lemma~15 display dominates
\eqref{eq:aemv29} because the debt is nonincreasing and the maximum
contains $s_{i+1}-s_i$.

Pair each drop penalty with the first dwell term of its target phase.
Every $\mathrm{ENV}$ term is then covered, except possibly when that
target step is a correction and the target phase itself drops.  At such
a target the predecessor drop, the target dwell, and the target
correction leave the shortfall
\begin{equation}\label{eq:badcharge}
c_{i+1}:=\left[
 \frac1{2s_{i+1}}-\frac1{2a_{i+1}}
 -\hat e_{q,i+1}\left(\frac1{s_i}-\frac1{a_{i+1}}\right)
 \right]^+;
\end{equation}
set $c_{i+1}=0$ at every other target.  Indeed, an unpaired dwell
covers its envelope dwell because $s_i\le a_i$; at an ordinary
correction its dwell surplus covers the denominator change because
$\hat e^t\ge1$; and a correction in a non-drop phase only lowers
$\mathrm{ENV}$, since \eqref{eq:aemv29} carries no correction charge.
These cases are disjoint and exhaust the pending steps.

At a rise transition $i\to i+1$, the raw bonus leaves the unused
surplus
\begin{equation}\label{eq:risesurplus}
r_i:=\alpha\hat e^{\tau_{i+1}-1}_q
\left(\frac{s_{i+1}-s_i}{s_i}
      -\frac{a_{i+1}-a_i}{a_i}\right)\ge0;
\end{equation}
set $r_i=0$ otherwise.  The bracket is nonnegative by the definition
of the running maximum.  If the target state is still pending, then
$\hat e_q^{\tau_{i+1}-1}\ge1$.  If the target boundary has zero debt,
there is no target charge $c_{i+1}$; any terminal rise surplus
$r_i\ge0$ can be discarded.  Thus, with both sums restricted to
transitions whose targets are still pending, the preceding coverage gives
\begin{equation}\label{eq:cover}
\mathrm{FACTS}\ge\mathrm{ENV}-\sum_i c_{i+1}+\sum_i r_i.
\end{equation}

It remains to amortize the right-hand side.  Set
$z_i:=1/s_i-1/a_i$.  Since $1\le s_i\le a_i$ throughout a pending
window, $0\le z_i\le1$ and
$\Lambda(s_i,a_i)=(1-z_i)/2$.  We claim that every transition between
pending states satisfies
\begin{equation}\label{eq:transitionamort}
r_i-c_{i+1}\ge\frac{z_i-z_{i+1}}2.
\end{equation}

For a drop, $a_{i+1}=a_i$, $z_{i+1}\ge z_i$, and $r_i=0$.  If its
target is chargeable, \eqref{eq:badcharge} becomes
\[
c_{i+1}=\left[\frac{z_{i+1}}2-\hat e_{q,i+1}z_i\right]^+
\le\left[\frac{z_{i+1}}2-z_i\right]^+
\le\frac{z_{i+1}-z_i}2;
\]
otherwise the same final bound is immediate from $c_{i+1}=0$.  This is
\eqref{eq:transitionamort}.

For a rise the target cannot be chargeable, so $c_{i+1}=0$.  If
$s_{i+1}\le a_i$, then $a_{i+1}=a_i$ and
\[
r_i-\frac{z_i-z_{i+1}}2
\ge
\frac{(s_{i+1}-s_i)(2\alpha s_{i+1}-1)}
     {2s_i s_{i+1}}\ge0.
\]
If instead $s_{i+1}>a_i$, then $a_{i+1}=s_{i+1}$, $z_{i+1}=0$, and
\[
r_i-\frac{z_i-z_{i+1}}2
\ge
\frac{(a_i-s_i)(2\alpha s_{i+1}-1)}{2s_i a_i}\ge0.
\]
Both inequalities use $\alpha\ge\tfrac12$ and $s_{i+1}\ge1$.
For the asserted sharpness, fix $\alpha<\tfrac12$, take last-step debt
one, $s_i=1$, choose
$1<s_{i+1}<1/(2\alpha)$, and take $a_i\ge s_{i+1}$.  The first case is
then an equality, and
\[
r_i-\frac{z_i-z_{i+1}}2
=\frac{(s_{i+1}-1)(2\alpha s_{i+1}-1)}{2s_{i+1}}<0.
\]
Thus this transition inequality really fails below $\alpha=1/2$.
A flat transition changes neither $z$ nor the charge and is immediate.

Finally sum \eqref{eq:transitionamort} over transitions between pending
states.  If $f$ is the last pending state, then
\[
\sum_i(c_{i+1}-r_i)
\le\frac{z_f-z_0}{2}
\le\frac{1-z_0}{2}=\Lambda(s_0,b_0)\le\frac12.
\]
Combining this telescope with \eqref{eq:cover} proves
\eqref{eq:amortizedagg}.
\end{proof}

\begin{corollary}[repaired aggregation]\label{cor:repaired}
Let $\alpha\ge\tfrac12$ and $\beta\ge1-\sqrt{1-\alpha}-\alpha/2$
(Lemma~16's condition).  The standing hypotheses $\alpha\le2/3$ and
$\alpha+\beta<1$ of Section~\ref{sec:aemvdefs} are also in force; the
latter ensures the S-increase $(1-\alpha-\beta)\,b_0$ is nonnegative.
For every queue $q$ with $\hat e_q>0$ and
$b_0=\lceil\sigma_{q,i_q}\rceil-1$, the splitting scheme assigns $q$ a
total payment of at least $(1-\alpha-\beta)\,b_0$ plus
\eqref{eq:thirtytwo} evaluated at the running-max envelope, minus
$\tfrac12$.  Since that envelope is only known to be a real-valued
non-decreasing sequence with initial value $b_0$, this bound may be
relaxed to the \emph{infimum} of \eqref{eq:thirtytwo} over all such
sequences: $q$'s payment is at least $(1-\alpha-\beta)\,b_0$ plus that
infimum, minus $\tfrac12$.
\end{corollary}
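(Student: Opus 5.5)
The plan is to chain three ingredients: AEMV's per-phase lower bounds, Lemma~\ref{lem:amortizedagg}, and AEMV's own step from \eqref{eq:aemv31} to \eqref{eq:thirtytwo}. First, fix $q$ with $\hat e_q>0$. By Lemma~\ref{lem:windowstructure}, $q$ overflows and $t_q=\tau_{i_q}$ is a phase boundary, so its pending window consists of whole phases from $i_q$ on. The payment $q$ receives over the window is the sum of its per-phase payments $\Delta_iP_q$. By eq.~(13) each of these splits into an L-increase and an S-increase. The S-increases sum to $(1-\alpha-\beta)\sum_i o_{q,i}$. Lemma~\ref{lem:windowstructure} says LQD transmits from $q$ on at least $b_0$ steps at or after $t_q$, so this sum is at least $(1-\alpha-\beta)b_0$, and it is nonnegative because $\alpha+\beta<1$.

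Second, lower-bound the L-increases phase by phase. In a rise phase, Lemma~15 applies in its weakened form \eqref{eq:aemv29}. In a drop phase, Lemma~16 applies in the form \eqref{eq:aemv30}, which is where the hypothesis $\beta\ge1-\sqrt{1-\alpha}-\alpha/2$ is needed. Summing over the phases of the window with positive debt gives total L-increase $\ge\mathrm{FACTS}$.

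Third, apply Lemma~\ref{lem:amortizedagg}, which needs $\alpha\ge\tfrac12$, to get $\mathrm{FACTS}\ge\mathrm{ENV}-\tfrac12$. Here $\mathrm{ENV}$ is \eqref{eq:envphase}, or equivalently \eqref{eq:aemv31} after re-indexing by steps.

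The step I expect to be the main obstacle is checking that $\mathrm{ENV}$, at the running-max envelope $b_j$, is at least \eqref{eq:thirtytwo}. This is the purely algebraic half of AEMV's Lemma~18 and does not involve the faulty pairing. I would use the window anatomy of Lemma~\ref{lem:windowstructure} and Figure~\ref{fig:window}. The debt equals $\hat e_q$ on the first $L\ge b_0$ steps and then falls by one per step. Corrections occur only in the drained tail, at most one per step. Every summand of \eqref{eq:aemv31} has the form $\hat e^t\,(\alpha(b_{j+1}-b_j)+1)/b_j-g^t/(2b_j)$ with positive coefficient. So lengthening the constant block beyond $b_0$ and truncating the sum to the canonical prefix of $m=b_0+\hat e_q$ steps can only decrease it, provided each tail summand is nonnegative. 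That holds because $\hat e^t\ge1$ and $1/b_j-1/(2b_j)\ge0$. After this, the summands match \eqref{eq:thirtytwo} term by term: the first sum covers $j<b_0$ and the second covers the tail with debt $\hat e_q+b_0-j$. The one point to verify is the placement of the tail when $L>b_0$. Shifting the decreasing debt profile earlier only pairs larger debts with smaller $j$, and the coefficients do not simply decrease in $j$. I would settle this as AEMV do, by citing their derivation of (32) from (31), which the erratum shows is unaffected.

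Finally, the envelope produced by the running-max rule is a real non-decreasing sequence with initial value $b_0$. Hence \eqref{eq:thirtytwo} evaluated at it is at least the infimum of \eqref{eq:thirtytwo} over all such sequences, which gives the relaxed form of the statement.
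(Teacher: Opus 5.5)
Your proposal is correct and follows the paper's proof. The payment splits into the L-increase (at least $\mathrm{FACTS}$) and the S-increase (at least $(1-\alpha-\beta)b_0$); Lemma~\ref{lem:amortizedagg} gives $\mathrm{FACTS}\ge\mathrm{ENV}-\tfrac12$, the published (31)$\to$(32) passage gives $\mathrm{ENV}\ge$ \eqref{eq:thirtytwo} at the running-max envelope, and the relaxation to the infimum is immediate. Your tail-placement worry needs no rearrangement argument: since $L\ge b_0$, the canonical weights ($\hat e_q$ for $j\le b_0$, then $\hat e_q+b_0-j$) are pointwise at most the true debt at the same step index, and the blanket corrections only over-subtract, which is how the paper justifies that passage just before Lemma~\ref{lem:masterpayment}.
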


\begin{proof}
Lemma~\ref{lem:amortizedagg} gives
$\mathrm{FACTS}\ge\mathrm{ENV}-\tfrac12$; $\mathrm{ENV}$ dominates
\eqref{eq:thirtytwo} at the running-max envelope (this passage,
from their step-sum (31) to their (32), is sound as published), and
the S-increase bound supplies the $(1-\alpha-\beta)\,b_0$ term.
The queue's payment is its L-increase part (at least
$\mathrm{FACTS}$, the summed per-phase bounds) plus its S-increase part
(at least $(1-\alpha-\beta)\,b_0\ge0$), so
$\mathrm{payment}_q\ge\mathrm{FACTS}+(1-\alpha-\beta)b_0\ge
\eqref{eq:thirtytwo}-\tfrac12+(1-\alpha-\beta)b_0$, which is the claim.
\end{proof}

Corollary~\ref{cor:repaired} restores the aggregation inequality, at the
cost of an explicit $\tfrac12$-per-queue charge; Section~\ref{sec:upper}
absorbs that charge and improves the constant by replacing AEMV's
per-packet endgame relaxation with an exactly solved continuum
optimization.

\section{The improved upper bound}\label{sec:upper}

With the aggregation repaired (Corollary~\ref{cor:repaired}), in this
section we sharpen the final optimization, tightening AEMV's
published $\CR(\LQD)\le1.6917948$ to $\mathbf{1.683652}$
(Theorem~\ref{thm:improved}).  We import the repaired aggregation,
solve its envelope relaxation with the mandatory first dwell retained,
and use the resulting finite-head reserve to absorb the residual
$\tfrac12$ per queue.  Our analysis covers the whole integer parameter
domain directly; only the two shortest windows require elementary
unaggregated calculations.

In more detail, we prove Theorem~\ref{thm:improved} in
five steps.  Step~1 combines the repaired aggregation with the full
S-payment into one master inequality per queue
(Lemma~\ref{lem:masterpayment}): a queue with head level $b$, debt
$e$, and block length $L$ is paid at least
$hL+E^{\inf}_{b,e}-\tfrac12$, where $E^{\inf}_{b,e}$ is the infimum of
the aggregated display over all envelopes.  Step~2 bounds
$E^{\inf}_{b,e}$ from below by a continuum optimization that keeps the
discrete head step intact, and solves that optimization exactly
(Lemma~\ref{lem:headclosed}): the optimal envelope jumps once, right
after the head, and then stays constant, which gives a closed form
$K_b(x)$ in the single variable $x=e/b$ (Figure~\ref{fig:prefix}).
Step~3 passes to the $b\to\infty$ limit $J_{\rm cont}$
(Lemma~\ref{lem:jclosed}), whose minimum over $x$ is the paper's rate:
it exceeds $\hat\varrho=1.4627334$, and $1+1/\hat\varrho\le1.683652$.
Step~4 returns to finite $b$ and shows that at every integer pair
$(b,e)$ except $(1,1)$ and $(2,1)$ the finite-head bound retains a
reserve strictly above the repair's $\tfrac12$
(Lemma~\ref{lem:reserve}; Figure~\ref{fig:casemap} maps the case
analysis).  Step~5 settles the two exceptional pairs: a window with a
spare transmission ($L>b$) recovers through one extra S-payment, and
the two shortest windows ($L=b$) are checked directly from the
unaggregated per-phase facts (Lemma~\ref{lem:basewindows}).

\subsection{The endgame of \texorpdfstring{\cite{AEMV24}}{[AEMV24]} (Claim 19, Lemma 21) and its constants}

Recall the aggregated display~\eqref{eq:thirtytwo}, AEMV's~(32)
restated in Section~\ref{sec:aemvdefs} and cited by this one number
throughout.  AEMV's Lemma~18
bounds the L-increase part (Section~\ref{sec:aemvdefs}) of each queue's
payment from below by a sum
over the running-max envelope $b_0\le b_1\le\cdots$.
Section~\ref{sec:therepair} showed this bound holds only up to a
$\tfrac12$-per-queue slack, which the finite-head reserve below
absorbs.  AEMV's endgame then bounds this sum from below \emph{per
packet}, which is not tight; we instead lower-bound it by its
\emph{exact} continuum infimum over envelopes.

Claim~19 from \cite{AEMV24} lower-bounds each summand of
\eqref{eq:thirtytwo}: it prices the $f$-th OPT-extra packet, of prefix
length $m_f=b_0+f$, at the prefix's endpoint level $b_{m_f}$, and
minimizes over that endpoint per packet (inside its own proof) at
$b_{m_f}=(m_f-\tfrac32)/\alpha$; the paragraph closing this subsection
examines where this loses.  Their Lemma~21 then optimizes, over real
$x>0$, the resulting closed form
\[
g_\alpha(x)\;=\;\frac{1-\alpha-\beta}{x}
+\alpha\Bigl(1+\frac1x\Bigr)\ln(1+x)+\alpha\ln\frac1\alpha,
\qquad x=\hat e_q/b_0,
\]
(eq.~(41) in \cite{AEMV24}).  Here $\beta$ is fixed to its Lemma-16
minimum $1-\sqrt{1-\alpha}-\alpha/2$, so it too is a function of
$\alpha$ alone: that is why $g_\alpha$ carries only the subscript
$\alpha$.  For each scheme parameter
$\alpha$ this inner infimum is the extracted payment rate
$\rhostar(\alpha)=\inf_x g_\alpha(x)$; choosing $\alpha$ to
maximize it gives $\alpha\approx0.618906$ and
$\max_\alpha\rhostar(\alpha)\ge1.4455154$, hence, via
$\CR\le1+1/\varrho$, their journal bound $\CR(\LQD)\le1.6917948$.
(The earlier conference version \citep{AEMV21} gave $1.707$.)  Table~\ref{tab:constants}
records both parameter choices to full precision: theirs,
recomputed here as a 60-digit check on $g_\alpha$; and ours, from
Theorem~\ref{thm:improved}.  In both rows $\beta$ is the smallest
value permitted by AEMV's Lemma~16.  Their rate is
$\max_\alpha\rhostar(\alpha)$; ours is a certified lower bound on a
different, larger functional of the same payment expression, the
infimum of the payment rate over envelopes, which
Theorem~\ref{thm:improved} solves exactly.  The improvement comes
from that exact solution, not from the lower $\alpha$ itself: the
exact solution shifts the optimal $\alpha$ downward, and $613/1000$
is a nearby rational chosen so that the constants certify exactly.

\begin{table}[!t]
\centering\small
\begin{tabular}{lllll}
\toprule
& $\alpha$ & $\beta=1-\sqrt{1-\alpha}-\alpha/2$ & rate & $\CR(\LQD)\le$ \\
\midrule
AEMV, optimized & $0.61890600\ldots$ & $0.07321888\ldots$
& $1.44551540\ldots$ & $1.69179477\ldots$ \\
This paper, Theorem~\ref{thm:improved} & $613/1000$ (exact)
& $0.07140675\ldots$ & $\ge1.4627334$ & $1.683652$ \\
\bottomrule
\end{tabular}
\caption{The two parameter choices and their rates.}
\label{tab:constants}
\end{table}

Theorem~\ref{thm:improved} keeps everything up to and including the
repaired aggregation (Corollary~\ref{cor:repaired}), discards Claim~19,
and replaces it by the continuum envelope relaxation, solved exactly in
closed form, with the finite-$b_0$ refinements certified
separately.

\paragraph{Where AEMV's endgame is not tight.}
Recall Claim~19's pricing from the start of this subsection: the
$f$-th OPT-extra packet is priced at its prefix's \emph{endpoint}
level $b_{m_f}$ (specifically, every dwell term of the prefix after
the first is priced there), and the endpoint is then minimized per
packet, at $b_{m_f}=(m_f-\tfrac32)/\alpha$.  Each per-packet bound is tight only for
envelopes that sit at that packet's private endpoint level from the
first step on, and no single non-decreasing envelope does so for every
packet simultaneously, so slack is unavoidable.  The
remark after (32) (Section~\ref{sec:aemvdefs}) licenses optimizing
their payment expression over envelopes; the closed form below shows
quantitatively that this infimum \emph{strictly} exceeds the constant
their chain extracts: at AEMV's own
$\alpha^*$ the envelope-optimized rate is $\approx1.462555$, exceeding their
$\rhostar\approx1.44552$ by $\approx0.017$.  The
slack is recovered by optimizing the same payment expression
directly, with no new structural lemma about LQD.

\subsection{The theorem and its proof}

\begin{theorem}[improved upper bound]\label{thm:improved}
For every instance and every tie rule,
$\OPT\le(1+1/\hat\varrho)\LQD+(\alpha/\hat\varrho)B$, where
$\hat\varrho:=1.4627334$ and $\alpha=613/1000$.  Consequently
\[
\CR(\LQD)\le1+1/\hat\varrho
=1.683651579980\ldots<\mathbf{1.683652}.
\]
The remaining parameter is
$\beta=1-\sqrt{1-\alpha}-\alpha/2$.  The analytic chain is
Lemma~\ref{lem:windowstructure}, Lemma~\ref{lem:amortizedagg},
Corollary~\ref{cor:repaired}, and Lemmas~\ref{lem:masterpayment},
\ref{lem:headclosed}, \ref{lem:jclosed}, \ref{lem:reserve}, and
\ref{lem:basewindows};
its numerical leaves are certified by outward interval arithmetic.
\end{theorem}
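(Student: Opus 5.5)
The plan follows the five-step outline announced at the start of this section, reducing everything to a per-queue payment inequality and then to the scheme's budget. The global step is as in Figure~\ref{fig:payment}. If every queue $q$ with $\hat e_q>0$ receives payment at least $\hat\varrho\,\hat e_q$, then the budget bound $\sum_{i,q}\Delta_iP_q\le\LQD+\alpha B$ and the master identity with the Section~5 mapping give $\hat\varrho(\OPT-\LQD)\le\LQD+\alpha B$. This is the claimed inequality, for every tie rule, since none of the imported lemmas depends on one. So the whole proof is the per-queue claim $\mathrm{payment}_q\ge\hat\varrho\,e$, with $b=b_0$, $e=\hat e_q$ and $L$ as in Lemma~\ref{lem:windowstructure}.

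\textbf{Step 1 (master inequality).} Combine Corollary~\ref{cor:repaired} with Lemma~\ref{lem:windowstructure}. The block of $L\ge b$ LQD transmissions earns the flat S-payment $h=1-\alpha-\beta$ per step, not only $hb$. This gives
\[
\mathrm{payment}_q\ge hL+E^{\inf}_{b,e}-\tfrac12,
\]
where $E^{\inf}_{b,e}$ is the infimum of \eqref{eq:thirtytwo} over real non-decreasing envelopes starting at $b$.

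\textbf{Step 2 (head-preserving continuum relaxation).} Keep the first (head) dwell term exactly, since it is forced. Replace the remaining sums by integrals, using $1/b_j\ge\int_j^{j+1}dt/b(t)$-type comparisons, which are valid because $b$ is non-decreasing. This gives a lower bound by a functional of a monotone path. For fixed $x=e/b$ I would show that the minimizing path jumps once right after the head to a level $c$ and then stays constant. The argument is that the rise terms $\alpha\,\Delta b\,(\text{remaining debt})/b$ penalize later rises more than earlier ones, because the remaining debt decreases. Then minimize over $c$ explicitly, which yields a closed form $K_b(x)$. Optimality of the one-jump path would be certified by a supporting-line argument: exhibit an affine minorant of the convex integrand in $1/b$ that is tight at the proposed path.

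\textbf{Steps 3--5 (constants and reserve).} First let $b\to\infty$ to get $J_{\rm cont}(x)$. Show $\inf_x\bigl(J_{\rm cont}(x)+h/x\bigr)/1>\hat\varrho$ at $\alpha=613/1000$, the $h/x$ coming from $hb$ per unit debt. This step uses calculus plus outward interval enclosures near the critical point and monotonicity bounds in the tails $x\to0,\infty$. Then for finite $b$ show that $hb+K_b(x)b-\hat\varrho e$ exceeds $\tfrac12$ for every integer pair $(b,e)$ except $(1,1)$ and $(2,1)$. For large $b$ this follows from the explicit $O(1)$ head advantage over the continuum. For small $b$ and large $e$ it follows from the linear growth of the surplus in $e$. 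A finite grid in between is checked by interval arithmetic.

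For the two exceptional pairs, if $L>b$ the extra $h\ge\tfrac12$-sized term is insufficient alone, so I would verify directly that $h$ plus the deficit suffices. If $L=b$, the window has one or two phases, and I would evaluate the unaggregated per-phase facts \eqref{eq:aemv29}--\eqref{eq:aemv30} by hand, bypassing the $\tfrac12$ loss.

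The main obstacle is Step 4. The reserve must strictly exceed $\tfrac12$ uniformly over infinitely many integer pairs, and the margin is thin near the minimizing $x$ at small $b$. I would first try to prove a monotonicity of the reserve in $b$ at fixed $x$, which would reduce the problem to a finite check.
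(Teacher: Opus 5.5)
Your chain is the paper's own. It runs: the budget-plus-mapping reduction to $\operatorname{payment}_q\ge\hat\varrho\,\hat e_q$; the master inequality with the full S-payment $hL$; a head-preserving continuum relaxation minimized by a one-jump envelope and certified by a supporting line; the limit $J_{\rm cont}$; a reserve above $\tfrac12$ off $(1,1),(2,1)$; and, at those two pairs, the extra $h$ when $L>b$ and a direct unaggregated computation when $L=b$.

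\textbf{The Step 4 quantity is wrong as written.} You propose to check $hb+K_b(x)\,b-\hat\varrho e$, but $K_b$ bounds $E^{\inf}_{b,e}$ per unit of debt. The reserve is therefore $\Xi_b(x)=hb+e\,K_b(x)-\hat\varrho e=bx\bigl(J^\#_b(x)-\hat\varrho\bigr)$. Your expression exceeds this by $(b-e)K_b(x)$, which is unbounded when $e<b$, so verifying it proves nothing there; for $e>b$ it is needlessly lossy.

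\textbf{The monotonicity you want to try first is the paper's route.} For $b\ge3$ one has $C_b\ge\tfrac23>\alpha$, so $K_b$ is on its logarithmic branch. With $u=1/(bC_b(x))$ and $\mu=J_{\rm cont}-\hat\varrho$ you get $\Xi_b(x)=bx\mu(x)+x-\alpha bx\ln(1+u)$. Hence, for real $b$, $\partial_b\Xi_b=x\mu(x)+\alpha x\bigl(u-\ln(1+u)\bigr)\ge0$; this is where Step~3's $\mu\ge0$ enters. What remains is a one-variable problem at $b=3$ over real $x$, not a finite check. The paper closes it with strict convexity of $\Xi_3$ and two endpoint tangents on $[0.62,0.72]$. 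The rays $b=1,2$ follow from convexity plus one value and one derivative at $x=2$ and $x=1$ respectively. Your alternative (large-$b$ asymptotics, growth in $e$, a finite grid) could be made rigorous, but it needs explicit constants that this route avoids.

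\textbf{Step 2 has a backwards heuristic and a missing exactness condition.} Because the weight decreases, a later rise is \emph{cheaper}, not more expensive. It loses because its benefit, lower dwell on the remaining mass, shrinks faster: pointwise the tail would prefer a decreasing envelope, which monotonicity forbids. The paper's certificate is the supporting line of $e^{-u}$, with $u=\ln y$, at $u=k=\max\{0,\ln(C_b(x)/\alpha)\}$.

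More important, your comparison must absorb the blanket corrections exactly. With the step path $y(\tau)=b_{\lfloor b\tau\rfloor}/b$ and the trapezoid $W(\tau)=\min(x,1+x-\tau)$, the continuum dwell of a decay step is $(W_j-\tfrac12)/b_j$. So $-1/(2b_j)$ is precisely the midpoint loss, and rises are compared via $\ln(1+u_j)\le u_j$ with right-endpoint weights. This is what makes the post-head mass exactly $C_b(x)=1-\frac1b+\frac x2$. Keeping left-endpoint weights and bounding the corrections separately by $-1/(2b)$ would instead lose about $\frac x2\bigl(1-\alpha/C_b(x)\bigr)$ in $\Xi_b$. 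Near the minimizer of $\Xi_3$ that is more than $0.1$, while the paper's bound $0.5097$ clears $\tfrac12$ by only about $0.01$.

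\textbf{Two smaller points.} The base window $(2,1,2)$ has $L+e=3$ pending steps, so it can span up to three phases, not two. You should also record that $\alpha=613/1000$ satisfies $\tfrac12\le\alpha\le\tfrac23$ and $\alpha+\beta<1$, which the repair and AEMV's Lemma~18 require.
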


Write $h:=1-\alpha-\beta$ for the S-increase that each LQD
transmission within the window earns, and, for a queue with debt
$e:=\hat e_q>0$, write
\[
b:=b_0=\lceil\sigma_{q,i_q}\rceil-1,
\qquad x:=e/b.
\]
Let $L$ be the number of steps of the pending window at which LQD
still stores $q$ (the \emph{constant-debt block}), so the window has
$L+e$ steps in total.  To pin down the boundary: since $\hat e^t$ counts
transmissions at step $t$ \emph{or later}, the debt holds at $e$ on
the $L$ block steps \emph{and} on the first step of the drained tail,
and only then falls by one per step, ending the window at value $1$.
Lemma~\ref{lem:windowstructure} gives $L\ge b$.

For integers $b,e\ge1$, let $m=b+e$ and define the
\emph{discrete envelope infimum}
\begin{equation}\label{eq:einf}
E^{\inf}_{b,e}:=
\inf_{b=b_0\le b_1\le\cdots\le b_m}
\left\{
\sum_{j=0}^{b-1}\frac{\alpha(b_{j+1}-b_j)+1}{b_j}\,e
+\sum_{j=b}^{m-1}\left(
 \frac{\alpha(b_{j+1}-b_j)+1}{b_j}(e+b-j)-\frac1{2b_j}
\right)\right\},
\end{equation}
where the infimum is over positive real envelope levels.  The endpoint
$b_m$ is included because the last rise term uses it.  The two sums
mirror the two blocks of the canonical prefix in
Figure~\ref{fig:window}: the first $b$ steps carry the full weight
$e$, and the decay steps carry the declining weights together with the
blanket corrections $-1/(2b_j)$.

Display~\eqref{eq:thirtytwo} prices the window's first $m=b+e$ steps:
the \emph{canonical prefix}, matching a window with $L=b$ exactly (its
weights put the value $e$ on $b+1$ steps, then decline to $1$).  For
this prefix, $b_m$ is the actual next envelope level: it is the level
at the next step when the cut is internal, and the boundary level just
after the last pending step when the cut is terminal.  For
completeness, here is why the imported step-sum passage covers every
$L\ge b$.  On the priced steps the canonical weights ($e$ for $j<b$,
then $e+b-j$) never exceed the true debt: the true debt holds at $e$
through all $L\ge b$ block steps, and on any priced tail step
$e-(j-L)\ge e-(j-b)$.  The display's blanket corrections $-1/(2b_j)$
on every $j\ge b$ only over-subtract, since true correction steps
occur only on the drained tail (LQD still stores $q$ on block steps).
And each unpriced step beyond the prefix contributes a nonnegative
net amount: its dwell $w/b_j$ with weight $w\ge1$ covers its (at most
one) correction $-1/(2b_j)$, and rise terms are nonnegative.  Hence
the full window's aggregated step-sum ($\mathrm{ENV}$) dominates \eqref{eq:thirtytwo} evaluated
at the envelope of its first $m$ steps.  That prefix is feasible in
\eqref{eq:einf}, for every $L\ge b$.

\begin{lemma}[master payment reduction]\label{lem:masterpayment}
For a queue with debt $e=\hat e_q>0$, head level $b=b_0$, and
constant-debt block length $L\ (\ge b)$, with $h=1-\alpha-\beta$ and
$E^{\inf}_{b,e}$ as in \eqref{eq:einf},
\[
\operatorname{payment}_q
\;\ge\;hL+E^{\inf}_{b,e}-\frac12.
\]
\end{lemma}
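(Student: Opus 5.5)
The plan is to split $q$'s total payment into the L-increase and S-increase parts of eq.~(13) and bound each separately, reusing Corollary~\ref{cor:repaired} for the L-part. The only change is in the S-part: the corollary uses the weaker $hb$ coming from AEMV's Section~6.2, and here we want $hL$. We combine the two pieces at the end.

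\emph{S-part.} By Lemma~\ref{lem:windowstructure}, the pending window begins at the phase boundary $t_q=\tau_{i_q}$ with a constant-debt block of $L$ steps, and on each of them LQD transmits a packet of $q$. Every phase $i\ge i_q$ meeting the window has $t_q\le\tau_i$ and $\hat e_{q,i}=e>0$ on the block, so $q$ is among the queues paid in that phase. Its S-increase there is $h\,o_{q,i}$, and $o_{q,i}$ counts at least the block transmissions falling in phase $i$. Summing over these phases, the S-increases total at least $hL$. Since $\alpha+\beta<1$, we have $h\ge0$, so phases outside the block or transmissions after it can only add; no block step is counted twice because phases partition time.

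\emph{L-part.} Lemma~\ref{lem:amortizedagg} gives $\mathrm{FACTS}\ge\mathrm{ENV}-\tfrac12$, and FACTS lower-bounds the summed L-increases by Lemmas~15--16. Next we need $\mathrm{ENV}\ge\eqref{eq:thirtytwo}$ evaluated on the first $m=b+e$ levels of the running-max envelope. This is the argument given just before the lemma:
\begin{itemize}
\item the canonical weights never exceed the true debt when $L\ge b$;
\item the blanket corrections only over-subtract;
\item each unpriced step beyond the prefix contributes nonnegatively, since its dwell covers its at most one correction and its rise terms are nonnegative.
\end{itemize}
That envelope prefix is a nondecreasing positive real sequence with $b_0=b$, so it is feasible in \eqref{eq:einf}. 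Hence $\eqref{eq:thirtytwo}\ge E^{\inf}_{b,e}$, and the L-part is at least $E^{\inf}_{b,e}-\tfrac12$.

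Adding the two parts gives $\mathrm{payment}_q\ge hL+E^{\inf}_{b,e}-\tfrac12$. The main point to verify carefully is that the S-increase is genuinely counted on all $L$ block steps, not just the $b$ steps AEMV use. In particular we must check that the payment formula assigns $q$ its $h\,o_{q,i}$ in every phase overlapping the block, including the phase containing $t_q$ itself. This holds because the condition $t_q\le\tau_i$ is met with equality at $i=i_q$. A second thing to check is that the terminal-boundary convention for $b_m$ matches the one used in ENV.
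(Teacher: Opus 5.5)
Your proposal is correct and follows the paper's proof. The paper also keeps the repaired L-increase bound, obtained from Lemma~\ref{lem:amortizedagg} via Corollary~\ref{cor:repaired} together with the canonical-prefix domination and the feasibility of that prefix in $E^{\inf}_{b,e}$. It likewise upgrades the S-part from $hb$ to $hL$: each of the $L$ block transmissions lies in a paid phase with $\tau_i\ge t_q$ and $\hat e_{q,i}\ge e>0$, and the S- and L-accounts are disjoint. The terminal convention for $b_m$ that you flag is the one fixed in the canonical-prefix paragraph just before the lemma, so nothing further is needed.
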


\begin{proof}
Corollary~\ref{cor:repaired} gives the L-increase envelope bound with
loss at most $1/2$.  Its displayed $hb$ S-increase is only the lower
bound obtained after retaining $b$ of the $L$ constant-debt steps.
In fact each of those $L$ steps is a step at which LQD still stores and
transmits from $q$ (every nonempty queue transmits once per step,
Definition~\ref{def:switch}); since $t_q$ is itself a phase boundary,
each such step lies in a phase with $\tau_i\ge t_q$ and
$\hat e_{q,i}\ge e>0$, so AEMV's S-increase pays $h$ per such transmission
[their eq.~(13)].  The S- and L-increase accounts are disjoint, so the
full S-payment is $hL$.  The canonical-prefix argument above and
the definition of $E^{\inf}_{b,e}$ lower-bound the repaired L-increase
by $E^{\inf}_{b,e}-1/2$.  Adding the two contributions proves the
claim.  In particular, the repair does not spend the restored
S-payment $h(L-b)$.
\end{proof}

The next lemma is the technical core of the improvement: it bounds the
discrete envelope infimum by a continuum optimization that retains the
discrete head step, and then solves that optimization exactly.
Figure~\ref{fig:prefix} shows the two objects side by side: the
discrete weights of the canonical prefix, and the continuum trapezoid
together with the optimal envelope, which jumps once and stays
constant.  Rescaling steps by $b$ turns the blanket corrections
$-1/(2b_j)$ into exact midpoint terms of the continuum dwell, so the
passage to the continuum loses nothing.

\begin{figure}[!t]
\centering
\begin{tikzpicture}
\begin{scope}[x=0.62cm,y=0.5cm]
\draw[->] (-0.4,0) -- (7.6,0) node[right=1pt,font=\scriptsize]{step $j$};
\draw[->] (-0.4,0) -- (-0.4,5.0) node[rotate=90,anchor=south east,font=\scriptsize,yshift=2pt]{weight $W_j$};
\draw (-0.46,4) -- (-0.34,4) node[left=3pt,font=\scriptsize]{$e$};
\foreach \j/\w in {0/4,1/4,2/4,3/4,4/3,5/2,6/1}{
  \draw[fill=black!12] (\j+0.12,0) rectangle (\j+0.88,\w);}
\draw[decorate,decoration={brace,mirror,amplitude=3pt},black!60]
  (0.12,-0.5) -- (2.88,-0.5) node[midway,below=3pt,font=\scriptsize,black!60,align=center,text width=1.5cm]{$j<b$:\\ weight $e$};
\draw[decorate,decoration={brace,mirror,amplitude=3pt},black!60]
  (3.12,-0.5) -- (6.88,-0.5) node[midway,below=3pt,font=\scriptsize,black!60,align=center,text width=1.9cm]{$j\ge b$: decay, $-\tfrac1{2b_j}$ each};
\node[font=\small] at (3.5,-3.6) {(a) the weights of \eqref{eq:thirtytwo}};
\end{scope}
\begin{scope}[xshift=7.2cm,x=2.5cm,y=0.5cm]
\draw[->] (-0.12,0) -- (2.65,0) node[right=1pt,font=\scriptsize]{$\tau$};
\draw[->] (-0.12,0) -- (-0.12,5.0);
\draw (1,-0.06) -- (1,0.06); \node[below=3pt,font=\scriptsize] at (1,0) {$1$};
\draw (2.333,-0.06) -- (2.333,0.06); \node[below=3pt,font=\scriptsize] at (2.333,0) {$1{+}x$};
\draw[line width=1.3pt] (0,1.333) -- (1,1.333) -- (2.333,0);
\node[font=\scriptsize,anchor=west] at (1.42,1.30) {$W(\tau)=\min(x,1{+}x{-}\tau)$};
\draw[dashed,line width=1pt,black!60] (0,1) -- (0.333,1) -- (0.333,4.353) -- (2.333,4.353);
\draw (-0.18,1) -- (-0.06,1) node[left=3pt,font=\scriptsize]{$1$};
\node[font=\scriptsize,black!60,anchor=west] at (0.42,4.75) {optimal envelope $y_0$: one jump, then constant};
\node[font=\scriptsize,black!60,align=left,anchor=west] at (0.48,2.85)
  {head of width $\delta=1/b$ at level $1$,\\ then $y_0=C_b(x)/\alpha$};
\node[font=\small] at (1.25,-3.6) {(b) the continuum relaxation and its optimizer};
\end{scope}
\end{tikzpicture}
\caption{The object minimized in Lemma~\ref{lem:headclosed}, for
$(b,e)=(3,4)$: the discrete weights of the canonical prefix (a), and
the continuum trapezoid with the optimal one-jump envelope (b).}
\label{fig:prefix}
\end{figure}
\FloatBarrier

\begin{lemma}[finite-head envelope]\label{lem:headclosed}
For integers $b,e\ge1$, put $x=e/b$ and
\[
C_b(x):=1-\frac1b+\frac x2,
\qquad
\psi_\alpha(C):=
\begin{cases}
C, & C\le\alpha,\\[2pt]
\alpha\bigl(1+\ln(C/\alpha)\bigr), & C\ge\alpha .
\end{cases}
\]
Then
\[
\frac{E^{\inf}_{b,e}}e\;\ge\;K_b(x),
\qquad
K_b(x):=\frac1b+\psi_\alpha\!\bigl(C_b(x)\bigr),
\qquad
J_b^\#(x):=\frac hx+K_b(x).
\]
Here $K_b$ is the exact optimum of the finite-head continuum
relaxation of the envelope charge; $J_b^\#$ includes the base
S-payment $hb/e=h/x$.  Only this one-sided bound enters the proof.
\end{lemma}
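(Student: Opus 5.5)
Rescale time by $b$, bound the discrete sum below by a continuum functional of a piecewise-constant envelope, and then show by a supporting-line argument that the one-jump envelope minimizes that functional.

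\textbf{Step 1: rescaling.} Fix any admissible envelope $b=b_0\le b_1\le\dots\le b_m$ in \eqref{eq:einf} and put $y_j:=b_j/b$, so $y_0=1$. Put $\delta:=1/b$, rescaled time $\tau=j/b$, and the continuum weight $W(\tau)=\min(x,\,1+x-\tau)$ on $[0,1+x]$. Let $Y(\tau):=y_j$ on $[j\delta,(j+1)\delta)$.

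\textbf{Step 2: discrete-to-continuum comparison.} I check that every summand dominates its continuum counterpart, divided by $e=xb$.
\begin{itemize}
\item \emph{Head step} ($j=0$). Its dwell $e/b_0$ contributes exactly $1/b$ to $E/e$. This is the additive $1/b$ in $K_b$.
\item \emph{Dwell terms.} For $1\le j<b$, the term $e/b_j$ equals $b\int_{j\delta}^{(j+1)\delta}W/Y\,d\tau$, since $W\equiv x$ there. For $j\ge b$, the weight $e+b-j$ together with the blanket correction $-1/(2b_j)$ gives $(e+b-j-\tfrac12)/b_j$. This equals $b\int_{j\delta}^{(j+1)\delta}W/Y\,d\tau$ exactly, by the midpoint rule for the linear piece of $W$. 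So, after division by $e=xb$, the dwell part is exactly $\frac1x\int_\delta^{1+x}W/Y\,d\tau$.
\item \emph{Rise terms.} I use $(y_{j+1}-y_j)/y_j\ge\ln(y_{j+1}/y_j)$. The discrete weight of the jump at $\tau=(j+1)\delta$ is at least $b\,W((j+1)\delta)$, because $W$ is nonincreasing. This includes the head jump at $\tau=\delta$, which carries weight $x$.
\end{itemize}
Hence
\[
\frac{E}{e}\;\ge\;\frac1b+\frac1x\left[\int_\delta^{1+x}\frac{W(\tau)}{Y(\tau)}\,d\tau+\alpha\int_{[\delta,1+x)}W(s)\,d\ln Y(s)\right],
\]
where $Y(\delta^-)=1$ and the jump at $\delta$ is included in the Stieltjes integral.

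\textbf{Step 3: minimization by a supporting line (main obstacle).} The claim is that the minimizer is constant at level $y^*$ after $\delta$. Evaluating at a constant $y$ gives $1/b+C_b(x)/y+\alpha\ln y$, because $\frac1x\int_\delta^{1+x}W=1-\delta+x/2=C_b(x)$. Minimizing over $y\ge1$ gives $y^*=C/\alpha$ when $C\ge\alpha$ and $y^*=1$ otherwise, which is exactly $\psi_\alpha$.

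For a general nondecreasing $Y$ the argument has three ingredients.
\begin{itemize}
\item \emph{Convexity.} Convexity of $u\mapsto e^{-u}$ gives $1/Y\ge\bigl(1-\ln(Y/y^*)\bigr)/y^*$.
\item \emph{Layer cake.} $\int W\ln Y\,d\tau=\int d\ln Y(s)\int_s^{1+x}W\,d\tau$.
\item \emph{Tail-mass inequality.} This is the key structural fact:
\[
\int_s^{1+x}W\,d\tau\;\le\;x\,C_b(x)\,\frac{W(s)}{x}\qquad\text{for all } s\ge\delta.
\]
For $s\le1$ the ratio is $x(1-s)+x^2/2\le xC$ against $W(s)=x$. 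For $s>1$ it is $W(s)^2/2\le W(s)\,x/2\le W(s)\,C$.
\end{itemize}
Combining these, $\frac1x\int W/Y\ge C/y^*+(C\ln y^*)/y^*-\frac{C}{xy^*}\int W\,d\ln Y$. Adding the rise term, the coefficient of $\int W\,d\ln Y$ is $\frac1x(\alpha-C/y^*)$. This coefficient vanishes for $y^*=C/\alpha$, leaving $\alpha(1+\ln(C/\alpha))$. For $C\le\alpha$ with $y^*=1$ it is nonnegative, leaving $C$.

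Taking the infimum over envelopes gives $E^{\inf}_{b,e}/e\ge K_b(x)$, and $J_b^\#$ follows by adding $h/x$. The delicate points are the exact midpoint identity for the corrections and the tail-mass inequality above, and I would verify these first.
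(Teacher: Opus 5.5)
Your proposal is correct and matches the paper's proof step for step: the same rescaling with the corrections absorbed exactly by the midpoint rule, $\ln(1+u)\le u$ together with monotonicity of $W$ for the rise terms, and a supporting-line argument for $e^{-u}$ at the one-jump candidate. Your tail-mass inequality $\int_s^{1+x}W\,d\tau\le C_b(x)\,W(s)$ is precisely the paper's condition $D(s)\ge0$ (with $D(\delta)=0$ when $C\ge\alpha$); the only difference is bookkeeping, since you apply it to the nonnegative measure $d\ln Y$ directly rather than to the signed perturbation $\nu-\nu_0$.
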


In words: after the mandatory head step at level $1$, the continuum
problem asks how to serve the remaining trapezoid of weight, of total
mass $C_b(x)$ per unit of debt, with an envelope that may only grow.
If the mass is light ($C_b(x)\le\alpha$), the best envelope never
rises at all and every unit is served at level $1$, at total price
$C_b(x)$; if it is heavy, the envelope jumps at once to the constant
level $C_b(x)/\alpha$ and stays there, and the price grows only
logarithmically.  The proof makes this optimality rigorous by a
supporting-line argument for the convex exponential.

\begin{proof}
Write $m=b+e$ and $W_j=\min(e,m-j)$ for $j=0,\ldots,m-1$.
Thus $W_j=e$ on the constant block $j<b$ and $W_j=m-j$ on the
decay block.  Map step $j$ to
$[j/b,(j+1)/b)$ and the discrete envelope to the step path
$y(\tau)=b_{\lfloor b\tau\rfloor}/b$.  On $[0,1+x]$ put
\[
W(\tau)=\min(x,1+x-\tau).
\]
Let $\delta=1/b$ and $T=1+x$.
The discrete envelope charge dominates the corresponding continuum
charge step by step.  Use the right-endpoint convention for Stieltjes
atoms: a jump from $b_j/b$ to $b_{j+1}/b$ occurs at
$(j+1)/b$.  Put
$u_j=(b_{j+1}-b_j)/b_j\ge0$ and
$W_j^{\min}:=W((j+1)/b)\le W_j/b$.  The rise charges satisfy
\[
\alpha bW_j^{\min}\ln(1+u_j)
\le \alpha W_j u_j,
\]
using $\ln(1+u_j)\le u_j$.  The scaled continuum dwell on step $j$ is
\[
b\int_{j/b}^{(j+1)/b}\frac{W(\tau)}{y(\tau)}\,d\tau
=\begin{cases}
W_j/b_j,&j<b,\\[2pt]
(W_j-\tfrac12)/b_j,&j\ge b.
\end{cases}
\]
Thus a constant-block dwell is represented exactly, while on a decay
step the midpoint loss is precisely the correction $-1/(2b_j)$ in
\eqref{eq:thirtytwo}.  At the last endpoint $W(T)=0$, so a final rise
only makes the discrete charge larger.

Let $\mathcal Y_{\rm ind}$ be the induced step paths and let
$\mathcal Y_\delta$ be all finite-valued, non-decreasing,
right-continuous paths
$y\ge1$ with $y=1$ on $[0,\delta)$.  For each discrete envelope, its
value dominates $e$ times the continuum functional $\mathcal F$ of its
induced path.  Write
\[
\mathcal F[y]:=\frac1x\left(
 \alpha\int_0^{1+x}W\,d(\ln y)
 +\int_0^{1+x}\frac{W}{y}\,d\tau\right).
\]
Taking infima, and then enlarging the path class, gives
\[
\frac{E^{\inf}_{b,e}}e
\ge\inf_{y\in\mathcal Y_{\rm ind}}\mathcal F[y]
\ge\inf_{y\in\mathcal Y_\delta}\mathcal F[y].
\]

Every path in $\mathcal Y_\delta$ is $1$ on
$[0,\delta)$, so that head step contributes exactly $1/b$.  On the
remaining interval,
\[
\frac1x\int_\delta^T W(\tau)\,d\tau
=1-\frac1b+\frac x2=C_b(x).
\]
It remains to minimize
\[
I[y]=\frac1x\left(
 \alpha\int_{[\delta,T]}W\,d(\ln y)
 +\int_\delta^T\frac{W}{y}\,d\tau\right).
\]
Put $u=\ln y$, take $u(\delta^-)=0$, and let $\nu=du$, a finite
nonnegative measure on $[\delta,T]$.  Thus
$u(t)=\nu([\delta,t])$.  Write $C=C_b(x)$ and
\[
k:=\max\{0,\ln(C/\alpha)\}.
\]
The candidate measure is $\nu_0=k\,\delta_\delta$: equivalently,
$u_0=k$ and $y_0=e^k$ throughout $[\delta,T]$, after one possible
jump at $\delta$.

We prove global optimality directly, without a tangent-cone argument.
The supporting-line inequality for the convex function $e^{-u}$ is
\[
e^{-u}\ge e^{-k}-e^{-k}(u-k).
\]
For $\mu:=\nu-\nu_0$, Tonelli's theorem (applied first to the positive
and negative parts of the finite signed measure $\mu$) gives
\begin{align*}
x\bigl(I[u]-I[u_0]\bigr)
&\ge \alpha\int_{[\delta,T]}W\,d\mu
 -e^{-k}\int_\delta^T W(t)\,\mu([\delta,t])\,dt\\
&=\int_{[\delta,T]}D(s)\,d\mu(s),
\end{align*}
where
\[
D(s):=\alpha W(s)-e^{-k}\int_s^T W(t)\,dt.
\]

If $C\ge\alpha$, then $e^{-k}=\alpha/C$ and
$D(\delta)=0$, because
$\int_\delta^T W=xC$ and $W(\delta)=x$.  On the constant part
$[\delta,1]$, $D$ is increasing.  On the tail, writing $r=T-s$,
\[
D(s)=r\left(\alpha-e^{-k}\frac r2\right)\ge0,
\]
since $r\le x\le2C$.  Now $\mu$ can be signed only at $\delta$,
where $D=0$; on $(\delta,T]$ it is nonnegative because the candidate
has no mass there.  Hence $\int D\,d\mu\ge0$.

If $C\le\alpha$, then $k=0$ and $\nu_0=0$, so $\mu=\nu\ge0$.
Here $D(\delta)=x(\alpha-C)\ge0$, $D$ again increases on the
constant part, and the tail formula is nonnegative because
$r\le x\le2C\le2\alpha$.  Thus $I[u]\ge I[u_0]$ in both cases.

The candidate value is $C$ when $C\le\alpha$.  When $C\ge\alpha$ it
is
$\alpha k+e^{-k}C=\alpha(1+\ln(C/\alpha))$.
Adding the already separated head dwell $1/b$ proves the formula for
$K_b$; the two branches agree at $C=\alpha$.
\end{proof}

Combining the repaired aggregation, the full S-payment, and the
finite-head envelope now gives the single inequality used by the rest
of the proof.  For a candidate rate $\varrho$, define
\[
\Xi_b(x):=e\bigl(J_b^\#(x)-\varrho\bigr)=bx\bigl(J_b^\#(x)-\varrho\bigr).
\]
Lemmas~\ref{lem:masterpayment} and \ref{lem:headclosed} yield
\begin{equation}\label{eq:masterineq}
\boxed{
\operatorname{payment}_q-\varrho e
\ge \Xi_b(x)-\frac12+h(L-b).}
\end{equation}
Thus $\Xi_b(x)$ measures the queue's payment surplus over the target
rate: whenever $\Xi_b(x)>\tfrac12$, the pair $(b,e)$ survives the
repair's charge with no further help, and every extra block step
beyond the mandatory $b$ adds another $h$ of slack.

Letting the head width tend to zero recovers the continuum endgame in
closed form.

\begin{lemma}[continuum closed form]\label{lem:jclosed}
For $x>0$,
\[
J_{\rm cont}(x)
:=\lim_{b\to\infty}J_b^\#(x)
=\frac hx+\alpha\left(1+\ln\frac{1+x/2}{\alpha}\right).
\]
It is the infimum of the continuum envelope functional over the
relaxed $\delta=0$ class represented by finite nonnegative Stieltjes
measures on $[0,1+x]$, where an atom at the origin is allowed.  The
optimizer in this closure jumps at the origin to
$y_0=(1+x/2)/\alpha$ and stays constant.
\end{lemma}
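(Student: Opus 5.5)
The limit itself is a direct computation from the closed form of Lemma~\ref{lem:headclosed}. As $b\to\infty$ with $x$ fixed, the head term $1/b$ in $K_b(x)$ vanishes and $C_b(x)=1-\tfrac1b+\tfrac x2\to 1+\tfrac x2$. Because $1+x/2>1>\alpha$ for $\alpha=613/1000$, the logarithmic branch of $\psi_\alpha$ applies for all large $b$. Continuity of $\psi_\alpha$ then gives
\[
J_b^\#(x)\to \frac hx+\alpha\Bigl(1+\ln\frac{1+x/2}{\alpha}\Bigr).
\]
Here $b$ ranges over the integers, so the limit is taken along $b\to\infty$ with $x$ held as a real parameter, i.e.\ it is a limit of the explicit formula.

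For the variational characterization, I would repeat the supporting-line argument of Lemma~\ref{lem:headclosed} with $\delta=0$. There is now no head step, and an atom at the origin is allowed. Put $T=1+x$, $W(\tau)=\min(x,T-\tau)$, $u=\ln y$ with $u(0^-)=0$, and $\nu=du$, a finite nonnegative measure on $[0,T]$. The functional becomes
\[
I[\nu]=\frac1x\Bigl(\alpha\int_{[0,T]}W\,d\nu+\int_0^T W e^{-u}\,d\tau\Bigr),
\]
with $\int_0^T W=x(1+x/2)=xC$ where $C:=1+x/2$. Take the candidate $\nu_0=k\,\delta_0$ with $k=\ln(C/\alpha)>0$. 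Applying the convexity inequality $e^{-u}\ge e^{-k}-e^{-k}(u-k)$ and Tonelli to $\mu=\nu-\nu_0$ gives
\[
x(I[\nu]-I[\nu_0])\ge\int D\,d\mu,\qquad D(s)=\alpha W(s)-\frac{\alpha}{C}\int_s^T W.
\]
Then $D(0)=\alpha x-\alpha x=0$. On $[0,1]$ the function $D$ is increasing, and on the tail, with $r=T-s\le x$, it equals $r(\alpha-\alpha r/(2C))\ge0$ because $r\le x<2C$. Since $\mu$ can be negative only at the origin, where $D=0$, we get $\int D\,d\mu\ge0$. The candidate value is $\alpha k+e^{-k}C=\alpha(1+\ln(C/\alpha))$, and adding $h/x$ gives exactly $J_{\rm cont}(x)$. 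The optimizer is $y_0=e^k=C/\alpha$, constant after the jump at the origin.

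The main obstacle is making precise the sense in which this closed class is the right relaxation, so that the limit of $J_b^\#$ agrees with the $\delta=0$ infimum. I would handle it with two inequalities. For the lower bound, the argument above is valid over all measures in the class. For the upper bound, I would use the sequence $y_b$ that equals $1$ on $[0,1/b)$ and $C_b(x)/\alpha$ afterward; these are the finite-$b$ optimizers, their values are $K_b(x)$, and these values converge to the candidate value. A second check is the right-endpoint (Stieltjes) convention for the atom at $0$: the jump must be charged with weight $W(0)=x$, which is exactly what makes $D(0)=0$.
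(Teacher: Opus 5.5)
Your proposal is correct and follows the paper's proof. You read off the limit from the closed form $K_b$ of Lemma~\ref{lem:headclosed}, and you obtain the variational claim by rerunning its supporting-line/Tonelli argument with $\delta=0$ and $C=1+x/2$, checking that $D(0)=0$ and $D\ge0$ on $(0,T]$ exactly as the paper does. Your use of the finite-head optimizers $y_b$, whose values are $K_b(x)$, to place the origin jump in the closure of paths with $y(0)=1$ is an equally valid substitute for the paper's ramps over $[0,\varepsilon]$.
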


\begin{proof}
The formula is the $b\to\infty$ limit in
Lemma~\ref{lem:headclosed}.  Equivalently, repeat its supporting-line
and measure proof with $\delta=0$ and $C=1+x/2$.  The jump at the
origin lies in the closure of the class with $y(0)=1$; ramps over
$[0,\varepsilon]$ approach it without changing the limiting value.
\end{proof}

\begin{figure}[!t]
\centering
\includegraphics[width=0.55\linewidth]{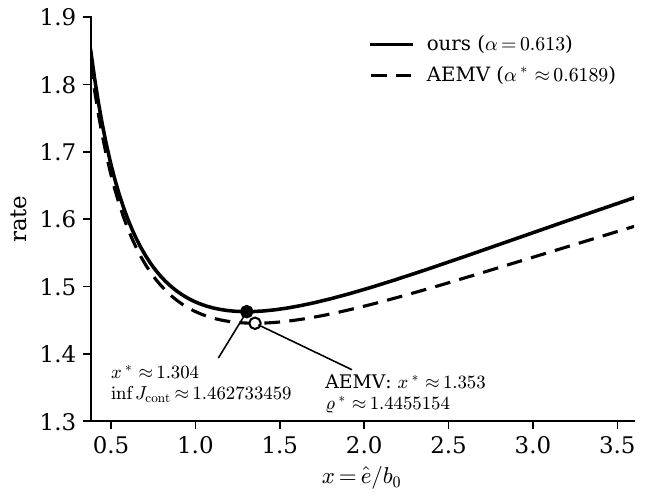}
\caption{$J_{\rm cont}(x)$ at our $\alpha=613/1000$ (solid) against
AEMV's $g_\alpha(x)$ at their $\alpha^*\approx0.6189$ (dashed,
Table~\ref{tab:constants}).}
\label{fig:jcont}
\end{figure}
\FloatBarrier

The derivative is
\[
J_{\rm cont}'(x)=
\frac{\alpha x^2-hx-2h}{x^2(2+x)}.
\]
Its numerator has one positive root,
\[
x^*=\frac{h+\sqrt{h^2+8\alpha h}}{2\alpha}
=1.30428\ldots,
\]
so $x^*$ is the unique global minimizer.  At our exact parameter
$\alpha=613/1000$,
\[
\inf_{x>0}J_{\rm cont}(x)
=J_{\rm cont}(x^*)
=1.4627334591698\ldots
>\hat\varrho=1.4627334.
\]
The strict $5.91\cdot10^{-8}$ separation is certified by outward
interval arithmetic in
\certpath{certificates/upper_bound/closed_form_certificate.py}.
Figure~\ref{fig:jcont} plots $J_{\rm cont}$ against AEMV's
$g_\alpha$: both curves are payment rates, so the higher minimum is
the better bound; ours stays above $\hat\varrho=1.4627334$, while
AEMV's extracted rate is $\rhostar\approx1.4455154$.

For completeness we also re-certify AEMV's published rate after the
repair at the exact decimal pair
\[
\check\alpha^*:=0.6189060016726951,
\qquad
\check\varrho^*:=1.4455154033861448.
\]
A separate, non-load-bearing 60-digit numerical recomputation places
the first within $4\cdot10^{-17}$ of their optimizer.  The second is a
conservative exact decimal below the recomputed rate.  The same
closed-form certificate proves
$\check\varrho^*<\inf_xJ_{\rm cont}(x;\check\alpha^*)$.

It remains only to absorb the repair's absolute $1/2$.  The next lemma
does so on almost the whole parameter grid; Figure~\ref{fig:casemap}
shows how its four-part case analysis, together with
Lemma~\ref{lem:basewindows}, covers every integer pair $(b,e)$.

\begin{figure}[!t]
\centering
\begin{tikzpicture}[x=0.9cm,y=0.68cm,
  key/.style={font=\scriptsize,align=left,anchor=west},
  swatch/.style={draw=black!40,minimum width=0.34cm,minimum height=0.24cm,inner sep=0pt}]
\fill[black!5]  (2.55,0.5) rectangle (6.55,5.5);   
\fill[black!13] (0.55,1.55) rectangle (1.45,5.5);  
\fill[black!13] (1.55,1.55) rectangle (2.45,5.5);  
\fill[black!30] (0.55,0.5) rectangle (1.45,1.45);  
\fill[black!30] (1.55,0.5) rectangle (2.45,1.45);  
\foreach \b in {1,...,6} \foreach \e in {1,...,5}
  \fill[black!60] (\b,\e) circle (1.1pt);
\foreach \e in {1,...,5} \node[black!45,font=\scriptsize] at (6.95,\e) {$\cdots$};
\foreach \b in {1,...,6} \node[black!45,font=\scriptsize] at (\b,5.85) {$\vdots$};
\draw[->] (0.2,0) -- (7.4,0) node[right=1pt,font=\scriptsize]{head level $b$};
\draw[->] (0.2,0) -- (0.2,6.4) node[rotate=90,anchor=south east,font=\scriptsize,yshift=2pt]{debt $e$};
\foreach \b in {1,...,6} \node[font=\scriptsize] at (\b,-0.42) {$\b$};
\foreach \e in {1,...,5} \node[font=\scriptsize] at (-0.12,\e) {$\e$};
\node[swatch,fill=black!5]  (sw1) at (8.35,5.0) {};
\node[key,text width=6.1cm] at (8.65,5.0)
  {$b\ge3$: $\Xi_b$ is monotone in $b$, so everything reduces to $b=3$;
   there, convexity and two tangents on $[0.62,0.72]$ give $\Xi_3>\tfrac12$};
\node[swatch,fill=black!13] (sw2) at (8.35,3.35) {};
\node[key,text width=6.1cm] at (8.65,3.35)
  {$b\in\{1,2\}$, $e\ge2$: convexity plus one endpoint value and
   derivative, at $x=2$ ($b=1$) or $x=1$ ($b=2$)};
\node[swatch,fill=black!30] (sw3) at (8.35,1.55) {};
\node[key,text width=6.1cm] at (8.65,1.55)
  {$(1,1)$ and $(2,1)$: here $\Xi_b\le\tfrac12$; one spare
   transmission's $h$ closes the gap when $L>b$, and the base windows
   ($L=b$) go to Lemma~\ref{lem:basewindows}};
\end{tikzpicture}
\caption{How Lemmas~\ref{lem:reserve} and~\ref{lem:basewindows} cover
the integer parameter grid $(b,e)=(b_0,\hat e_q)$.}
\label{fig:casemap}
\end{figure}
\FloatBarrier

\begin{lemma}[uniform finite-head reserve]\label{lem:reserve}
For either exact row
\[
(\alpha,\varrho)=(613/1000,\hat\varrho)
\quad\text{or}\quad
(\alpha,\varrho)=(\check\alpha^*,\check\varrho^*),
\]
take $\beta=1-\sqrt{1-\alpha}-\alpha/2$, $h=1-\alpha-\beta$, and for
integers $b,e\ge1$ with $x=e/b$, use $\Xi_b(x)$ as defined immediately
before Lemma~\ref{lem:jclosed}.  Lemma~\ref{lem:headclosed} shows that
$\Xi_b$ lower-bounds $hb+E^{\inf}_{b,e}-e\varrho$; the repair costs
$\tfrac12$ per queue, which is why $\tfrac12$ is the target.
For every pair $(b,e)\ne(1,1),(2,1)$,
$\Xi_b(x)>1/2$.  At each of the two exceptional pairs,
$\Xi_b(x)+h>1/2$; the extra $h$ is one further transmission's payment,
available only when the window has $L>b$, so the shortest windows
($L=b$, namely $(1,1,1)$ and $(2,1,2)$) are left to
Lemma~\ref{lem:basewindows}.
\end{lemma}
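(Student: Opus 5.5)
The plan follows the case map of Figure~\ref{fig:casemap}. First I will write $\Xi_b$ explicitly. With $C=C_b(x)=1-1/b+x/2$ and $e=bx$,
\[
\Xi_b(x)=hb+e\Bigl(\tfrac1b+\psi_\alpha(C)-\varrho\Bigr)=hb+x+bx\bigl(\psi_\alpha(C_b(x))-\varrho\bigr).
\]
For every relevant pair the heavy branch $C\ge\alpha$ applies, except possibly for tiny $b$ and $x$; I will check this branch by branch. Since $\psi_\alpha$ is concave, I will not argue convexity through $\psi$ directly. Instead I will pass to the variable $e$ at fixed $b$. The function $e\mapsto e\,\alpha\ln(C/\alpha)$ with $C$ affine in $e$ is convex, because $f(e)=e\ln(c_0+c_1e)$ has $f''=\bigl(2c_1(c_0+c_1e)-c_1^2e\bigr)/(c_0+c_1e)^2>0$ for $c_0\ge0$. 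So for each fixed $b$, $\Xi_b$ is a convex function of $e$. For each $b$ it therefore suffices to produce a tangent line that stays above $\tfrac12$ on the relevant integer range, or to locate the minimum and evaluate it.

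Second, I will reduce all $b\ge3$ to $b=3$. The claim is that $\Xi_b(x)$ is nondecreasing in $b$ at fixed real $x$. I will differentiate $\Xi_b(x)=hb+x+bx(\psi_\alpha(C_b)-\varrho)$ in $b$ and get $h+x(\psi-\varrho)+x\psi'(C)/b$. I will bound this below using the fact that $\psi_\alpha(C_b)\ge\psi_\alpha(C_3)$ and $J_b^\#\ge J_{\rm cont}$-type comparisons. Concretely, write $\Xi_b=b\bigl[h+x(K_b(x)-\varrho)\bigr]$. The bracket equals $x(J^\#_b-\varrho)$, and $J^\#_b(x)\to J_{\rm cont}(x)>\varrho$. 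So I need the bracket to be positive and $b\cdot(\text{bracket})$ to be increasing. I expect this monotonicity to be the main obstacle: $K_b$ itself is not monotone in $b$, since the head term $1/b$ decreases while $\psi(C_b)$ increases. My first attempt will be to show that $\partial_b\bigl[bK_b(x)\bigr]=\psi_\alpha(C_b)+\psi_\alpha'(C_b)/b\ge\varrho-h/x$ via $\psi_\alpha(C_b)\ge\alpha$ and the closed form. If that fails globally, I will fall back on certified interval arithmetic over a compact box in $x$ with the asymptotic tails handled analytically.

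At $b=3$, convexity in $x$ plus two tangent lines at $x=0.62$ and $x=0.72$ bracket the minimizer. Both tangent values lie above $\tfrac12$, so $\Xi_3>\tfrac12$ for all real $x>0$, and hence at every $e$. All numerical leaves will be certified by outward interval arithmetic at both parameter rows.

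Third, for $b\in\{1,2\}$ with $e\ge2$, I will again use convexity in $x$. I will evaluate $\Xi_b$ and its derivative at the first admissible point, $x=2$ for $b=1$ and $x=1$ for $b=2$, and show that the value exceeds $\tfrac12$ with a nonnegative derivative. Convexity then gives $\Xi_b>\tfrac12$ on the whole remaining range. Finally, at $(1,1)$ and $(2,1)$, I will compute $\Xi_b$ directly, confirm $\Xi_b\le\tfrac12<\Xi_b+h$ numerically, and read off the $L>b$ remark from \eqref{eq:masterineq}.
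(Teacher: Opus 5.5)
Your architecture is the paper's: reduce $b\ge3$ to $b=3$ by monotonicity in $b$, prove convexity of $\Xi_3$ and bracket its minimizer with tangents at $0.62$ and $0.72$, use an endpoint value and derivative on the $b=1,2$ rays, and evaluate $(1,1)$ and $(2,1)$ directly. Your convexity-in-$e$ computation is a valid substitute for the paper's explicit $\partial_x^2\Xi_b$. The gap is the step you flag yourself: the reduction of all $b\ge3$ to $b=3$ is not proved.

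Your target inequality $\psi_\alpha(C_b)+\psi_\alpha'(C_b)/b\ge\varrho-h/x$ is the right one. The route you suggest, through $\psi_\alpha(C_b)\ge\alpha$, cannot work. As $b\to\infty$ this inequality becomes exactly $J_{\rm cont}(x)\ge\varrho$. At the minimizer $x^*\approx1.304$ the right side is $\varrho-h/x^*\approx1.2208$ and the inequality holds with a margin of only about $6\cdot10^{-8}$, so $\psi_\alpha\ge\alpha=0.613$, or any other lossy bound, fails. Your fallback, an interval check on a compact box in $x$, does not address the unbounded direction $b\in[3,\infty)$. There $\partial_b\Xi_b(x)$ tends to $x(J_{\rm cont}(x)-\varrho)$, which is essentially zero near $x^*$. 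No finite numerical check closes that without an analytic comparison to $J_{\rm cont}$.

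The missing idea is one line, and it uses the concavity of $\psi_\alpha$ that you already noted. Since $C_b(x)+1/b=1+x/2$, the tangent-line inequality for the concave $\psi_\alpha$ at $C_b$ gives $\psi_\alpha(C_b)+\psi'_\alpha(C_b)/b\ge\psi_\alpha(1+x/2)$. Hence
\[
\frac{\partial\Xi_b(x)}{\partial b}\;\ge\;h+x\,\psi_\alpha(1+x/2)-x\varrho\;=\;x\bigl(J_{\rm cont}(x)-\varrho\bigr)\;\ge\;0,
\]
by the certified bound $\inf_x J_{\rm cont}>\varrho$ from Lemma~\ref{lem:jclosed}. This is the paper's identity $\partial_b\Xi_b=x\mu(x)+\alpha x\bigl(u-\ln(1+u)\bigr)$, with $u=1/(bC_b)$ and $1+x/2=C_b(1+u)$, where $\ln(1+u)\le u$ plays the role of the tangent line. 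It holds for real $b\ge3$ at fixed real $x$, which is what the reduction to $b=3$ needs. With this inserted, the rest of your plan goes through as written.
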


\begin{proof}
The proof has four parts: monotonicity in $b$ at fixed $x$ reduces
every $b\ge3$ to $b=3$; convexity of $\Xi_3$ with a two-tangent bound
closes $b=3$; the same convexity closes the $b=1$ and $b=2$ rays from
one endpoint value and derivative each; and the two exceptional pairs
are checked directly, the extra transmission's $h$ recovering
$\Xi_b+h>1/2$ there.

Put $A=1+x/2$, $C=A-1/b$, and
$\mu(x)=J_{\rm cont}(x)-\varrho\ge0$.  For $b\ge3$ we have
$C\ge2/3>\alpha$, so the logarithmic branch of
Lemma~\ref{lem:headclosed} applies.  Treating $b$ as a real variable at
fixed $x$, and putting $u=1/(bC)$ so that $A=C(1+u)$, differentiation
starts from the identity
\[
\Xi_b(x)=bx\mu(x)+x-\alpha bx\ln(1+u)
\]
and gives
\[
\frac{\partial\Xi_b(x)}{\partial b}
=x\mu(x)+\alpha x\bigl(u-\ln(1+u)\bigr)\ge0.
\]
Thus every $b\ge3$ reduces pointwise to $b=3$, even though $3x$ need
not be an integer.  Moreover
\[
\frac{\partial^2\Xi_b(x)}{\partial x^2}=
\frac{\alpha b\,[\,4(1-1/b)+x\,]}{4C_b(x)^2}>0,
\]
so $\Xi_3$ is strictly convex.

At both parameter rows, interval evaluation gives
\[
\Xi_3'(0.62)<0<\Xi_3'(0.72).
\]
The minimizer therefore lies in $[0.62,0.72]$, a bracket of width
$0.10$.  A convex function lies above its two endpoint tangents; each,
extended across the bracket to the opposite endpoint, gives
\[
\min_{x>0}\Xi_3(x)\ge
\max\left\{
\Xi_3(0.62)+0.10\,\Xi_3'(0.62),
\Xi_3(0.72)-0.10\,\Xi_3'(0.72)
\right\}.
\]
Outward interval arithmetic gives lower bounds $0.5097$ for our row
and $0.5269$ for AEMV's, both strictly above $1/2$.

On the rays used next, $C_b(x)>\alpha$, so the same logarithmic formula
and convexity apply.  For $b=1$, integrality and $e\ge2$ give $x\ge2$; convexity together
with $\Xi_1'(2)>1.06$ and $\Xi_1(2)>1.21$ handles the whole ray.
For $b=2$, $e\ge2$ gives $x\ge1$; here
$\Xi_2'(1)>0.513$ and $\Xi_2(1)>0.531$.  These common lower bounds hold
at both parameter rows.  Thus only $(1,1)$ and $(2,1)$ remain.
Direct interval evaluation gives, respectively,
\[
\begin{array}{c|cc}
& (1,1)&(2,1)\\ \hline
\text{our row}&0.3528598&0.4051002\\
\text{AEMV row}&0.3623597&0.4080449
\end{array}
\quad\text{for }\Xi_b(x),
\]
while $h=0.3155932\ldots$ and $0.3078751\ldots$ in the two rows.
Hence adding one further transmission makes $\Xi_b+h>1/2$ in all four
cases.  All numerical leaves of this proof are enclosed by
\certpath{certificates/upper_bound/fixed_b_certificate.py}; no integer-pair enumeration is used.
\end{proof}

Only the two shortest windows remain: $(b,e,L)=(1,1,1)$ and $(2,1,2)$,
the \emph{base windows}, where $L=b$ leaves no spare transmission to
pair with Lemma~\ref{lem:reserve}'s second conclusion.  They admit
elementary direct proofs from the unaggregated per-phase provisions.

\begin{lemma}[the two base windows]\label{lem:basewindows}
At either parameter row, a queue with $(b,e,L)=(1,1,1)$ or
$(2,1,2)$ satisfies $\operatorname{payment}_q>\varrho e$.
\end{lemma}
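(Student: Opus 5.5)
The plan is to prove both base windows directly from the unaggregated per-phase provisions \eqref{eq:aemv29}--\eqref{eq:aemv30}, without passing through the envelope. This avoids the $\tfrac12$ charge of Lemma~\ref{lem:amortizedagg} altogether. By Lemma~\ref{lem:masterpayment}'s bookkeeping, the payment is the S-part $hL$ plus the L-part, and the L-part is at least $\mathrm{FACTS}$. So it suffices to show $\mathrm{FACTS}>\varrho-hL$ at both parameter rows.

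The window anatomy is fixed by Lemma~\ref{lem:windowstructure} and Figure~\ref{fig:window}. With $e=1$, the window has $L+1$ steps, each carrying debt $\hat e^t=1$. The first $L$ steps are block steps (no correction), and the last step is a tail step, which may be a correction step. I would treat it as a correction step in the worst case. The levels satisfy $s_i=\sigma_{q,i}-1\ge1$ throughout, by Observation~8. Moreover $b_0=\lceil\sigma_{q,i_q}\rceil-1$ pins the starting level: $s_0=1$ for $b=1$, and $s_0\in(1,2]$ for $b=2$. Given these constraints, $\mathrm{FACTS}$ depends only on where the phase boundaries fall among the $L+1$ steps and on the real levels of the phases. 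I would enumerate the boundary placements and, for each, minimize $\mathrm{FACTS}$ over the admissible levels. Each phase contributes by rise or drop type:
\begin{itemize}
\item a rise phase contributes its dwell terms plus the bonus $\alpha(s_{i+1}-s_i)/s_i$ and carries no correction;
\item a drop phase contributes its dwell terms minus the correction $1/(2s_i)$ and minus the drop penalty $(1/s_{i+1}-1/s_i)$;
\item the terminal drop penalty is void.
\end{itemize}

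For $(1,1,1)$ the target is $\mathrm{FACTS}>\varrho-h$, which is about $1.147$ on our row. The window has two steps at initial level $1$. A drop out of level $1$ is impossible, so either there is a single phase, giving $1+1-\tfrac12=\tfrac32$, or the first phase rises to some $s'\ge1$. In the rise case I expect the total to be at least $1+\alpha(s'-1)+\tfrac1{2s'}$, which is increasing in $s'$ for $\alpha>\tfrac12$ and hence is at least $\tfrac32$. Either way $\mathrm{FACTS}\ge\tfrac32$, which clears the target with wide margin.

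For $(2,1,2)$ the target is $\mathrm{FACTS}>\varrho-2h$, which is about $0.83$. The window has three steps, so at most three phases and four boundary patterns. Each pattern is combined with a rise-or-drop choice at each transition. The most dangerous patterns are drops that end at level $1$ on the correction step. For example, a block at $s_0=2$ followed by a drop to $1$ gives $\tfrac22-(1-\tfrac12)+1-\tfrac12=1$. I would bound each case in closed form. The drop penalties telescope against the next dwell exactly as in the pairing of Section~\ref{sec:thegap}, and a rise either adds a bonus or lowers later dwells only down to $1/s'$, with the same monotonicity in $s'$ as above. The aim is to show $\mathrm{FACTS}\ge1$ uniformly across all cases.

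I expect the main obstacle to be the mixed rise/drop patterns in $(2,1,2)$, in which a rise to a large level $s'$ shrinks the later dwell $1/s'$ while the correction $1/(2s')$ also shrinks. There I would minimize the resulting one- or two-variable rational functions directly, using $\alpha\ge\tfrac12$ as in Lemma~\ref{lem:amortizedagg}'s rise inequality. As a safeguard, I would confirm the finitely many case minima by outward interval arithmetic in the same certificate framework used for Lemma~\ref{lem:reserve}.
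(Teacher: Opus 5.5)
Your route is the paper's: payment $\ge hL+\mathrm{FACTS}$ with all pending weights $1$, the correction charged in the worst case to the single tail step, and $s_0=1$ or $s_0\in(1,2]$. Your $(1,1,1)$ computation, $1+\alpha(s'-1)+\frac1{2s'}\ge\frac32$, is exactly the paper's.

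The gap is in $(2,1,2)$. Your announced goal, $\mathrm{FACTS}\ge1$ uniformly, justified by ``the same monotonicity in $s'$ as above,'' is false. Take step $0$ at $s_0=2$, a rise to $s_1=s'$ at step $1$, and a drop to $s_2=1$ at the correction step $2$. The per-phase facts give
\[
\mathrm{FACTS}=\Bigl(\tfrac12+\alpha\tfrac{s'-2}{2}\Bigr)+\Bigl(\tfrac1{s'}-\bigl(1-\tfrac1{s'}\bigr)\Bigr)+\tfrac12=\frac{2}{s'}+\frac{\alpha(s'-2)}{2}.
\]
Its derivative at $s'=2$ is $(\alpha-1)/2<0$. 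Its minimum, at $s'=2/\sqrt\alpha$, is $2\sqrt\alpha-\alpha\approx0.953$ in our row and $\approx0.955$ in AEMV's. Raising the level before a drop both shrinks the intermediate dwell and enlarges the paired drop penalty. So $2/s'$ falls at rate $\tfrac12$ near $s'=2$, while the bonus grows only at rate $\alpha/s_0=\alpha/2$. Monotonicity in $s'$ held in $(1,1,1)$ only because no internal drop can follow the rise there.

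The lemma still holds, because the real target is $\varrho-2h\approx0.832$, not $1$. Your fallback of direct minimization would work, provided you compare against $\varrho-2h$.

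The paper avoids your enumeration of boundary patterns. It uses per-step levels $s_0,s_1,s_2$, equal within a phase, and a single transition term $\tau(a,c)$. This term is the rise bonus $\alpha(c-a)/a$ or the paired drop credit $1/a-1/c$, so it covers every boundary placement at once.

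The argument then runs in three steps:
\begin{enumerate}
\item It shows $\frac1{2s_2}+\tau(s_1,s_2)\ge\frac1{s_1}-\frac12$. For a drop this follows from $s_2\ge1$. For a rise it follows from $\alpha\ge\frac12$, which puts the minimum at $s_2=s_1$.
\item If the first transition is a drop, then $V\ge3/s_0-\frac12\ge1$.
\item If it is a rise, AM--GM in $s_1$ gives $V\ge\frac1{s_0}-\alpha-\frac12+2\sqrt{2\alpha/s_0}$. This is decreasing in $s_0\in(1,2]$, hence $V\ge2\sqrt\alpha-\alpha$.
\end{enumerate}
Hence payment $\ge2h+2\sqrt\alpha-\alpha$, which clears $\varrho$ by $0.1213$ and $0.1247$ in the two rows.
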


\begin{proof}
For levels $a,c\ge1$, write the unit-weight transition term as
\[
\tau(a,c)=
\begin{cases}
\alpha(c-a)/a,&c\ge a,\\
1/a-1/c,&c<a.
\end{cases}
\]
This is the retained rise charge or paired drop credit used in
the proof of Lemma~\ref{lem:amortizedagg}.  Make the step reduction
explicit.  When $e=1$, the pending window has steps
$t_q,t_q+1,\ldots,t_q+L$.  If $t_q+j$ lies in phase $i$, set
$s_j=\sigma_{q,i}-1$; thus a phase level is repeated over its steps,
and $s_j=s_{j+1}$ unless a phase boundary lies between them.  Every
pending debt weight is one.  At an internal boundary, \eqref{eq:aemv29}
supplies $\tau(s_j,s_{j+1})$ for a rise, while pairing
\eqref{eq:aemv30}'s drop penalty with the target dwell supplies the
same term for a drop.  There is no term after $j=L$: a terminal drop
penalty is multiplied by zero next-boundary debt, and omitting a
terminal rise can only lower the bound.

The constant-debt block $j=0,\ldots,L-1$ contains no correction step,
because LQD still stores $q$ there.  The only possible correction is
the decay step $j=L$; subtracting $1/(2s_L)$ even when its phase is not
a drop can only lower the floor.  Consequently the unaggregated facts
give
\[
V_L(s)=\sum_{j=0}^{L}\left(\frac1{s_j}
-\frac{\mathbf 1[j=L]}{2s_j}\right)
+\sum_{j=0}^{L-1}\tau(s_j,s_{j+1}),
\qquad
\operatorname{payment}_q\ge hL+V_L(s),
\]
where $s_0\in(b-1,b]$ and every $s_j\ge1$.  A rise phase dominates
the repeated-step dwell form because its start-of-phase debt is no
smaller than every debt inside the phase.  Thus $V_L$ is a direct lower
bound on the unaggregated L-increase, while $hL$ is the disjoint
S-payment from the $L$ block transmissions.

For $(b,e,L)=(1,1,1)$, the initial level is forced to be $s_0=1$.
The direct path value is
\[
1+\frac1{2s_1}+\alpha(s_1-1),\qquad s_1\ge1.
\]
It is increasing because $\alpha\ge1/2$, so its minimum is $3/2$.
Including the S-payment, payment is at least $h+3/2$, exceeding
$\varrho$ by $0.3528$ in our row and $0.3623$ in AEMV's row.

For $(b,e,L)=(2,1,2)$, $s_0\in(1,2]$ and the direct path value is
\[
V=\frac1{s_0}+\frac1{s_1}+\frac1{2s_2}
+\tau(s_0,s_1)+\tau(s_1,s_2).
\]
The final dwell and transition satisfy
\[
\frac1{2s_2}+\tau(s_1,s_2)\ge\frac1{s_1}-\frac12:
\]
for a drop this follows from $s_2\ge1$, and for a rise the left side
is minimized at $s_2=s_1$ because $\alpha\ge1/2$, and is even larger.
If $s_1<s_0$, this
gives $V\ge3/s_0-1/2\ge1$.  If $s_1\ge s_0$, then AM--GM gives
\[
V\ge\frac1{s_0}-\alpha-\frac12
+\frac2{s_1}+\frac{\alpha s_1}{s_0}
\ge\frac1{s_0}-\alpha-\frac12
+2\sqrt{\frac{2\alpha}{s_0}}
\ge2\sqrt\alpha-\alpha,
\]
because the preceding expression is decreasing in $s_0\in(1,2]$ and
is therefore minimized at $s_0=2$.
Since $2\sqrt\alpha-\alpha<1$ on our parameter range, both branches
give $V\ge2\sqrt\alpha-\alpha$.  Thus payment is at least
$2h+2\sqrt\alpha-\alpha$, exceeding $\varrho$ by $0.1213$ and
$0.1247$ in the two rows.  The radical inequalities are also checked
outwardly in \certpath{certificates/upper_bound/fixed_b_certificate.py}.
\end{proof}

\begin{proof}[Proof of Theorem~\ref{thm:improved}]
Apply the master inequality \eqref{eq:masterineq}.
For every pair other than $(1,1),(2,1)$, Lemma~\ref{lem:reserve}
gives $\Xi_b>1/2$.  At either exceptional pair, the same lemma handles
every $L>b$ using the extra $h(L-b)$; when $L=b$, apply
Lemma~\ref{lem:basewindows}.  Hence every queue receives payment at
least $\varrho\hat e_q$.

Both exact rows satisfy the imported hypotheses:
$\alpha\in[1/2,2/3]$, $\alpha+\beta<1$, and Lemma~16's lower bound on
$\beta$.  AEMV's unchanged Section~2 reduction therefore gives
\[
\CR(\LQD)\le1+1/\check\varrho^*\le1.6917948
\quad\text{and}\quad
\CR(\LQD)\le1+1/\hat\varrho\le1.683652,
\]
respectively.  Thus the repaired journal row also restores the weaker
conference guarantee $1.707$.

For the exact additive term in the theorem, sum the payments in our
row.  The mapping gives
$\sum_q\hat e_q\ge\OPT-\LQD$, while the telescoped feasibility bound
from Section~\ref{sec:aemvdefs} gives total payment at most
$\LQD+\alpha B$.  Hence
\[
\hat\varrho(\OPT-\LQD)
\le \hat\varrho\sum_q\hat e_q
\le \sum_q\operatorname{payment}_q
\le \LQD+\alpha B,
\]
which rearranges to the theorem's stated
$\OPT\le(1+1/\hat\varrho)\LQD+(\alpha/\hat\varrho)B$.
\end{proof}

\section{Conclusion and future work}\label{sec:conclusion}

In this work, we have narrowed the certified gap for LQD to
\[
\CR(\LQD)\;\in\;[\,1.46929591,\ 1.683652\,],
\]
down from the previously published $[1.44546086,\,1.6918]$.  We
compute the lower bound in exact integer arithmetic
(Section~\ref{sec:lower}), independently of the repaired aggregation
step.  The upper bound is a certified decimal that interval
arithmetic places just below a closed-form value at an algebraic
minimizer; a second small interval certificate verifies the uniform
finite-head reserve (Section~\ref{sec:upper}).  It rests, together with the restoration
of the previously published $1.6918$, on the repaired aggregation
step (Section~\ref{sec:erratum}), without which even $1.6918$ is
unproven as written.

The direction we find most promising for immediate future work is the \emph{universal} lower
bound, that is, the best ratio provable against \emph{every} deterministic
online policy, not only LQD.  \citet{BDGN19}'s staggered-birth
adversary certifies $\sqrt2$ there (Section~\ref{sec:related}), but its
linear-programming form reaches only $1.3274$ on the uniform family
$\Phi_k$, which we believe is a limitation of the family, not of the method.  The
front-loaded family introduced here is the natural candidate to drive
that program past $\sqrt2$: it is the ingredient that lifted
the LQD-specific bound, so feeding it into the universal-adversary
formulation is, we hope, the most likely route to progress there.

Closing the LQD-specific gap $[\,1.46929591,\,1.683652\,]$ looks harder
at both ends.  On the lower side, the front-loaded family appears close
to the best a construction of this kind can certify: within it the
value moves only in later decimals, so we expect further progress to
require a qualitatively new construction rather than a refinement of
this one.  On the upper side, for each fixed $\alpha$ we solve the
continuum envelope relaxation exactly; Theorem~\ref{thm:improved} uses
the nearby rational choice $\alpha=613/1000$, and further improvements
may require coupling the per-queue analysis across the shared
buffer.  The payment scheme charges each queue's total
(Section~\ref{sec:aemvdefs}) against its own outstanding OPT-extra
credit $\hat e_q$; currently each queue must clear this charge
\emph{on its own}, i.e.\ reach payment $\ge\varrho\,\hat e_q$.  One possible next
step is a \emph{funding layer} relaxation that instead lets queues pool
their payment surplus across the shared buffer, requiring only the
buffer-wide total to clear $\varrho\sum_q\hat e_q$; then a queue running
short could draw on another queue's surplus.  We leave the formulation
and certification of such a coupled relaxation to future work.

We also note our certificate-first approach of Section~\ref{sec:lower}: replacing an asymptotic construction with a single checkable instance. We believe that a similar approach may be useful for related buffer management problems and leave this research program for future work.

In one sentence: the certified gap for LQD now reads
$\CR(\LQD)\in[\,1.46929591,\ 1.683652\,]$, and every number in that
sentence can be recomputed from the certificates shipped with this
paper.


\appendix
\section{Certificates and reproducibility}\label{sec:appendix-certs}

We have made the code for every certificate referred to in this paper
openly available at
\url{https://doi.org/10.5281/zenodo.21260154}.  The directory
\certpath{certificates/} contains runnable producers and recorded outputs,
not only the numerical answers.  Install the pinned Python dependencies
with
\begin{center}
\texttt{python3 -m pip install -r certificates/requirements.txt}
\end{center}
and run the complete claim map with
\begin{center}
\texttt{python3 certificates/verify\_numbers.py}.
\end{center}
Every critical check uses an explicit failure branch and therefore
also runs under \texttt{python3 -O}.  The full run is dominated by the
$k=3\cdot10^5$ exact-integer row of Table~\ref{tab:frontladder} and
typically takes a few minutes on one core.  Rebuilding that
row additionally requires a POSIX shell and a C++17 compiler.  Paths
in the table below are relative to \certpath{certificates/}; paths in
the surrounding prose and commands are relative to the repository root.

\begin{table}[!t]
\centering\small
\begin{tabular}{
  >{\raggedright\arraybackslash}p{0.30\linewidth}
  >{\raggedright\arraybackslash}p{0.38\linewidth}
  >{\raggedright\arraybackslash}p{0.22\linewidth}}
\toprule
Claim & Artifact & Trust base \\
\midrule
Def.~\ref{def:frontfamily}: the 310-knot profile $F^{\front}$ &
\certpath{data/front_loaded_profile_cdf.json} & data \\
Def.~\ref{def:frontfamily}: instance generator ($\ell_j$, $B$) &
\certpath{lower_bound/instance_generator.py} & exact/float \\
Table~\ref{tab:frontladder}: the six certified rows, tie rule $T_0$ &
\certpath{cpp/} (self-contained) + \certpath{lower_bound/run_all_rows.py} &
exact integers \\
Section~\ref{sec:thegap}: the symbolic countermodel &
\certpath{erratum/aggregation_counterexample.py} & symbolic (sympy) \\
Section~\ref{sec:thegap}: the realized-instance witness ($n=2$ pattern
on an explicit run) &
\certpath{erratum/realized_counterexample.py} & exact rationals; brute-forced
offline optimum \\
Lemma~\ref{lem:jclosed}: $\inf J_{\rm cont}>\hat\varrho$,
$\CR\le1+1/\hat\varrho\le1.683652$;
$\varrho<\inf J_{\rm cont}$ at both exact parameter rows &
\certpath{upper_bound/closed_form_certificate.py} &
60-digit interval arithmetic \\
Lemma~\ref{lem:reserve} and Lemma~\ref{lem:basewindows}: the two
tangent bounds, the $b=1,2$ endpoint checks, the extra-S reserve, and
the two direct base-window inequalities &
\certpath{upper_bound/fixed_b_certificate.py} &
60-digit interval arithmetic \\
Table~\ref{tab:constants}: AEMV's recomputed constants &
\certpath{upper_bound/g_alpha_recompute.py} & 60-digit floating point \\
\bottomrule
\end{tabular}
\caption{Certificate-backed claims, their verifying scripts, and the
arithmetic each verification trusts; \texttt{certificates/README.md}
gives the full claim map.}
\label{tab:certificates}
\end{table}

The lower-bound simulations use checked unsigned 64-bit integer
arithmetic, with all certified values well below $2^{64}$; their exact
integer buffer sizes and reported totals are compared against Python's
unbounded integers.  The two erratum scripts use exact symbolic or
rational arithmetic.  For the
upper bound, \certpath{upper_bound/closed_form_certificate.py} encloses the algebraic
minimizer and the logarithm in Lemma~\ref{lem:jclosed}, at both exact
parameter rows.  The companion
\certpath{upper_bound/fixed_b_certificate.py} then checks only the scalar leaves of the
analytic proof of Lemma~\ref{lem:reserve}: the two endpoint tangent
values for the convex function $\Xi_3$, the $b=1,2$ endpoint values and
derivatives, $\Xi+h>1/2$ at the two exceptional pairs, and the radical
base-window inequalities of Lemma~\ref{lem:basewindows}.  It performs
no search over integer pairs, paths, grids, or block lengths.

The repository retains earlier region, grid, and extended-length
programs as archived cross-checks, but no theorem in this revision
depends on them.  The critical/legacy distinction and the precise
producer--consumer map are documented in
\certpath{certificates/README.md}.  An independent Python engine that
agrees with the C++ engine on the exact $\OPT$ and $\LQD$ totals at
$k=400$ (both under $T_0$) is also kept as a noncritical cross-check; run
under the opposite high-index-drop rule it returns the pinned, strictly
larger value $\LQD=1{,}806{,}170$, a
concrete witness of the tie-rule sensitivity that
Theorem~\ref{thm:tieinv} controls.

Taken together, every certificate-backed numerical claim in this
paper can be reproduced, and every certificate re-verified, by a
single script run in \certpath{certificates/}.

\section{Deferred proofs: tie-rule invariance}\label{sec:appendix-tie}

In this appendix, we prove Theorem~\ref{thm:tieinv}: the certified
lower bound holds for every non-clairvoyant tie rule.  The proof rests
on three tools: an offline normal form (Lemma~\ref{lem:normalform}, which
also justifies the ``live service plus hoards'' shape of the fluid
model of Section~\ref{sec:lower}); a localization of the tie-rule
freedom to the overflow margins (Proposition~\ref{prop:tiefreedom});
and an exchange move (Lemma~\ref{lem:exchange}, illustrated in
Figure~\ref{fig:exchange}).  With the three tools in place, the proof
of the theorem couples the online and offline runs.

\begin{lemma}[offline normal form]\label{lem:normalform}
On a saturated interval instance (Definition~\ref{def:vocab}), every
offline schedule can be replaced by one of equal value, with pointwise
no larger buffer occupancy, in \emph{admit-late} form:
\begin{enumerate}[label=(\alph*)]
\item each packet a queue transmits while live is admitted in the slot
of that transmission, and each packet it transmits after death is
admitted at its death slot $\delta_q$ (where the queue receives its
final $B$ arrivals, so the whole post-death hoard, at most $B$
packets, fits);
\item consequently, at the end of the arrival phase of every slot
\emph{strictly before} $\delta_q$, a queue holds at most one packet,
and from $\delta_q$ on it holds exactly its remaining transmissions: an
optimal offline policy never needs to accumulate packets in a queue
before its death slot;
\item the normal form \emph{never evicts} a stored packet (every
packet it admits, it transmits), so a corpse holding $H\ge1$ packets
at the end of a slot transmits one packet in each of the next $H$
slots; in particular, at any slot after $\delta_q$ it transmits from
$q$ if and only if $q$'s content there is nonzero.
\end{enumerate}
\end{lemma}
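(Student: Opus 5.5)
Let me fix an arbitrary offline schedule $S$ on the saturated interval instance and turn it into admit-late form with a transmission-preserving exchange. Record, for each queue $q$, the set of slots in which $S$ transmits from $q$. Call them live transmissions if they fall in $[\gamma_q,\delta_q]$ and post-death transmissions if they come after $\delta_q$. The new schedule $S'$ transmits from exactly the same queues in exactly the same slots, so its value equals that of $S$. It admits packets as follows. For each live transmission at slot $t\le\delta_q$, it admits one of the $B$ packets arriving to $q$ at $t$, during that slot's arrival phase. For the post-death transmissions of $q$, say $H_q$ of them, it admits $H_q$ packets at $\delta_q$. It never evicts anything. Saturation makes this legal: $q$ receives $B$ packets at every live slot, in particular at $\delta_q$. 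Moreover $H_q\le B$, because every post-death transmission of $S$ from $q$ uses a packet stored at the end of slot $\delta_q$ (there are no later arrivals), and $S$ stores at most $B$ packets in total.

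The central step is pointwise occupancy domination: at the end of every arrival phase, the content of $q$ in $S'$ is at most its content in $S$, for each $q$ separately. Summing over $q$ then gives feasibility of $S'$ (total at most $B$) and the claimed pointwise inequality.

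\begin{itemize}
\item At a slot $t<\delta_q$, $S'$ holds at most one packet in $q$, namely the one transmitted at $t$. It holds one only if $S$ transmits from $q$ at $t$, and then $S$ also holds at least one.
\item At a slot $t\ge\delta_q$, $S'$ holds exactly the number of $S$'s transmissions from $q$ at slots $\ge t$. Every one of those must be served by a packet already stored at the end of slot $t$'s arrival phase, because no arrivals to $q$ occur after $\delta_q$ (and at $t=\delta_q$ the stored set includes that slot's admissions). So $S$'s content is at least this count.
\end{itemize}

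This per-queue, per-slot counting is where I expect the real care to be needed. I would handle it by arguing about the future transmissions that a given state must support, rather than tracking individual packets. That way evictions and push-outs in $S$ are harmless, since they only increase $S$'s content relative to what it actually sends.

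Properties (a)–(c) then follow directly from the construction:
\begin{itemize}
\item (a) is the admission rule itself.
\item (b) is the first case of the domination argument, read for $S'$ alone: at most one packet before $\delta_q$, and exactly the remaining transmissions from $\delta_q$ onward.
\item (c) holds because every admitted packet is matched to a transmission slot, so $S'$ never evicts. A corpse holding $H$ packets at the end of slot $\delta_q$ then holds exactly its remaining transmissions.
\end{itemize}

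One small point remains for (c), that the corpse sends in each of the next $H$ consecutive slots. Transmission is automatic whenever a queue is nonempty (Definition~\ref{def:switch}), so $S$'s post-death transmissions from $q$ already occupy consecutive slots $\delta_q+1,\dots,\delta_q+H_q$. Hence $S'$ transmits from $q$ after death exactly when $q$ is nonempty.
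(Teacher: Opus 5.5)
Your proposal is correct and follows essentially the paper's own argument: you re-time each live transmission's admission to its own slot and each post-death transmission's admission to $\delta_q$, then prove per-queue, per-slot occupancy domination by counting the transmissions the current state must still support; this makes evictions in $S$ harmless, where the paper instead first discards packets that are never transmitted. One small sharpening: at $\delta_q$ you must admit $z+H_q$ packets from $q$'s $B$ arrivals, where $z\in\{0,1\}$ is the death-slot transmission, so the bound actually needed is $z+H_q\le B$, not the $H_q\le B$ you state, and your own domination bound at $t=\delta_q$ already gives it.
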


\begin{proof}
Fix an offline schedule $S$; discard (without loss of value) any packet
$S$ admits but never transmits.  Build $S'$ by re-timing admissions:
a packet that $S$ transmits from queue $q$ at slot $s$ while $q$ is
live is admitted by $S'$ at slot $s$ itself. This is legal, since $q$
receives $B\ge1$ fresh packets at every live slot (saturation) and the
arrival phase precedes transmission; a packet that $S$ transmits from
$q$ after $q$'s death is admitted by $S'$ at the death slot $\delta_q$,
where $q$ receives $B$ packets.  The hoard fits: no packet arrives at $q$
after $\delta_q$'s arrival phase, so \emph{every} schedule transmitting
from $q$ at $\delta_q$ or later ($S$ among them) must already store
all of them at $\delta_q$, whence the buffer cap bounds their total (the
death-slot transmission together with the post-death hoard) by $B$.  $S'$ transmits
the same number of packets from each queue in the same slots (the
post-death transmissions are automatic, hence occupy consecutive
slots under $S'$ just as under $S$), so its value equals $S$'s.
Occupancy: strictly before its death slot, a queue holds a packet under
$S'$ only in a slot where it transmits, and then holds $\ge1$ under $S$
too; from its death slot on, $S'$ holds exactly the queue's remaining
transmissions, and since no arrivals reach the queue after death every
schedule ($S$ included) must already store those packets.  So
$S'$'s occupancy is pointwise $\le S$'s.
Finally, $S'$ never evicts: it admits exactly the packets it
transmits, each admitted at or before its transmission slot, so its
queues drain by transmissions alone.  Since transmission is automatic
(every nonempty queue transmits each slot,
Definition~\ref{def:switch}), a corpse under $S'$, holding exactly
its remaining transmissions, transmits in precisely its next $H$
slots.
\end{proof}

\begin{proposition}[one-phase tie-rule freedom, localized]\label{prop:tiefreedom}
Fix the queue-level vector immediately before the evictions of one
arrival phase, and fix the number of packets that must be evicted.  The
multiset of levels after LQD finishes those evictions is independent of
the tie choices made \emph{within that phase}.  Equivalently, two runs
that enter the phase through a bijection matching equal levels and then
receive matched current arrivals finish the eviction phase with the same
level multiset.

The only remaining freedom is over queue \emph{identities}, and it
is confined to the terminal margin: for some cutoff $w$, the
last-touched tied group ends with $R\ge0$ queues at level $w+1$ and
the rest at $w$, and its $R$ marginal packets are the
\emph{residuals}.  Comparing two runs matched by the entry bijection,
the terminal level multisets agree, so the high--low disagreements at
the margin balance: every queue at $w+1$ in one run whose partner sits
at $w$ can be paired with a queue having the opposite mismatch.
Re-pairing each such pair restores a level-matching bijection.  If the
two re-paired queues also have identical remaining arrival schedules
(in particular, if both are corpses, or if their future live intervals
coincide), the re-pairing preserves the schedule matching as well.
\end{proposition}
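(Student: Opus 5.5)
The plan is to model one arrival phase of LQD as a water-filling procedure. Let $v_1,\dots,v_n$ be the levels after admitting all arrivals, and let $D$ be the number of packets that must be evicted. Each eviction removes one packet from a currently maximal queue. I will prove by induction on the number of evictions performed that the level multiset after $d$ evictions depends only on the initial multiset and on $d$. The inductive step is immediate: removing one unit from \emph{some} element of maximal value gives the same multiset no matter which maximal element is chosen, because all maximal elements have the same value. So the final multiset is a deterministic function of the initial multiset and $D$. For two runs joined by a level-matching entry bijection with matched current arrivals, the post-arrival multisets coincide; since $D$ is determined by the total, it coincides as well, and therefore so do the final multisets. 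This gives the first paragraph of the proposition.

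Next I will describe the terminal multiset explicitly. Let $w$ be the largest integer such that $\sum_q (v_q-w)^+\ge D$ fails at $w-1$; more precisely, choose $w$ so that $\sum_q (v_q-(w+1))^+ < D\le\sum_q(v_q-w)^+$. Every queue with $v_q\le w$ is left untouched. Every queue with $v_q>w$ ends at $w$ or $w+1$, and exactly $R=\sum_q(v_q-w)^+-D$ of them end at $w+1$. This is the \emph{last-touched tied group}: the queues whose post-arrival level was at least $w+1$. The tie rule's only effect is the choice of \emph{which} $R$ members of this group keep the marginal packet, since every other queue's final level is forced by its own $v_q$.

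For the re-pairing claim, I will compare two runs through the entry bijection $\pi$. Each queue outside the group ends at the same level as its partner. Inside the group, a queue ends at $w+1$ in run~1 and at $w$ in run~2, or the other way round. Both runs place exactly $R$ queues at $w+1$, so among group pairs the number of (high,~low) mismatches equals the number of (low,~high) mismatches. I will match them arbitrarily, and for each matched pair $(q,q')$ with $\pi(q)$ low and $\pi(q')$ high, swap the partners to get $q\mapsto\pi(q')$ and $q'\mapsto\pi(q)$. The resulting bijection matches final levels. Finally, if the two swapped partners have identical remaining arrival schedules, the composed bijection still sends each queue to one whose future arrivals match. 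Corpses trivially qualify, because they have no future arrivals.

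I expect the only delicate point to be the exact cutoff bookkeeping: the argument must handle queues that sit at $w$ before eviction and so belong to no group, and the degenerate case $R=0$. For the latter I will note that when $R=0$ every group member ends at $w$, no mismatches exist, and nothing has to be re-paired.
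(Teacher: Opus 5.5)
Your proposal is correct and follows the paper's proof: the same induction shows that each eviction decrements one copy of the current maximum of the level multiset, and the same high--low mismatches at the margin are balanced and their partners crossed, with schedule matching preserved when the swapped queues have identical futures. The one difference is presentational. You pin down the cutoff explicitly via $\sum_q(v_q-(w+1))^+<D\le\sum_q(v_q-w)^+$ and $R=\sum_q(v_q-w)^+-D$, whereas the paper locates it by observing that the running maximum is non-increasing and every eviction occurs at level at least $w+1$. Your explicit formula also handles $R=0$ cleanly, and the case $D=0$ is trivial.
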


\begin{proof}
On the level multiset, one eviction replaces one occurrence of the
current maximum $M$ by $M-1$.  This multiset operation is independent of
which queue realizes that occurrence.  Induction over the required
evictions proves the first claim.  Repeatedly decrementing the maximum
drains the top of the multiset down to a flat cutoff $w$ with $R$
queues left one above it (the shape of Remark~\ref{rem:tierule}).
Every eviction occurs at the running maximum of the multiset, which is
non-increasing and never falls below the final maximum $w+1$; hence
every evicted queue ends the phase at $w$ or $w+1$, and a queue whose
level ever sits at $w$ or below is never touched, so its terminal
level and identity are the same in both runs.  Only the $R$ residual
assignments at the margin remain free.
Because both runs have the same number of queues at every terminal
level, high--low mismatches balance and can be paired.  Crossing the
partners in each pair matches equal levels, and when their remaining
arrival schedules agree, crossing preserves that agreement.
\end{proof}

The exchange move is the tool for the moment when the coupling must
swap a live queue with a corpse.  The live queue $L$'s final arrival
batch rebuilds the corpse's hoard inside $L$, the stale corpse $C$ is
dropped down to the one packet it may still have to send, and the
offline bijection is crossed; transmissions and occupancies match the
reference schedule's exactly, so the tracking invariant survives
(Figure~\ref{fig:exchange}).

\begin{lemma}[the exchange move]\label{lem:exchange}
Let $S$ be an offline schedule in the normal form of
Lemma~\ref{lem:normalform} for an instance $I$, and let $\tilde S$ be a
schedule for an instance $\tilde I$ that has tracked $S$ through a
bijection $\kappa$ of queues up to slot $t-1$: at the end of slot $t-1$
every queue of $\tilde I$ stores exactly as many packets as its
$\kappa$-partner does under $S$.  Suppose that in $\tilde I$ queue $L$ is
live at slot $t$ (receiving $B$ packets there) with its arrivals
ending at $t$, queue $C$ is dead,
$\kappa(L)$ is live under $S$ at $t$ \emph{with arrivals continuing past
$t$} (so, in normal form, $\kappa(L)$ ends slot $t$ empty), and $\kappa(C)$
is dead holding $H$ packets.  

\emph{Notation.} Write $z_L,z_C\in\{0,1\}$ for the
transmissions $S$ makes from $\kappa(L),\kappa(C)$ at slot $t$; by
clause (c) of Lemma~\ref{lem:normalform},
$$z_C=\min\{H,1\}\quad\text{and}\quad H'=H-z_C.$$
Then $\tilde S$ can act
within slot $t$ so that: the pair $\{L,C\}$ transmits exactly
$z_L+z_C$ packets at $t$; its occupancy at the end of the arrival
phase is exactly $z_L+H$, matching that of
$\{\kappa(L),\kappa(C)\}$ under $S$; and at the end of slot $t$, queue $L$
holds $H'$ and $C$ holds $0$, exactly the slot-$t$-end contents of
$\kappa(C)$ and $\kappa(L)$ under $S$.  Hence $\tilde S$ has tracked $S$
through the \emph{crossed} bijection $\kappa'=\kappa\circ(L\,C)$, with the
same transmission count and the same end-of-phase buffer occupancy at
slot $t$.
\end{lemma}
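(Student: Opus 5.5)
The proof is an explicit construction of $\tilde S$'s actions in slot $t$, followed by three bookkeeping checks: transmissions, arrival-phase occupancy, and end-of-slot contents. The construction uses that $L$ is live at $t$ and receives $B$ fresh packets, and it relies on clause (c) of Lemma~\ref{lem:normalform} for $S$.

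\emph{Entering state.} By the tracking hypothesis, at the end of slot $t-1$ queue $L$ stores what $\kappa(L)$ stores under $S$. Since $\kappa(L)$ is live before its death slot, clause (b) of Lemma~\ref{lem:normalform} says this is at most one packet. Queue $C$ stores what $\kappa(C)$ stores, which is $H$, because a normal-form corpse changes only by transmission and $H$ is its content entering slot $t$. (I read $H$ as the content at the start of slot $t$, so that $H'=H-z_C$ is the end-of-slot content.)

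\emph{Construction.} In the arrival phase of slot $t$, $\tilde S$ does the following:
\begin{itemize}
\item it evicts any stored packet of $L$;
\item it admits from $L$'s fresh batch exactly $z_L+H-z_C$ packets, which is legal since $z_L+H\le 1+B$ and $B\ge1$ (if the count exceeds $B$, note $H\le B$ anyway, so $H+z_L-z_C\le B$);
\item it evicts packets of $C$ down to exactly $z_C$.
\end{itemize}
The pair $\{L,C\}$ then holds $z_L+H$ packets at the end of the arrival phase. This matches $\{\kappa(L),\kappa(C)\}$ under $S$: $\kappa(L)$ holds exactly $z_L$ (admit-late form, live with continuing arrivals), and $\kappa(C)$ holds $H$.

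\emph{Transmission phase.} $C$ sends $z_C$ and is left empty. $L$ sends one packet whenever it is nonempty. If $z_L+H-z_C\ge1$, then $L$ sends one packet and ends with $z_L+H-z_C-1$ packets.

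This is the step I expect to be the main obstacle. I need $L$ to transmit exactly $z_L$ packets and end with exactly $H'$. That holds when $z_L=1$. When $z_L=0$ and $H'>0$, $L$ would transmit an extra packet. I would try to rule this case out: a live queue with arrivals continuing past $t$ whose normal-form schedule does not transmit at $t$ is possible in principle. If it occurs, I would instead admit $H'+1$ packets into $L$ and let $C$ hold $z_C$. The pair total $z_L+z_C$ then differs, but I expect the lemma's accounting to be meant on the count $z_L+z_C$ together with the stated contents. If the mismatch persists, I would assume $z_L=1$, as in the intended use where live partners of hoarding queues always transmit, and record this as the lemma's implicit hypothesis.

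\emph{Conclusion.} The end-of-slot contents are $H'$ for $L$ and $0$ for $C$. These equal the end-of-slot contents of $\kappa(C)$ (namely $H-z_C$) and of $\kappa(L)$ (namely $0$, since it admitted exactly $z_L$ and sent it). Every other queue is untouched, so the crossed bijection $\kappa'=\kappa\circ(L\,C)$ tracks $S$ at the end of slot $t$. The transmission count and arrival-phase occupancy at slot $t$ equal $S$'s, and the buffer cap holds because the total occupancy matches $S$'s, which is at most $B$.
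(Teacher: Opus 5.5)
Your construction is right when $z_L=1$ and when $z_L=0$, $H'=0$; these are the paper's first two cases. It fails in the remaining case $z_L=0$, $H'\ge1$, and none of your fallbacks repairs it.

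\textbf{Where it fails.} In this case $z_C=1$ and $H\ge2$. Your rule puts $H'$ packets in $L$ and keeps one in $C$, so both queues are nonempty at the transmission phase and both are forced to send. The pair then sends $2\neq z_L+z_C=1$, and $L$ ends with $H'-1\neq H'$.

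\textbf{Why the fallbacks do not work.} Your alternative admits $H'+1$ packets into $L$ and keeps $z_C$ in $C$. This fixes the end-of-slot contents, but the pair's arrival-phase occupancy becomes $H+1$, one more than $S$'s. Since every other queue copies $S$, $\tilde S$'s total exceeds $S$'s by one, so the buffer bound is no longer inherited from $S$; it is violated outright whenever $S$'s buffer is full at $t$. The transmission count is also still $2$. Nor can you simply assume $z_L=1$. A normal-form schedule need not transmit from every live queue at every slot; for instance, \LateQD{} keeps packets in only $B$ queues when more than $B$ are nonempty. The lemma carries no such hypothesis, and the coupling in Theorem~\ref{thm:tieinv} applies it to arbitrary mismatch pairs.

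\textbf{The missing idea.} The lemma asks only for agreement of the \emph{pair} totals, not per-queue agreement. So $L$ may perform the transmission that $\kappa(C)$ makes under $S$. In this case:
\begin{itemize}
\item drop $C$ all the way to $0$;
\item admit all $H$ packets into $L$, which is legal because $H\le B$ ($S$ itself stores $\kappa(C)$'s $H$ packets).
\end{itemize}
The arrival-phase occupancy is then $H=z_L+H$. $L$'s forced transmission is the pair's single transmission, matching $z_L+z_C=1$. $L$ ends with $H-1=H'$ and $C$ ends empty, which are exactly the crossed contents. With this third case added, your argument is complete.
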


\begin{proof}
Transmissions in the model are automatic (every nonempty queue
transmits one packet per slot, Definition~\ref{def:switch}), so
$\tilde S$ chooses only drops and admissions, and the move must leave
each queue holding, at the transmission phase, a content whose
\emph{forced} transmission produces the stated end state. There are three cases; in each, the
drop is legal (drops always are), the admission fits into $L$'s $B$
fresh slot-$t$ arrivals, and the pair's occupancy at the end of the
arrival phase equals $z_L+H$, which is exactly what $S$ stores in
$\kappa(L),\kappa(C)$ at that point: $\kappa(L)$ holds $z_L$ (in normal
form, exactly the packet it sends at $t$, if any) and $\kappa(C)$ holds
$H$. So, since the move touches no other queue (and simultaneous moves
within one slot act on disjoint pairs, each matching its own
$S$-counterpart's occupancy and transmissions), feasibility of the
buffer bound is inherited from $S$.

\emph{Case $z_L=1$.}  Drop $C$ from $H$ to $z_C$ and admit
$1+H'$ packets into $L$; this fits, since
$1+H'\le1+H=z_L+H\le B$ ($S$ stores $z_L+H$ packets in its pair
alone, within a buffer of size $B$).  At the transmission phase $L$
holds $1+H'\ge1$ and sends one packet ($=z_L$), and $C$ holds and
sends $z_C$ (if $z_C=1$ then $H\ge1$, so the packet is there).
End state: $L$ holds $H'$, $C$ holds $0$; the pair sent
$1+z_C=z_L+z_C$.

\emph{Case $z_L=0$, $H'=0$.}  Here $H=z_C$.  Admit nothing into
$L$ and drop nothing ($C$ already holds $z_C$).  $L$ is empty and
sends nothing ($=z_L$); $C$ sends its $z_C$ and ends empty.  End
state: $L$ holds $0=H'$, $C$ holds $0$; occupancy $z_C=z_L+H$.

\emph{Case $z_L=0$, $H'\ge1$.}  Then $H\ge H'\ge1$, so
$z_C=\min\{H,1\}=1$ and $H=H'+1\ge2$.  Drop $C$ from $H$ to $0$ and
admit all of $H$ into $L$ ($H\le B$, as $S$ stores $\kappa(C)$'s $H$
packets in its buffer).  At the transmission phase $L$ holds $H\ge1$
and \emph{must} send one packet (this forced transmission is why a
literal ``admit $H'$ into $L$ and transmit nothing from it'' is not
available), while $C$, now empty, sends nothing.  End state: $L$
holds $H-1=H'$, $C$ holds $0$; the pair sent $1=z_L+z_C$ packets,
and its arrival-phase occupancy was $H+0=z_L+H$.

In all three cases the end-of-slot contents are exactly those of the
crossed partners: with arrivals past $t$, $\kappa(L)$ ends slot $t$ empty
(Lemma~\ref{lem:normalform}), and $\kappa(C)$ ends it holding
$H-z_C=H'$.
\end{proof}

\begin{figure}[!t]
\centering
\begin{tikzpicture}[
  sq/.style={draw,fill=black!12,minimum size=0.26cm,inner sep=0pt},
  qlab/.style={font=\scriptsize},
  rowlab/.style={font=\scriptsize\itshape,anchor=east},
  kap/.style={dashed,black!55}]
\def\stack#1#2#3{
  \foreach \i in {1,...,#3}{\node[sq] at (#1,#2+0.28*\i-0.14) {};}}
\begin{scope}
\node[font=\scriptsize\bfseries] at (1.5,3.1) {end of slot $t{-}1$};
\node[rowlab] at (-0.4,2.3) {under $S$:};
\draw (0.7,2.0) -- (1.1,2.0); \node[qlab] at (0.9,1.72) {$\kappa(L)$};
\node[qlab,black!55] at (0.9,2.18) {$0$};
\draw (2.0,2.0) -- (2.4,2.0); \node[qlab] at (2.2,1.72) {$\kappa(C)$};
\stack{2.2}{2.0}{3}
\node[qlab,black!55,anchor=west] at (2.45,2.42) {$H$};
\node[rowlab] at (-0.4,0.3) {under $\tilde S$:};
\draw (0.7,0.0) -- (1.1,0.0); \node[qlab] at (0.9,-0.28) {$L$};
\node[qlab,black!55] at (0.9,0.18) {$0$};
\draw (2.0,0.0) -- (2.4,0.0); \node[qlab] at (2.2,-0.28) {$C$};
\stack{2.2}{0.0}{3}
\draw[kap] (0.9,1.62) -- (0.9,0.42);
\draw[kap] (2.2,1.62) -- (2.2,1.05);
\node[font=\scriptsize,black!55] at (1.55,1.05) {$\kappa$};
\end{scope}
\begin{scope}[xshift=3.85cm]
\draw[->,black!70] (0.15,1.15) -- (2.75,1.15);
\node[font=\scriptsize,align=center,black!70,text width=2.9cm] at (1.45,2.55)
  {slot $t$, under $S$: $\kappa(L)$ receives and sends $z_L{=}1$; $\kappa(C)$ drains one ($z_C{=}1$)};
\node[font=\scriptsize,align=center,black!70,text width=2.9cm] at (1.45,-0.35)
  {slot $t$, under $\tilde S$: $L$ admits $1{+}H'$ of its final $B$ arrivals, sends one; $C$ is dropped to $z_C$, sends it, ends empty};
\end{scope}
\begin{scope}[xshift=7.9cm]
\node[font=\scriptsize\bfseries] at (1.5,3.1) {end of slot $t$};
\draw (0.7,2.0) -- (1.1,2.0); \node[qlab] at (0.9,1.72) {$\kappa(L)$};
\node[qlab,black!55] at (0.9,2.18) {$0$};
\draw (2.0,2.0) -- (2.4,2.0); \node[qlab] at (2.2,1.72) {$\kappa(C)$};
\stack{2.2}{2.0}{2}
\node[qlab,black!55,anchor=west] at (2.45,2.28) {$H'$};
\draw (0.7,0.0) -- (1.1,0.0); \node[qlab] at (0.9,-0.28) {$L$};
\stack{0.9}{0.0}{2}
\draw (2.0,0.0) -- (2.4,0.0); \node[qlab] at (2.2,-0.28) {$C$};
\node[qlab,black!55] at (2.2,0.18) {$0$};
\draw[kap] (0.9,1.62) -- (2.2,0.42);
\draw[kap] (2.2,1.75) -- (0.9,0.75);
\node[font=\scriptsize,black!55,fill=white,inner sep=1pt] at (1.55,1.09) {$\kappa'=\kappa\circ(L\,C)$};
\end{scope}
\end{tikzpicture}
\caption{The exchange move of Lemma~\ref{lem:exchange} in its main
case ($z_L=1$, drawn with $H=3$): transmissions and occupancies match
on both sides, and the coupling continues with the crossed bijection
$\kappa'$.}
\label{fig:exchange}
\end{figure}
\FloatBarrier

\begin{proof}[Proof of Theorem~\ref{thm:tieinv}]
We construct $I_T$ slot by slot while running $\LQD_T$ on it, and
compare against the fixed reference run of $\LQD_{T_{\mathrm{ref}}}$
on $I$.  The
proof maintains \emph{two} bijections between the queues of $I_T$ and
those of $I$, updated at different events; keeping them separate is
needed for the corpse cases.  (We write $\omega$ for the
online coupling and $\kappa$ for the offline one; $\sigma$, $\varrho$,
and $\tau_i$ are reserved for the AEMV import of Section~\ref{sec:aemvdefs}.)

\emph{The online coupling ($\omega$).}  Invariant $(\star)$: at the end
of every slot, $\omega$ matches queues of equal $\LQD$-buffer level,
and $\omega$-partners have identical remaining arrival schedules ($B$
packets per slot over the same set of future slots; in particular,
equal liveness).
The arrivals of $I_T$ at each slot are \emph{defined} through
$\omega$: the queue matched by $\omega$ to each live reference queue
receives the corresponding arrival batch of $I$ (of size $B$), and
when a cohort of fresh queues
is born in $I$, $I_T$ receives fresh queues with identical schedules,
extending $\omega$ by the identity on them.  Since $T$ is non-clairvoyant, its slot-$t$
choices are made before any arrival after $t$ is fixed, so this
construction is well-founded.

Given $(\star)$ at the start of a slot, both runs receive
$\omega$-matched arrivals, so the eviction processes start from equal
level multisets; each eviction decrements a maximal element, so the
post-phase \emph{level multisets} agree
(Proposition~\ref{prop:tiefreedom}) whatever $T$ chooses.  The two
states can therefore disagree only on \emph{which} queues occupy the
margin levels $w,w{+}1$, and the disagreements split into balanced
mismatch pairs: since the multisets agree, $I_T$ and $I$ hold equally
many queues at level $w{+}1$ (and equally many at $w$), so each queue
that is high in $I_T$ where its $\omega$-partner is low is matched by
one that is low where its partner is high.  Repair each pair by its
type:
\begin{enumerate}\itemsep1pt
\item[(a)] \emph{both live at $t$, at least one receiving past $t$}:
swap the two queues' remaining arrival schedules (both are ours to
assign) and re-pair $\omega$ so levels align;
\item[(b)] \emph{neither receiving past $t$} (corpses, and queues at
their death slot: at the slot's end the two are indistinguishable):
re-pair $\omega$ only; remaining schedules are empty on both sides,
so $(\star)$'s schedule clause holds trivially;
\item[(c)] \emph{one receiving past $t$, the other dead}:
perform \emph{surgery}: truncate the live queue
$L$'s schedule at the current slot $t$ (it is dead from $t{+}1$),
redirect its remaining schedule to the corpse $C$ (it is live,
\emph{resurrected}, from $t{+}1$), and re-pair $\omega$ crosswise,
$C\mapsto\omega(L)$, $L\mapsto\omega(C)$.  Levels align in either
orientation of the mismatch, and the redirected schedule is $B$ per
slot up to $\omega(L)$'s death slot, as $(\star)$ requires.
\end{enumerate}
No mismatched queue is unborn: an unborn queue and its
$\omega$-partner both hold level $0$ and receive nothing through slot
$t$, so their levels cannot disagree; hence every mismatched queue
that receives past $t$ is live at $t$ and received its $B$-packet
batch there.  With this, the three types are exhaustive: if neither
queue receives past $t$ the pair is type~(b); if both do, type~(a);
and with exactly one, the pair is type~(a) when the other queue is
dying at $t$ (still live, so the swap applies) and type~(c) when it is
a corpse.
Repairs act on disjoint pairs and restore $(\star)$ at the slot's end.
By induction $(\star)$ always holds; in each slot both buffers have
equal numbers of nonempty queues, so the transmission counts agree slot by
slot, and $\LQD_T(I_T)=\LQD_{T_{\mathrm{ref}}}(I)$.  (For randomized $T$, run the
construction on each realization.)

\emph{The offline mimic ($\kappa$).}  Let $S$ be an optimal schedule for
$I$, normalized as in Lemma~\ref{lem:normalform}.  We build a feasible
schedule $S_T$ for $I_T$ with value$(S_T)=$ value$(S)$, tracking $S$
through a second bijection $\kappa$ (initially the identity, and
extended by the identity on each fresh cohort, in step with $\omega$)
with the invariant $(\dagger)$: at every slot end, each queue of $I_T$ stores
exactly the $S$-content of its $\kappa$-partner, and $\kappa$-partners have
identical remaining arrival schedules.  Away from repairs,
$S_T$ copies $S$'s decisions through $\kappa$, which $(\dagger)$ makes
feasible.  At the repairs:
\begin{enumerate}\itemsep1pt
\item[(a)] \emph{both queues live at $t$}: each starts the slot empty
(its former $\kappa$-partner, live at $t$, ended slot $t{-}1$ empty in
normal form) and receives its full $B$ arrivals.  $S_T$ re-pairs $\kappa$
alongside $\omega$ and admits into each queue everything its new
partner stores at the end of slot $t$'s arrival phase (its
slot-$t$-end content plus the packet it transmits at $t$, if any: at
most $B$, since the partner alone stores it), then transmits as the
partner does; whether the partner ends the slot empty (arrivals
continuing) or holds a death-slot hoard, its content is rebuilt from
fresh arrivals, so nothing scarce is spent;
\item[(b)] \emph{neither queue receives after $t$}: $\kappa$ is not
updated, since from $t{+}1$ both only drain, so their stored contents
cannot be exchanged, and need not be: the online coupling's
re-pairing of $\omega$ at such mismatches is only an accounting device for the
level multiset and imposes nothing on the offline player; this is the
one point where $\omega$ and $\kappa$ diverge, and nothing requires them
to agree;
\item[(c)] \emph{one queue continues past $t$, the other is a corpse}:
this is Lemma~\ref{lem:exchange} verbatim, with
$H$ the $S$-content of $\kappa(C)$: $L$ receives past $t$ by the type of
the pair, so by $(\dagger)$'s schedule clause $\kappa(L)$'s arrivals
likewise continue past $t$, as the lemma requires; the surgery's timing
gives $L$ its final batch of $B$ arrivals at exactly the slot where the
hoard must be installed, the corpse's stale content is dropped rather
than moved, and $\kappa$ is crossed.  Note that $\kappa(C)$ is dead under $S$
whenever needed: $\kappa$ pairs live queues with live partners by
$(\dagger)$'s schedule clause, and $C$ was dead in $I_T$ with a dead
$\kappa$-partner from the moment both died.
\end{enumerate}
In every case the per-slot transmission counts of $S_T$ equal $S$'s,
and the end-of-phase occupancies agree, so $S_T$ is feasible on $I_T$
and value$(S_T)=$ value$(S)=\OPT(I)$.  Hence
$\OPT(I_T)\ge\OPT(I)$.  Combining the two halves and taking the
$R$-copy limit described in the theorem statement completes the proof
under Definition~\ref{def:switch}.
\end{proof}

\begin{remark}[scope]\hypertarget{rem:tiescope-target}{}\label{rem:tiescope}
\begin{enumerate}[label=(\roman*)]
\item $I_T$ is generally not interval-structured (a resurrected queue
has two arrival intervals), which is immaterial: Definition~\ref{def:switch}
ranges over arbitrary instances, and
interval structure is needed only for the fast exact evaluation of
\LateQD{} on the certified instance itself (Remark~\ref{rem:nqueues});
\item non-clairvoyance is what the
construction uses, but it does not use determinism; for a randomized tie rule the
instance adapts to the realized choices, so the guarantee is the
adaptive-adversary one (the input may depend on the coins already
flipped, as opposed to the oblivious convention where the instance is
fixed before any coin); on a \emph{fixed} instance a different rule can
do strictly better: the opposite high-index-drop rule, which leaves
residuals on lower indices and thus favors corpses,
attains a higher LQD value on the certified instance ($1{,}806{,}170$
against $T_0$'s $1{,}802{,}645$ at $k=400$, both computed by the
cross-check engine); so the adaptive form is the right general statement;
\item the computational certificate itself remains stated for $T_0$: the theorem
transfers the certified value, not the certificate.
\end{enumerate}
\end{remark}

\bibliographystyle{plainnat}
\bibliography{bib}

\end{document}